\documentclass[letterpaper, 10 pt, conference]{ieeeconf}

\usepackage[utf8]{inputenc}
\usepackage{amsmath,amssymb,amsfonts,mathtools}

\usepackage{amsthm}
\usepackage{xcolor}
\usepackage{graphicx}
\usepackage{float}
\usepackage{babel}
\usepackage{algorithm}
\usepackage{algorithmic}
\usepackage{booktabs}
\usepackage{caption}
\usepackage{subcaption}
\usepackage[hidelinks]{hyperref}

\newtheorem{theorem}{Theorem}
\newtheorem{lemma}{Lemma}
\newtheorem{corollary}{Corollary}
\newtheorem{proposition}{Proposition}
\newtheorem{definition}{Definition}
\newtheorem{remark}{Remark}
\newtheorem{assumption}{Assumption}

\newcommand{\cX}{\mathcal{X}}
\newcommand{\cU}{\mathcal{U}}
\newcommand{\cY}{\mathcal{Y}}
\newcommand{\cB}{\mathcal{B}}
\newcommand{\cK}{\mathcal{K}}
\newcommand{\cO}{\mathcal{O}}
\newcommand{\cH}{\mathcal{H}}
\newcommand{\cN}{\mathcal{N}}
\newcommand{\cS}{\mathcal{S}}
\newcommand{\cA}{\mathcal{A}}
\newcommand{\cZ}{\mathcal{Z}}
\newcommand{\DeltaS}{\Delta(\cS)}
\newcommand{\diag}{\operatorname{diag}}

\newcommand{\R}{\mathbb{R}}
\newcommand{\E}{\mathbb{E}}
\newcommand{\TV}{\mathrm{TV}}

\newcommand{\dW}{d_{W_1}}

\newcommand{\sigy}{\sigma_y}
\newcommand{\sigu}{\sigma_u}
\newcommand{\norm}[1]{\left\lVert #1 \right\rVert}
\newcommand{\diam}{\operatorname{diam}}
\newcommand{\supp}{\operatorname{supp}}

\title{Finite Reliability Representations: Noise-Calibrated Belief-Space
Covers for Reliable Decision-Making}

\author{
Hyung-Jin Yoon$^{\dagger}$ and
Hunmin Kim$^{\ddagger}$
\thanks{$^{\dagger}$H.-J. Yoon is with the
Department of Mechanical and Nuclear Engineering, Tennessee Technological
University, Cookeville, TN, USA.}%
\thanks{$^{\ddagger}$H. Kim is with the School of Engineering, Department
of Electrical and Computer Engineering, Mercer University, Macon, GA, USA.}%
\thanks{This work was supported by internal funding at Tennessee
Technological University.}
}

\begin{document}
\maketitle

\begin{abstract}
Physical sensing and actuation noise floors should inform how much belief
resolution a decision-making system can reliably use. We introduce Finite
Reliability Representations (FRR), a framework for covering belief spaces by
reliability cells: regions within which the optimal action-value function
\(Q^*(b,u)\) varies by at most a tolerance \(\varepsilon\), uniformly over
actions. The framework is formulated on beliefs rather than states and uses a
cover rather than an equivalence quotient, because approximate
decision-closeness is not transitive in general.
A central technical point is that noisy Bayesian updates should not be
treated as globally contractive on arbitrary beliefs. We therefore separate
three objects: the fixed-observation filter map, the predictive observation
law, and the controlled belief-transition kernel. For nonlinear
continuous-state systems, FRR is obtained under a reachable-set Lipschitz
modulus for the belief-transition kernel. For finite-state POMDPs, the same
construction becomes exact on the belief simplex: prediction is linear,
Bayesian correction is a normalized positive linear map, sensor noise enters
through observation-distribution distinguishability, and actuation
uncertainty enters through an action-execution channel.
Under the corresponding action-value Lipschitz condition, an FRR cover
supports a cell-constant policy whose suboptimality is bounded by
\(2\varepsilon/(1-\gamma)\). We also introduce reliability entropy, the
logarithm of the minimal number of reliability cells, as a measure of
certified decision-relevant belief complexity. The framework distinguishes
representation sufficiency from fundamental performance floors imposed by
sensing, process, and actuation noise. It applies to finite POMDPs,
linear-Gaussian filters, locally linearized nonlinear filters, and
particle-filter implementations through analytic or empirical certification
of reliability cells.
\end{abstract}

\section{Introduction}
\label{sec:intro}

Autonomous systems should not reason, plan, or control at a resolution
unsupported by the physical channels on which their closed-loop decisions
depend. Sensing noise limits what can be inferred from observations, while
actuation and process uncertainty limit what can be reliably executed. Yet
many methods for making partially observable decision problems tractable
introduce a representation tolerance chosen by the designer: a grid
resolution, a particle budget, a point-based approximation set, a
value-function error tolerance, or a behavioral similarity threshold. These
choices may be computationally reasonable, but they are not directly tied to
what the deployed sensing, process-noise, and action-execution channels can
actually support. A controller may therefore preserve distinctions that the
hardware cannot reliably use, or discard belief distinctions that remain
critical for safety and performance. The central question is not only how to
approximate a belief space, but what belief resolution is physically usable
for reliable decision-making.

Decision-making under partial observability is classically formalized as a
partially observable Markov decision process (POMDP), in which an agent
maintains a belief distribution over hidden states and selects actions to
maximize expected discounted reward
\cite{kaelbling1998planning,spaan2012pomdp}. Exact POMDP planning is
generally intractable for large or continuous state and observation spaces,
motivating belief discretization, value-function approximation, particle
representations, point-based value iteration, and reachability-guided
approximation methods
\cite{hauskrecht2000value,pineau2003point,kurniawati2008sarsop}. In
parallel, state aggregation and bisimulation metrics group states by
behavioral similarity, while information-state compression and
predictive-state representations reduce the dimensionality of the sufficient
statistic used for prediction or control
\cite{ferns2004metrics,ferns2011bisimulation,li2006unified,
littman2001predictive,singh2004predictive,roy2002exponential,
doshivelez2015bayesian}. These approaches provide powerful algorithmic
tools, but the representation criterion is typically algorithmic or
statistical rather than physical: the designer specifies what approximation
is acceptable.

This paper develops a theory of \emph{Finite Reliability Representations}
(FRR) for partially observed stochastic systems. The objective is to connect
belief-space representation to the physical noise floors of the deployed
system and to the decision sensitivity of the optimal action-value function.
The central object is a cover of the reachable belief space by reliability
cells. Within each cell, the optimal action-value function \(Q^*(b,u)\) is
allowed to vary by at most a prescribed decision tolerance \(\varepsilon\),
uniformly over actions. A policy may then choose actions using a
representative belief for each cell. If the decision diameter of every cell
is at most \(\varepsilon\), the resulting cell-constant policy is guaranteed
to be near-optimal relative to the optimal policy for the same noisy POMDP
model. Thus, the representation is finite not because the belief space is
artificially discretized, but because distinctions below the certified
decision tolerance are not needed for reliable action selection.

FRR differs from information-constrained decision-making. Information
bottleneck and bounded-rationality formulations typically constrain or
penalize the information a policy may extract from the state or observation,
thereby trading expected return against an information budget chosen by the
designer \cite{tishby1999information,genewein2015bounded}. FRR asks a
complementary question: how much belief information is physically supported
by the deployed sensing model, and how much of that information can affect
reliable action selection under the available action-execution model?
Throughout the paper, \(\sigma\) denotes a physical noise-floor parameter in
the planning model. In an observation-only setting, \(\sigma\) is the sensor
noise floor; in models with execution uncertainty, it may represent a
combined or vector-valued noise level such as sensing and actuation
uncertainty. The parameter \(\varepsilon\) is the decision tolerance used to
certify policy loss. Together, these parameters define a reliability level
\((\sigma,\varepsilon)\), rather than an arbitrary compression budget
detached from the hardware.

FRR is also distinct from entropy and information-gain criteria used in
active sensing, robotic exploration, and sensor placement
\cite{lindley1956measure,mackay1992information,krause2008near,
bourgault2002information,charrow2015information}. Entropy-based methods ask
where uncertainty can be reduced most. FRR asks whether the remaining
uncertainty can change any reliable action available to the deployed system.
A belief region may have high entropy but require no finer representation if
all beliefs in that region have nearly the same action values. Conversely,
a low-entropy region may require refinement if small belief changes cross a
decision-critical boundary. This distinction is why FRR uses decision
diameter, not posterior entropy alone, as the certification criterion.

A key technical point is that physical noise does not, in general, imply
global contraction of Bayesian filtering. The deterministic Bayesian update
\[
  \Phi_\sigma(b,u,y)
\]
maps a prior belief to a posterior belief after a particular observation
\(y\) has already occurred. This fixed-observation map need not be globally
contractive on the belief space. For example, if two beliefs have disjoint
or nearly singular support, reweighting by the same likelihood cannot
generally force their posteriors to contract. Therefore, the correct
smoothness object for dynamic programming is not the fixed-observation
Bayesian map alone, but the controlled belief-transition kernel
\[
  \cK_\sigma(\cdot\mid b,u),
\]
which describes the distribution of the next belief before the observation
is known and is the object that appears in the Bellman operator.

This distinction leads to the technical structure of the paper. First, we
define the predictive observation law
\[
  \cO_\sigma(\cdot\mid b,u),
\]
and use total variation distance between predictive observation laws to
quantify how distinguishable two beliefs are through the physical sensor
model. Sensor degradation is treated as a data-processing operation: adding
independent sensor noise cannot increase predictive observation
distinguishability. This provides a noise-calibrated notion of observational
resolution without requiring a global contraction property of the Bayesian
update.

Second, we incorporate actuation uncertainty through the controlled
transition model. In finite POMDPs this can be written explicitly as an
action-execution channel
\[
  G_{\sigma_u}(\tilde a\mid a),
\]
where \(a\) is the commanded action and \(\tilde a\) is the action actually
executed. This channel averages rewards and transitions over executed
actions. As execution uncertainty increases, distinct commanded actions may
become less distinguishable in their effects, which can flatten
action-value differences and permit coarser reliability cells. This
mechanism is different from sensor degradation: sensor noise limits what can
be inferred, while action-execution uncertainty limits what can be reliably
done.

Third, we impose a reachable-set regularity condition on the belief kernel.
Specifically, on the reachable belief set \(\cB_{\mathrm{reach}}\), we
assume that
\[
  W_1^{\cB}\!\left(
  \cK_\sigma(\cdot\mid b,u),
  \cK_\sigma(\cdot\mid b',u)
  \right)
  \le
  \beta_\sigma \dW(b,b')
\]
uniformly over admissible controls, where \(W_1^{\cB}\) is the
Wasserstein-1 distance over probability measures on belief space and
\(\dW\) is the Wasserstein-1 distance between beliefs. This condition is a
regularity assumption on the nominal planning model, not a claim that the
physical system is exactly known or that Bayesian filtering is globally
contractive. Under the discounted sensitivity condition
\[
  \gamma\beta_\sigma<1,
\]
we prove that the optimal value function and the optimal action-value
function are Lipschitz on the reachable belief set. This gives a direct
route from belief-kernel smoothness to bounded decision diameter.

The resulting FRR construction is simple. If a cover of
\(\cB_{\mathrm{reach}}\) has cells whose \(\dW\)-diameter is at most
\[
  \frac{\varepsilon}{L_V(\sigma)},
\]
where \(L_V(\sigma)\) is the induced value Lipschitz constant, then every
cell has decision diameter at most \(\varepsilon\). Such a cover is an FRR
at level \((\sigma,\varepsilon)\). Conversely, the policy sufficiency
theorem does not depend on how the cover was obtained: any FRR cover with
decision diameter at most \(\varepsilon\) supports a cell-constant policy
whose suboptimality is bounded by
\[
  \frac{2\varepsilon}{1-\gamma}.
\]
The cover may be obtained analytically, from a finite-POMDP simplex
construction, from a linear-Gaussian approximation, from particle-filter
rollouts with appropriate statistical margins, or by direct empirical
certification of within-cell \(Q^*\)-variation.

The framework sits upstream of robust MPC, stochastic MPC, safe adaptive
control, and control-barrier-function-based safety filters
\cite{mayne2016robust,mesbah2016stochastic,ames2019control}. Those methods
answer how to act once a state or belief representation has already been
chosen and uncertainty has been modeled. FRR addresses the prior
representation question: what belief resolution does the controller need in
the first place, given the physical noise floor and a target decision
tolerance? In this sense, FRR is not a competing controller. It supplies a
reliability-cell abstraction that can be combined with receding-horizon and
safety-critical execution methods.

The paper makes four main contributions. First, we separate predictive
observation distinguishability, action-execution uncertainty, and
belief-kernel smoothness as distinct objects in partially observed control.
We prove that additional independent sensor noise cannot increase
predictive observation distinguishability and show how actuation uncertainty
enters the planning model through an execution channel. Second, we identify
the controlled belief-transition kernel as the correct smoothness object for
dynamic programming and prove Lipschitz bounds for \(V^*\) and \(Q^*\) under
a reachable belief-kernel Lipschitz condition. Third, we define FRR covers
through bounded decision diameter and prove both a constructive cover
theorem and a policy sufficiency theorem for cell-constant policies. Fourth,
we introduce reliability entropy,
\[
  \cH_{\mathrm{FRR}}(\sigma,\varepsilon)
  =
  \log N_{\mathrm{FRR}}(\sigma,\varepsilon),
\]
as a rate-distortion-like measure of the minimum number of
decision-relevant belief messages required at tolerance \(\varepsilon\).
This quantity measures certified cell identity available to cell-constant
policies and provides a way to discuss usable policy capacity under physical
noise floors.

The theory also clarifies what should not be claimed. Although sensor
degradation reduces predictive observation distinguishability, sensor noise
alone does not universally imply that the number of required FRR cells
decreases. The reliability-cell count depends on information acquisition,
belief-kernel stability, executable action resolution, and value
sensitivity. In some regimes, degraded sensing may make belief uncertainty
more decision-critical and increase the number of cells required to certify
a fixed decision tolerance. In other regimes, actuation uncertainty may
collapse distinctions among commanded controls and reduce the
decision-relevant variation of \(Q^*\), allowing coarser cells. The
framework therefore separates observation distinguishability from decision
diameter instead of imposing a universal monotonicity claim.

Finally, we record complementary noise-floor lower bounds. The FRR guarantee
controls the additional loss caused by replacing the full belief space with
finitely many reliability cells, relative to the optimal policy for the same
noisy POMDP. It does not eliminate the physical consequences of noise itself.
We show through sensing, process-noise, and actuation lower-bound examples
that even a policy with unlimited representation capacity may be separated
from an ideal noiseless or disturbance-free oracle by a nonzero performance
gap. Thus, FRR should be interpreted as a representation theory under
physical uncertainty, not as a mechanism for removing irreducible sensing or
actuation limits.

The paper is organized as follows. Section~\ref{sec:setup} establishes the
partially observed stochastic system, the belief dynamics, the predictive
observation law, and the controlled belief-transition kernel.
Section~\ref{sec:distinguishability} develops noise-calibrated predictive
observation distinguishability and its data-processing property.
Section~\ref{sec:finite_pomdp} gives the finite-POMDP specialization,
including exact belief-simplex formulas, observation degradation, and
action-execution uncertainty. Section~\ref{sec:frr} defines FRR, proves
value and action-value Lipschitz bounds, constructs FRR covers, and
establishes policy sufficiency. Section~\ref{sec:entropy} introduces
reliability entropy, gives metric-entropy bounds, and explains usable
policy capacity and numerical certification. Section~\ref{sec:numerical}
presents exact finite-POMDP certification and sampled particle-filter
certification examples. Section~\ref{sec:limits} presents complementary
sensing, process, and actuation noise-floor lower bounds.
Section~\ref{sec:conclusion} concludes.

%============================================================================
\section{Problem Setup}
\label{sec:setup}
%============================================================================

\subsection{Nonlinear partially observed system}

Consider the discrete-time partially observed stochastic system
\begin{align}
  x_{t+1} &= f(x_t,u_t,w_t), \label{eq:dynamics}\\
  y_t &= h(x_t)+v_t, \label{eq:observation}
\end{align}
where $x_t\in\cX\subset\R^n$, $u_t\in\cU\subset\R^m$, and
$y_t\in\cY\subseteq\R^p$. The process noise is $w_t\sim \mu_w$ and the
observation noise is
\[
  v_t\sim \cN(0,\sigma^2 I_p),
\]
where $\sigma>0$ is the sensor noise floor.

The input \(u_t\) should be interpreted as a commanded control. Actuation or
execution uncertainty can be included in \(w_t\), or equivalently in the
controlled transition kernel induced by the pair \((u_t,w_t)\). Thus the
same notation covers both process disturbances and actuator noise; in the
observation-only portions of the analysis, \(\sigma\) denotes the sensor noise
floor.

\begin{assumption}[Compact reachable state set]
\label{ass:compact}
The set $\cX$ is compact with diameter $D_x<\infty$, and the dynamics map
$\cX$ into itself for all admissible controls and process-noise values in
the support of $\mu_w$:
\[
  f(\cX\times\cU\times\supp(\mu_w))\subseteq \cX .
\]
\end{assumption}

\begin{assumption}[Regularity of dynamics, observation, and reward]
\label{ass:regular}
The dynamics $f$ are Lipschitz in $x$ uniformly over $u$ and $w$, with
constant $L_f$. The observation map $h$ is Lipschitz with constant $L_h$.
The reward $r:\cX\times\cU\to\R$ is bounded and Lipschitz in $x$ uniformly
over $u$, with constant $L_r$:
\[
  |r(x,u)-r(x',u)|\le L_r\norm{x-x'}.
\]
\end{assumption}

\begin{assumption}[Compact controls and Bellman maximizers]
\label{ass:maximizers}
The admissible control set \(\cU\) is compact. Moreover, the reward and the
nominal belief-transition model are regular enough in \(u\) on
\(\cB_{\mathrm{reach}}\) that the Bellman maximizers used below exist. In
particular, for every \(b\in\cB_{\mathrm{reach}}\), the maps
\(u\mapsto Q^*(b,u)\) are upper semicontinuous, so that
\(\arg\max_{u\in\cU}Q^*(b,u)\) is nonempty.
\end{assumption}

The compactness assumption is not intended to replace stability analysis.
For physical systems with unbounded state spaces, $\cX$ should be read as
a compact control-invariant operating envelope, for example a verified
sublevel set or a bounded experimental domain.

\subsection{Belief space and Bayesian filter}

Let $\cB(\cX)$ denote the set of Borel probability measures on $\cX$,
equipped with the Wasserstein-1 metric
\[
  \dW(b,b')=
  \sup_{\mathrm{Lip}(\phi)\le 1}
  \left|\int_\cX \phi(x)\,b(dx)-\int_\cX\phi(x)\,b'(dx)\right|.
\]
The belief state is
\[
  b_t(\cdot)=\Pr(x_t\in\cdot\mid y_{0:t},u_{0:t-1}).
\]

For a belief $b$ and control $u$, the predicted belief is
\[
  b^-(A)=\int_\cX p(A\mid x,u)\,b(dx),
\]
where $p(\cdot\mid x,u)$ is the transition kernel induced by
$f(x,u,w)$ and $w\sim\mu_w$. Given an observation $y$, the Bayesian update
is
\begin{equation}
\begin{aligned}
  \frac{d\Phi_\sigma(b,u,y)}{db^-}(x)
  &=
  \frac{\ell_y(x)}
       {\int_\cX \ell_y(z)\,b^-(dz)}, \\
  \ell_y(x)&=\cN(y;h(x),\sigma^2 I_p).  
\end{aligned}
\label{eq:bayes}
\end{equation}

The map $\Phi_\sigma(b,u,y)$ is the deterministic filter update for a fixed
observation. Importantly, it is \emph{not} globally contractive on
$\cB(\cX)$ under Assumptions~\ref{ass:compact}--\ref{ass:regular}; the
present analysis does not require such a claim.

\subsection{Predictive observation law and belief-transition kernel}

For each $(b,u)$, define the predictive observation law
\[
  \cO_\sigma(\cdot\mid b,u)
  :=
  \int_\cX \cN(\cdot;h(x),\sigma^2 I_p)\,b^-(dx).
\]
This is the distribution of the next observation before it is received.

The controlled belief-transition kernel is the Markov kernel on
$\cB(\cX)$ defined by
\[
  \cK_\sigma(A\mid b,u)
  :=
  \int_{\cY}
  \mathbf 1_A\bigl(\Phi_\sigma(b,u,y)\bigr)
  \,\cO_\sigma(dy\mid b,u).
\]
Thus, $\cK_\sigma(\cdot\mid b,u)$ describes the random next belief after
the system generates an observation and the filter updates. The Bellman
operator depends on $\cK_\sigma$, not only on the fixed-observation map
$\Phi_\sigma$.

\subsection{Reachable belief set and value functions}

Let $\cB_{\mathrm{reach}}\subseteq\cB(\cX)$ denote a compact set of beliefs
that contains all beliefs reachable from the initial belief family under
admissible policies and observations of interest. A policy is a measurable
map $\pi:\cB_{\mathrm{reach}}\to\cU$. The expected reward under belief $b$
is
\[
  \rho(b,u)=\int_\cX r(x,u)\,b(dx).
\]
The value of a policy $\pi$ is
\[
  V^\pi(b)=
  \E_\pi\!\left[
  \sum_{t=0}^{\infty}\gamma^t\rho(b_t,u_t)
  \,\middle|\, b_0=b
  \right],
\]
where $\gamma\in(0,1)$. The optimal value satisfies
\begin{equation}
  V^*(b)=
  \sup_{u\in\cU}
  \left\{
  \rho(b,u)+\gamma
  \int_{\cB_{\mathrm{reach}}} V^*(b_+)\,
  \cK_\sigma(db_+\mid b,u)
  \right\}.
  \label{eq:bellman_kernel}
\end{equation}
The optimal action-value function is
\begin{equation}
  Q^*(b,u)=
  \rho(b,u)+\gamma
  \int_{\cB_{\mathrm{reach}}} V^*(b_+)\,
  \cK_\sigma(db_+\mid b,u).
  \label{eq:qstar}
\end{equation}

%============================================================================
\section{Noise-Calibrated Belief Distinguishability}
\label{sec:distinguishability}
%============================================================================

Sensor noise determines not only the variance of an estimate, but also the
resolution at which different beliefs can be distinguished through future
measurements. This section introduces a distinguishability-level object that
captures this idea without asserting false contraction of the Bayesian update.

\begin{definition}[Predictive observation distinguishability]
\label{def:obs_metric}
For fixed $\sigma$ and $u$, define the predictive observation
distinguishability between two beliefs $b,b'\in\cB(\cX)$ by
\[
  d_{\mathrm{obs},\sigma}^u(b,b')
  :=
  \TV\!\left(
    \cO_\sigma(\cdot\mid b,u),
    \cO_\sigma(\cdot\mid b',u)
  \right),
\]
where $\TV(P,Q):=\sup_A |P(A)-Q(A)|$ is total variation distance
$(\TV(P,Q)=\frac12\sum_{y\in\cY}|P(y)-Q(y)|$ when $\cY$ is finite).
The control-uniform version is
\[
  d_{\mathrm{obs},\sigma}(b,b')
  :=
  \sup_{u\in\cU} d_{\mathrm{obs},\sigma}^u(b,b').
\]
\end{definition}

This quantity measures how distinguishable two beliefs are through the
sensor channel after one prediction step. If two beliefs induce nearly the
same predictive observation law, no decision rule based only on the next
observation can reliably separate them.

\begin{proposition}[Sensor degradation cannot increase distinguishability]
\label{prop:data_processing}
Suppose the observation model with noise level $\sigma_2$ is obtained from
the observation model with noise level $\sigma_1$ by adding independent
Gaussian noise:
\[
  y_{\sigma_2} = y_{\sigma_1} + \eta,
  \qquad
  \eta \sim \mathcal N(0,\tau^2 I),
  \qquad
  \sigma_2^2=\sigma_1^2+\tau^2 .
\]
For this continuous Gaussian degradation result, regard the predictive
observation laws as probability measures on $\mathbb R^p$. Then, for all
$b,b'\in\cB(\cX)$ and $u\in\cU$,
\[
  d_{\mathrm{obs},\sigma_2}^u(b,b')
  \le
  d_{\mathrm{obs},\sigma_1}^u(b,b').
\]
\end{proposition}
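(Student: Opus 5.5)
The plan is to recognize the $\sigma_2$ predictive observation law as the push-forward of the $\sigma_1$ law through a fixed Markov kernel, the Gaussian convolution channel, and then apply the data-processing inequality for total variation.

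\textbf{Step 1: factor the $\sigma_2$ law.} For fixed $b$ and $u$, the Gaussian identity $\cN(\cdot;m,\sigma_1^2 I)*\cN(\cdot;0,\tau^2 I)=\cN(\cdot;m,\sigma_2^2 I)$ holds for every mean $m=h(x)$. Integrating against $b^-(dx)$ and using Fubini, I will show
\[
  \cO_{\sigma_2}(\cdot\mid b,u)=\cO_{\sigma_1}(\cdot\mid b,u)*\cN(0,\tau^2 I)
  =\int_{\R^p} M(\cdot\mid y)\,\cO_{\sigma_1}(dy\mid b,u),
\]
where $M(A\mid y):=\cN(A;y,\tau^2 I)$ is a Markov kernel on $\R^p$. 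The key feature is that $M$ does not depend on $b$, $b'$, or $u$. The same identity therefore holds for $b'$ with the same kernel $M$.

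\textbf{Step 2: data processing for TV.} Write $P,P'$ for the two $\sigma_1$ laws and take any Borel set $A$. Then
\[
  PM(A)-P'M(A)=\int M(A\mid y)\,(P-P')(dy).
\]
Since $0\le M(A\mid y)\le 1$, this is bounded in absolute value by $\sup_{0\le g\le 1}|\int g\,d(P-P')|$. That supremum equals $\sup_B |P(B)-P'(B)|=\TV(P,P')$. To see this, take the Hahn decomposition of $P-P'$ and note that $(P-P')(\R^p)=0$. Taking the supremum over $A$ gives $d^u_{\mathrm{obs},\sigma_2}(b,b')\le d^u_{\mathrm{obs},\sigma_1}(b,b')$.

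\textbf{Main obstacle.} The only step that needs care is Step 1, namely the measure-theoretic justification. I will check that $(x,y)\mapsto \cN(A;y,\tau^2I)$ and $x\mapsto\cN(A;h(x),\sigma_1^2I)$ are jointly measurable. This follows from continuity of $h$ (Assumption~\ref{ass:regular}) and continuity of Gaussian densities. Tonelli then licenses the exchange of integrals, and the convolution identity is verified via characteristic functions. No contraction property of $\Phi_\sigma$ is used, consistent with the paper's emphasis.
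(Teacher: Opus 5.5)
Your proposal is correct and matches the paper's proof essentially step for step: the paper also writes the $\sigma_2$ predictive laws as $P_1K_\tau$ and $Q_1K_\tau$ for the additive-noise kernel $K_\tau$, and bounds the total variation through the variational characterization $\sup_{0\le f\le 1}\left|\int f\,d(P_1-Q_1)\right|$. Your additional checks fill in details the paper takes for granted: you derive the factorization from the Gaussian convolution identity and Fubini, and you prove the $\sup_{0\le f\le 1}$ equality via the Hahn decomposition.
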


\begin{proof}
Let
\[
  P_1=\cO_{\sigma_1}(\cdot\mid b,u),
  \qquad
  Q_1=\cO_{\sigma_1}(\cdot\mid b',u)
\]
be the predictive observation distributions at noise level $\sigma_1$.
Since the sensor with noise level $\sigma_2$ is obtained by adding
independent Gaussian noise to the sensor with noise level $\sigma_1$, define
the additive-noise Markov kernel
\[
  K_\tau(A\mid y)
  :=
  \int_A \mathcal N(y';y,\tau^2 I)\,dy'
\]
for every measurable set \(A\subseteq\mathbb R^p\). Thus \(K_\tau(A\mid y)\)
is the probability that the degraded observation falls in \(A\), given the
lower-noise observation \(y\).

The predictive observation distributions at noise level \(\sigma_2\) are
therefore
\[
\begin{aligned}
  P_2(A)&=\int K_\tau(A\mid y)\,P_1(dy), \\
  Q_2(A)&=\int K_\tau(A\mid y)\,Q_1(dy).
\end{aligned}
\]
Equivalently, \(P_2=P_1K_\tau\) and \(Q_2=Q_1K_\tau\).

Total variation distance contracts under a common Markov kernel. Indeed,
using the variational characterization of total variation,
\[
\begin{aligned}
  \TV(P_2,Q_2)
  &=
  \sup_A
  \left|
  \int K_\tau(A\mid y)\,(P_1-Q_1)(dy)
  \right| \\
  &\le
  \sup_{0\le f\le 1}
  \left|
  \int f(y)\,(P_1-Q_1)(dy)
  \right| \\
  &=
  \TV(P_1,Q_1).
\end{aligned}
\]
Therefore,
\[
  d_{\mathrm{obs},\sigma_2}^u(b,b')
  =
  \TV(P_2,Q_2)
  \le
  \TV(P_1,Q_1)
  =
  d_{\mathrm{obs},\sigma_1}^u(b,b'),
\]
as claimed.
\end{proof}

Proposition~\ref{prop:data_processing} is the precise form of the intuition
that a noisier sensor cannot create additional observational
distinguishability. It does not imply that the Bayesian filter is globally
contractive, nor that all belief differences become decision-irrelevant. It
only says that the observation channel itself becomes less capable of
distinguishing beliefs as additional independent noise is appended.

\begin{remark}[Linear-Gaussian specialization]
\label{rem:linear_gaussian}
In the linear-Gaussian case
\[
  x_{t+1}=Ax_t+Bu_t+w_t,\qquad y_t=Cx_t+v_t,
\]
with Gaussian beliefs $b=\cN(\mu,\Sigma_\sigma)$ and
$b'=\cN(\mu',\Sigma_\sigma)$, the predictive observation laws are Gaussian
with common covariance
\[
  S_\sigma=C\Sigma_\sigma C^\top+\sigma^2 I.
\]
The symmetrized KL divergence reduces to the Mahalanobis quantity
\[
  (\mu-\mu')^\top C^\top S_\sigma^{-1}C(\mu-\mu').
\]
Thus, in this special case, the sensor-induced distinguishability quantity
is explicit and decreases as the observation covariance increases.
\end{remark}

%============================================================================
\section{Finite-State POMDPs as an Exact FRR Model}
\label{sec:finite_pomdp}
%============================================================================

The preceding definitions were stated for a general belief space. A
finite-state POMDP gives the cleanest exact instance of the same framework.
In this setting, the physical state space is finite, but the belief space is
still continuous: beliefs are probability vectors on the simplex
$\Delta^{n-1}$. The advantage is that this continuous belief space is
finite-dimensional. The prediction step is linear, and the Bayesian
correction step is a normalized positive linear map. Thus, the finite POMDP
case preserves the key nonlinear feature of filtering while avoiding the
functional-analytic complications associated with probability measures on
continuous state spaces.

\subsection{Belief simplex, prediction, and Bayes correction}

Let
\[
  \cS=\{1,\dots,n\},\qquad
  \cA=\{1,\dots,m\},\qquad
  \cZ=\{1,\dots,k\}
\]
be finite state, action, and observation sets. A belief is a vector
\[
  b\in\DeltaS:=\left\{b\in\R^n_{\ge 0}:\mathbf 1^\top b=1\right\}.
\]
For action $a\in\cA$, let $P_a\in\R^{n\times n}$ be the transition matrix,
with $[P_a]_{ij}=\Pr(s_{t+1}=j\mid s_t=i,a_t=a)$. The predicted belief is
linear:
\begin{equation}
  \bar b_a=P_a^\top b .
  \label{eq:finite_prediction}
\end{equation}
Let $O_{a,\sigma}(z\mid s')$ denote the observation model under sensor noise
level $\sigma$, and define
\[
  D_{a,z,\sigma}
  :=\diag\bigl(O_{a,\sigma}(z\mid 1),\dots,O_{a,\sigma}(z\mid n)\bigr).
\]
The probability of observing $z$ from belief $b$ after action $a$ is
\begin{equation}
  p_\sigma(z\mid b,a)
  =
  \mathbf 1^\top D_{a,z,\sigma}P_a^\top b .
  \label{eq:finite_obs_prob}
\end{equation}
Whenever $p_\sigma(z\mid b,a)>0$, the Bayesian update is
\begin{equation}
  \tau_\sigma(b,a,z)
  =
  \frac{D_{a,z,\sigma}P_a^\top b}
       {\mathbf 1^\top D_{a,z,\sigma}P_a^\top b}.
  \label{eq:finite_bayes}
\end{equation}
Equation~\eqref{eq:finite_bayes} is not linear in $b$ because of the
normalization denominator. It is a normalized linear, or projective, map on
the simplex. This is the finite-state analogue of the continuous Bayesian
filter map $\Phi_\sigma(b,u,y)$.

The corresponding controlled belief-transition kernel is the finite sum
\begin{equation}
  \cK_\sigma(A\mid b,a)
  =
  \sum_{z\in\cZ}
  p_\sigma(z\mid b,a)\,
  \mathbf 1_A\bigl(\tau_\sigma(b,a,z)\bigr),
  \label{eq:finite_belief_kernel}
\end{equation}
for every measurable $A\subseteq\DeltaS$. This is exactly the same object as
$\cK_\sigma$ in the nonlinear theory, specialized to a finite observation
set.

\begin{remark}[Finite MDPs versus finite POMDPs]
For a fully observed finite MDP, probability distributions evolve linearly
under a fixed action: $b_{t+1}=P_a^\top b_t$. For a finite POMDP, the
prediction step remains linear, but the Bayesian correction step
\eqref{eq:finite_bayes} is normalized and therefore nonlinear. This is why
tabular MDPs are linear on distributions, whereas tabular POMDP belief
dynamics are nonlinear on the simplex.
\end{remark}

\subsection{Observation distinguishability on the simplex}

The predictive observation distribution is the vector
\[
  p_\sigma(\cdot\mid b,a)
  =
  O_{a,\sigma}^\top P_a^\top b,
\]
where $[O_{a,\sigma}]_{s'z}=O_{a,\sigma}(z\mid s')$. Therefore the finite
POMDP version of Definition~\ref{def:obs_metric} becomes
\begin{equation}
  d_{\mathrm{obs},\sigma}(b,b')
  :=
  \max_{a\in\cA}
  \frac12
  \left\|O_{a,\sigma}^\top P_a^\top(b-b')\right\|_1 .
  \label{eq:finite_obs_metric}
\end{equation}
This distinguishability pseudometric is explicitly sensor-dependent. If two
beliefs induce nearly the same predictive observation distribution, then no
one-step decision rule using the next observation can reliably tell them
apart.

\begin{theorem}[Sensor degradation contracts observation distinguishability]
\label{thm:finite_sensor_data_processing}
Suppose the noisier observation channel is obtained by passing the less noisy
observation through a stochastic degradation matrix $G$. That is, for each
action $a\in\cA$,
\[
  O_{a,\sigma_2}=O_{a,\sigma_1}G,
\]
where
\[
  G_{z\tilde z}=\Pr(\tilde z\mid z),
  \,\,
  G_{z\tilde z}\ge 0,
  \,\,
  \sum_{\tilde z\in\cZ}G_{z\tilde z}=1
  \,\, \text{for every } z\in\cZ.
\]
Equivalently,
\[
  O_{a,\sigma_2}(\tilde z\mid s)
  =
  \sum_{z\in\cZ}
  O_{a,\sigma_1}(z\mid s)G_{z\tilde z}.
\]
Then
\[
  d_{\mathrm{obs},\sigma_2}(b,b')
  \le
  d_{\mathrm{obs},\sigma_1}(b,b')
  \qquad
  \forall b,b'\in\DeltaS .
\]
\end{theorem}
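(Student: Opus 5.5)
The plan is to reduce the claim to the elementary fact that a row-stochastic matrix is a contraction in \(\ell_1\) on signed vectors. Then take the maximum over actions, exactly as in the proof of Proposition~\ref{prop:data_processing}, but now in finite dimensions.

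Fix \(b,b'\in\DeltaS\) and an action \(a\in\cA\). Write the difference of the predictive observation distributions at the lower noise level as
\[
  \delta_1 := O_{a,\sigma_1}^\top P_a^\top (b-b') \in \R^k .
\]
This is a signed vector with \(\mathbf 1^\top\delta_1=0\), since both \(O_{a,\sigma_1}^\top P_a^\top b\) and \(O_{a,\sigma_1}^\top P_a^\top b'\) are probability vectors. Using \(O_{a,\sigma_2}=O_{a,\sigma_1}G\), we get \(O_{a,\sigma_2}^\top = G^\top O_{a,\sigma_1}^\top\). Hence the difference at the higher noise level is
\[
  \delta_2 = G^\top \delta_1 .
\]
The degraded predictive law is therefore obtained from the original one by pushing it through the channel \(G\).

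The key step is the inequality \(\|G^\top\delta\|_1\le\|\delta\|_1\) for every \(\delta\in\R^k\). Componentwise, \((G^\top\delta)_{\tilde z}=\sum_z G_{z\tilde z}\delta_z\). By the triangle inequality and nonnegativity of \(G\),
\[
  \sum_{\tilde z}\Bigl|\sum_z G_{z\tilde z}\delta_z\Bigr|
  \le \sum_z |\delta_z| \sum_{\tilde z} G_{z\tilde z}
  = \sum_z|\delta_z| ,
\]
where the last equality uses that the rows of \(G\) sum to one. (The zero-sum property of \(\delta_1\) is not even needed here.) Multiplying by \(\tfrac12\) gives
\[
  \tfrac12\|O_{a,\sigma_2}^\top P_a^\top(b-b')\|_1
  \le \tfrac12\|O_{a,\sigma_1}^\top P_a^\top(b-b')\|_1
\]
for each fixed \(a\).

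Finally, take the maximum over the finite set \(\cA\) on both sides of the per-action inequality. By \eqref{eq:finite_obs_metric} this yields \(d_{\mathrm{obs},\sigma_2}(b,b')\le d_{\mathrm{obs},\sigma_1}(b,b')\). Since the same \(G\) works for every action, no uniformity issue arises.

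There is essentially no hard step. The only point requiring care is getting the transpose bookkeeping right, so that the channel acts as \(G^\top\) on the observation-distribution vector rather than as \(G\). Once the orientation is fixed, the result is the finite data-processing inequality for total variation.
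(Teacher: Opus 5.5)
Your proof is correct and follows the paper's argument essentially step for step. The paper also fixes an action and writes the degraded predictive law as $q_2(\tilde z)=\sum_{z}q_1(z)G_{z\tilde z}$ (your $\delta_2=G^\top\delta_1$); it then bounds the $\ell_1$ norm using the triangle inequality, the nonnegativity of $G$, and its unit row sums, and finally takes the maximum over $a\in\cA$.
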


\begin{proof}
Fix an action $a\in\cA$. Let
\[
  \bar b_a=P_a^\top b,
  \qquad
  \bar b'_a=P_a^\top b'
\]
be the predicted beliefs after applying action $a$. Define the predictive
observation distributions under the less noisy channel by
\[
\begin{aligned}
  q_1(z)
  &:={}
  \sum_{s\in\cS}
  O_{a,\sigma_1}(z\mid s)\bar b_a(s), \\
  q'_1(z)
  &:={}
  \sum_{s\in\cS}
  O_{a,\sigma_1}(z\mid s)\bar b'_a(s).
\end{aligned}
\]
Since $O_{a,\sigma_2}=O_{a,\sigma_1}G$, the predictive observation
distributions under the noisier channel satisfy
\[
  q_2(\tilde z)
  =
  \sum_{z\in\cZ}q_1(z)G_{z\tilde z},
  \qquad
  q'_2(\tilde z)
  =
  \sum_{z\in\cZ}q'_1(z)G_{z\tilde z}.
\]
Therefore,
\begin{align*}
  \TV(q_2,q'_2)
  &=
  \frac12
  \sum_{\tilde z\in\cZ}
  \left|q_2(\tilde z)-q'_2(\tilde z)\right| \\
  &=
  \frac12
  \sum_{\tilde z\in\cZ}
  \left|
    \sum_{z\in\cZ}
    \bigl(q_1(z)-q'_1(z)\bigr)G_{z\tilde z}
  \right| \\
  &\le
  \frac12
  \sum_{\tilde z\in\cZ}
  \sum_{z\in\cZ}
  \left|q_1(z)-q'_1(z)\right|G_{z\tilde z} \\
  &=
  \frac12
  \sum_{z\in\cZ}
  \left|q_1(z)-q'_1(z)\right|
  \sum_{\tilde z\in\cZ}G_{z\tilde z} \\
  &=
  \frac12
  \sum_{z\in\cZ}
  \left|q_1(z)-q'_1(z)\right|
  =
  \TV(q_1,q'_1).
\end{align*}
The inequality uses the triangle inequality and the nonnegativity of
$G_{z\tilde z}$; the penultimate equality uses
$\sum_{\tilde z}G_{z\tilde z}=1$. Hence, for this fixed action,
\[
  \frac12
  \left\|O_{a,\sigma_2}^\top P_a^\top(b-b')\right\|_1
  \le
  \frac12
  \left\|O_{a,\sigma_1}^\top P_a^\top(b-b')\right\|_1 .
\]
Taking the maximum over $a\in\cA$ gives the claimed contraction of
$d_{\mathrm{obs},\sigma}$.
\end{proof}

\begin{proposition}[Explicit sensor degradation scaling]
\label{prop:finite_mixture_degradation}
Consider the one-parameter finite-observation model
\[
  O_{a,\lambda}(z\mid s)
  =
  (1-\lambda)O_{a,0}(z\mid s)+\lambda q_a(z),
  \qquad \lambda\in[0,1],
\]
where $q_a$ is a state-independent observation distribution. Then, for each
fixed action $a$,
\[
  d_{\mathrm{obs},\lambda}^a(b,b')
  =
  (1-\lambda)d_{\mathrm{obs},0}^a(b,b').
\]
Consequently, since the same $\lambda$ is used for all actions,
\[
  d_{\mathrm{obs},\lambda}(b,b')
  =
  (1-\lambda)d_{\mathrm{obs},0}(b,b').
\]
\end{proposition}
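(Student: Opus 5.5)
The plan is to compute the predictive observation distribution under the mixture model directly and observe that the state-independent component cancels in the difference. Fix an action $a$ and beliefs $b,b'\in\DeltaS$, and write $\bar b_a=P_a^\top b$ and $\bar b'_a=P_a^\top b'$. Both predicted beliefs lie in $\DeltaS$, so $\mathbf 1^\top\bar b_a=\mathbf 1^\top\bar b'_a=1$. Here $d^a_{\mathrm{obs},\lambda}$ is read as the fixed-action quantity inside the maximum in \eqref{eq:finite_obs_metric}, namely $\frac12\|O_{a,\lambda}^\top P_a^\top(b-b')\|_1$.

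The first step is to express the predictive law at level $\lambda$. In matrix form $O_{a,\lambda}=(1-\lambda)O_{a,0}+\lambda\,\mathbf 1 q_a^\top$, since $q_a$ does not depend on $s$. Hence
\[
  O_{a,\lambda}^\top\bar b_a=(1-\lambda)O_{a,0}^\top\bar b_a+\lambda\,q_a(\mathbf 1^\top\bar b_a)=(1-\lambda)O_{a,0}^\top\bar b_a+\lambda q_a ,
\]
and the same identity holds for $\bar b'_a$. Subtracting the two, the $\lambda q_a$ terms cancel exactly. This cancellation is the only real point of the argument: it uses $\mathbf 1^\top(\bar b_a-\bar b'_a)=0$, which holds because $P_a$ is row-stochastic and therefore preserves total mass. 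The result is
\[
  O_{a,\lambda}^\top P_a^\top(b-b')=(1-\lambda)\,O_{a,0}^\top P_a^\top(b-b').
\]

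The second step takes $\frac12\|\cdot\|_1$ of both sides. Since $1-\lambda\ge0$ for $\lambda\in[0,1]$, the scalar factor comes out of the norm without an absolute value. This gives the per-action identity $d^a_{\mathrm{obs},\lambda}(b,b')=(1-\lambda)d^a_{\mathrm{obs},0}(b,b')$.

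For the control-uniform statement, I would take the maximum over $a\in\cA$ in \eqref{eq:finite_obs_metric}. Because the nonnegative factor $1-\lambda$ is the same for every action, it commutes with the maximum, which yields $d_{\mathrm{obs},\lambda}=(1-\lambda)d_{\mathrm{obs},0}$. There is no genuine obstacle in this argument. The only care needed is to invoke mass conservation, which makes the $q_a$ term disappear, and the sign of $1-\lambda$. As a consistency check, the result is an instance of Theorem~\ref{thm:finite_sensor_data_processing} with equality, but the direct computation above is what gives the exact scaling.
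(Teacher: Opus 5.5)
Your proof is correct and follows the paper's argument: expand the predictive law under the mixture, use unit mass of the predicted belief to cancel the $\lambda q_a$ term, pull the factor $1-\lambda\ge 0$ out of the $\ell_1$ norm, and then take the maximum over actions. Your matrix form $O_{a,\lambda}=(1-\lambda)O_{a,0}+\lambda\mathbf 1 q_a^\top$ is just a compact rewriting of the paper's componentwise computation.
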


\begin{proof}
Fix an action $a$. Let
\[
  \bar b_a=P_a^\top b,
  \qquad
  \bar b'_a=P_a^\top b'
\]
be the predicted beliefs. The predictive observation distribution under the
undegraded sensor is
\[
  p_0(z\mid b,a)
  =
  \sum_{s\in\cS}O_{a,0}(z\mid s)\bar b_a(s),
\]
and similarly for $p_0(z\mid b',a)$. Under the degraded sensor,
\begin{align*}
  p_\lambda(z\mid b,a)
  &=
  \sum_{s\in\cS}
  O_{a,\lambda}(z\mid s)\bar b_a(s) \\
  &=
  \sum_{s\in\cS}
  \bigl[(1-\lambda)O_{a,0}(z\mid s)+\lambda q_a(z)\bigr]\bar b_a(s) \\
  &=
  (1-\lambda)p_0(z\mid b,a)+\lambda q_a(z),
\end{align*}
because $\sum_s\bar b_a(s)=1$. Likewise,
\[
  p_\lambda(z\mid b',a)
  =
  (1-\lambda)p_0(z\mid b',a)+\lambda q_a(z).
\]
Therefore the state-independent term cancels:
\[
  p_\lambda(z\mid b,a)-p_\lambda(z\mid b',a)
  =
  (1-\lambda)
  \bigl[p_0(z\mid b,a)-p_0(z\mid b',a)\bigr].
\]
Since $1-\lambda\ge 0$,
\begin{align*}
  d_{\mathrm{obs},\lambda}^a(b,b')
  &=
  \frac12
  \sum_{z\in\cZ}
  \left|p_\lambda(z\mid b,a)-p_\lambda(z\mid b',a)\right| \\
  &=
  (1-\lambda)
  \frac12
  \sum_{z\in\cZ}
  \left|p_0(z\mid b,a)-p_0(z\mid b',a)\right| \\
  &=
  (1-\lambda)d_{\mathrm{obs},0}^a(b,b').
\end{align*}
Taking the maximum over $a\in\cA$ gives
\begin{align*}
  d_{\mathrm{obs},\lambda}(b,b')
  &=
  \max_{a\in\cA}d_{\mathrm{obs},\lambda}^a(b,b') \\
  &=
  (1-\lambda)
  \max_{a\in\cA}d_{\mathrm{obs},0}^a(b,b') \\
  &=
  (1-\lambda)d_{\mathrm{obs},0}(b,b').
\end{align*}
\end{proof}

\subsection{Action-execution uncertainty}
\label{subsec:action_execution_uncertainty}

The finite-POMDP model can also represent actuation uncertainty. Let
$a\in\cA$ denote the commanded action and let $\tilde a\in\cA$ denote the
action actually executed by the system. An action-execution channel
$G_{\sigu}$ is defined by
\[
  G_{\sigu}(\tilde a\mid a)
  =
  \Pr(\text{executed action }\tilde a
  \mid
  \text{commanded action }a),
\]
where $\sigu$ denotes the actuation noise floor. This channel changes the
POMDP associated with a commanded action. In the general case, the executed
action determines both the transition and the observation model, so the
predictive observation probability becomes
\begin{equation}
  p_{\sigy,\sigu}(z\mid b,a)
  =
  \sum_{\tilde a\in\cA}
  G_{\sigu}(\tilde a\mid a)
  \mathbf 1^\top D_{\tilde a,z,\sigy}P_{\tilde a}^\top b .
  \label{eq:finite_action_uncertain_obs_prob}
\end{equation}
The corresponding Bayesian update is
\begin{equation}
  \tau_{\sigy,\sigu}(b,a,z)
  =
  \frac{
  \sum_{\tilde a\in\cA}
  G_{\sigu}(\tilde a\mid a)
  D_{\tilde a,z,\sigy}P_{\tilde a}^\top b}
  {p_{\sigy,\sigu}(z\mid b,a)} .
  \label{eq:finite_action_uncertain_bayes}
\end{equation}
The belief-transition kernel is obtained by summing
\eqref{eq:finite_action_uncertain_bayes} over observations with probabilities
\eqref{eq:finite_action_uncertain_obs_prob}. Thus no new FRR definition is
needed: action uncertainty simply changes the controlled belief-transition
kernel and the resulting action-value function.

The same channel also averages immediate rewards. The reward associated with
the commanded action $a$ is
\[
  \bar r_{\sigu}(s,a)
  =
  \sum_{\tilde a\in\cA}
  G_{\sigu}(\tilde a\mid a) r(s,\tilde a).
\]
If the observation model depends only on the next state and not on the
executed action, then the effective commanded-action transition is
\[
  \bar P_{a,\sigu}
  =
  \sum_{\tilde a\in\cA}
  G_{\sigu}(\tilde a\mid a)P_{\tilde a},
\]
and the prediction step becomes
\[
  \bar b_{a,\sigu}=\bar P_{a,\sigu}^\top b.
\]
This is the finite-state analogue of including actuator noise in the process
noise of the nonlinear model.

The command-level distinguishability induced by an execution channel is
\begin{equation}
  d_{\mathrm{exec},\sigu}(a,a')
  :=
  \frac12
  \sum_{\tilde a\in\cA}
  \left|
    G_{\sigu}(\tilde a\mid a)-G_{\sigu}(\tilde a\mid a')
  \right|.
  \label{eq:execution_distinguishability}
\end{equation}
This quantity measures how reliably the physical plant can distinguish two
command labels at the execution layer.

\begin{theorem}[Actuator degradation contracts command distinguishability]
\label{thm:finite_actuator_data_processing}
Let \(G_{\sigma_{u,1}}\) and \(G_{\sigma_{u,2}}\) be two
action-execution channels. Suppose \(G_{\sigma_{u,2}}\) is obtained by
further degrading \(G_{\sigma_{u,1}}\) through a stochastic matrix \(H\) on
the executed-action alphabet, namely
\[
  G_{\sigma_{u,2}}(\hat a\mid a)
  =
  \sum_{\tilde a\in\cA}
  G_{\sigma_{u,1}}(\tilde a\mid a)H_{\tilde a\hat a},
\]
where
\[
  H_{\tilde a\hat a}\ge 0,
  \qquad
  \sum_{\hat a\in\cA}H_{\tilde a\hat a}=1
  \quad \text{for every } \tilde a\in\cA .
\]
Define the execution distinguishability between two commanded actions by
\[
  d_{\mathrm{exec},\sigma_u}(a,a')
  :=
  \frac12
  \sum_{\tilde a\in\cA}
  \left|
    G_{\sigma_u}(\tilde a\mid a)
    -
    G_{\sigma_u}(\tilde a\mid a')
  \right|.
\]
Then, for every pair of commanded actions \(a,a'\in\cA\),
\[
  d_{\mathrm{exec},\sigma_{u,2}}(a,a')
  \le
  d_{\mathrm{exec},\sigma_{u,1}}(a,a').
\]
In particular, for two actions with the symmetric execution channel
\[
  G_\alpha(\tilde a\mid a)
  =
  \begin{cases}
  1-\alpha, & \tilde a=a,\\
  \alpha, & \tilde a\neq a,
  \end{cases}
  \qquad
  0\le \alpha\le \frac12,
\]
the two command labels satisfy
\[
  d_{\mathrm{exec},\alpha}(a_1,a_2)=1-2\alpha.
\]
Thus the two commands become execution-indistinguishable as
\(\alpha\to 1/2\).
\end{theorem}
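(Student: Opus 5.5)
The plan mirrors the proof of Theorem~\ref{thm:finite_sensor_data_processing}, with the executed-action alphabet playing the role of the observation alphabet. Fix two commanded actions \(a,a'\in\cA\) and write
\[
  g_1(\tilde a):=G_{\sigma_{u,1}}(\tilde a\mid a),
  \qquad
  g'_1(\tilde a):=G_{\sigma_{u,1}}(\tilde a\mid a').
\]
These are probability vectors on \(\cA\). By hypothesis, the degraded laws are
\[
  g_2(\hat a)=\sum_{\tilde a}g_1(\tilde a)H_{\tilde a\hat a},
  \qquad
  g'_2(\hat a)=\sum_{\tilde a}g'_1(\tilde a)H_{\tilde a\hat a}.
\]
So \(g_2-g'_2=(g_1-g'_1)H\) as row vectors.

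The contraction step proceeds in three moves:
\begin{itemize}
  \item Write \(d_{\mathrm{exec},\sigma_{u,2}}(a,a')=\frac12\sum_{\hat a}\bigl|\sum_{\tilde a}(g_1(\tilde a)-g'_1(\tilde a))H_{\tilde a\hat a}\bigr|\).
  \item Move the absolute value inside the inner sum using the triangle inequality and \(H_{\tilde a\hat a}\ge0\).
  \item Exchange the order of summation and use \(\sum_{\hat a}H_{\tilde a\hat a}=1\) to collapse to \(\frac12\sum_{\tilde a}|g_1(\tilde a)-g'_1(\tilde a)|=d_{\mathrm{exec},\sigma_{u,1}}(a,a')\).
\end{itemize}
This is exactly the \(\ell_1\) contraction of a row-stochastic matrix, i.e.\ the finite data-processing inequality for total variation already used above. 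Since \(a,a'\) were arbitrary, the inequality holds for every pair.

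For the binary example, take \(\cA=\{a_1,a_2\}\). Then \(G_\alpha(\cdot\mid a_1)=(1-\alpha,\alpha)\) and \(G_\alpha(\cdot\mid a_2)=(\alpha,1-\alpha)\). Both coordinates of the difference have absolute value \(|1-2\alpha|\). Since \(0\le\alpha\le\frac12\), this equals \(1-2\alpha\), so half their sum is \(1-2\alpha\). Letting \(\alpha\to1/2\) gives \(d_{\mathrm{exec},\alpha}(a_1,a_2)\to0\), which is the claimed execution-indistinguishability.

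As an optional consistency check, \(G_\alpha\) is itself \(G_0=I\) degraded by \(H=G_\alpha\). The general part then yields \(1-2\alpha\le 1\), and composing \(G_\alpha\) with a further symmetric channel confirms monotonicity in \(\alpha\).

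There is no real obstacle. The only points needing care are keeping the row-stochastic convention for \(H\) (rows indexed by \(\tilde a\) sum to one), which is what makes the summation exchange collapse correctly, and using \(\alpha\le\frac12\) to drop the absolute value in the example.
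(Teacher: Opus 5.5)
Your proposal is correct and follows essentially the same route as the paper: the general inequality comes from the triangle inequality with \(H_{\tilde a\hat a}\ge 0\), then exchanging the sums and using the row sums of \(H\), and the binary example is the direct computation giving \(|1-2\alpha|=1-2\alpha\). Your optional consistency check (\(G_\alpha=I\,G_\alpha\), and composing symmetric channels multiplies the factors \(1-2\alpha\)) is also valid, but it goes beyond what the paper proves.
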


\begin{proof}
Fix two commanded actions \(a,a'\in\cA\). For each
\(\tilde a\in\cA\), define
\[
  h(\tilde a)
  :=
  G_{\sigma_{u,1}}(\tilde a\mid a)
  -
  G_{\sigma_{u,1}}(\tilde a\mid a').
\]
Using the degradation relation,
\[
\begin{aligned}
  &G_{\sigma_{u,2}}(\hat a\mid a)
  -
  G_{\sigma_{u,2}}(\hat a\mid a') \\
  &\quad =
  \sum_{\tilde a\in\cA}
  \left[
    G_{\sigma_{u,1}}(\tilde a\mid a)
    -
    G_{\sigma_{u,1}}(\tilde a\mid a')
  \right]H_{\tilde a\hat a} \\
  &\quad =
  \sum_{\tilde a\in\cA}
  h(\tilde a)H_{\tilde a\hat a}.
\end{aligned}
\]
Therefore,
\[
\begin{aligned}
  d_{\mathrm{exec},\sigma_{u,2}}(a,a')
  &=
  \frac12
  \sum_{\hat a\in\cA}
  \left|
    G_{\sigma_{u,2}}(\hat a\mid a)
    -
    G_{\sigma_{u,2}}(\hat a\mid a')
  \right| \\
  &=
  \frac12
  \sum_{\hat a\in\cA}
  \left|
    \sum_{\tilde a\in\cA}
    h(\tilde a)H_{\tilde a\hat a}
  \right| \\
  &\le
  \frac12
  \sum_{\hat a\in\cA}
  \sum_{\tilde a\in\cA}
  |h(\tilde a)|H_{\tilde a\hat a} \\
  &=
  \frac12
  \sum_{\tilde a\in\cA}
  |h(\tilde a)|
  \sum_{\hat a\in\cA}H_{\tilde a\hat a} \\
  &=
  \frac12
  \sum_{\tilde a\in\cA}
  |h(\tilde a)| \\
  &=
  d_{\mathrm{exec},\sigma_{u,1}}(a,a').
\end{aligned}
\]
This proves that further actuator degradation cannot increase command
distinguishability.

For the two-action symmetric channel, let
\(\cA=\{a_1,a_2\}\). Then
\[
  G_\alpha(\cdot\mid a_1)=(1-\alpha,\alpha),
  \qquad
  G_\alpha(\cdot\mid a_2)=(\alpha,1-\alpha),
\]
with coordinates ordered as \((a_1,a_2)\). Hence
\[
\begin{aligned}
  d_{\mathrm{exec},\alpha}(a_1,a_2)
  &=
  \frac12
  \left(
  |(1-\alpha)-\alpha|
  +
  |\alpha-(1-\alpha)|
  \right) \\
  &=
  |1-2\alpha| \\
  &=
  1-2\alpha,
\end{aligned}
\]
because \(0\le \alpha\le 1/2\). Therefore
\[
  d_{\mathrm{exec},\alpha}(a_1,a_2)\to 0
  \qquad \text{as} \qquad
  \alpha\to \frac12 .
\]
Thus the two command labels become indistinguishable in execution.
\end{proof}

\begin{remark}[Scope of the actuator contraction]
Theorem~\ref{thm:finite_actuator_data_processing} is a statement about
physical command distinguishability at the execution channel. It does not by
itself assert that action-value gaps are monotone in the actuation noise
parameter for every POMDP, because changing the executed-action mixture also
changes the transition, observation, and reward models. The value consequence
is model-dependent and is certified by the decision-diameter condition. In
the limiting case where two commanded actions induce identical effective
rewards, transitions, and observation laws, their corresponding action values
are identical for every belief.
\end{remark}

\subsection{Exact action-value Lipschitzness}

Finite POMDPs also give an exact and simple way to obtain a decision-diameter
bound. Let the reward vector for action $a$ be $r_a\in\R^n$, with
$|r_a(s)|\le R_{\max}$ for all $s$ and $a$.

\begin{theorem}[Finite-POMDP action-value Lipschitz bound]
\label{thm:finite_q_lipschitz}
For a discounted finite POMDP with $|r_a(s)|\le R_{\max}$,
\[
  \sup_{a\in\cA}|Q^*(b,a)-Q^*(b',a)|
  \le
  L_Q^\Delta\,\norm{b-b'}_1,
  \quad
  L_Q^\Delta:=\frac{R_{\max}}{1-\gamma}.
\]
Consequently, any cell $C\subseteq\DeltaS$ satisfying
\[
  \diam_1(C):=\sup_{b,b'\in C}\norm{b-b'}_1
  \le
  \frac{\varepsilon}{L_Q^\Delta}
\]
has decision diameter $\Delta_Q(C)\le\varepsilon$.
\end{theorem}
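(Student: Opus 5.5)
The plan is to use the classical fact that, for a finite POMDP, the optimal value is the supremum over (history-dependent) policies of an expression linear in the initial belief. Fix a deterministic history-dependent policy \(\pi\) mapping observation histories to actions, and consider the \(\alpha\)-vector
\[
  \alpha_\pi(s)=\E\Bigl[\sum_{t\ge0}\gamma^t r_{a_t}(s_t)\,\Big|\, s_0=s,\pi\Bigr],
\]
where the expectation is over the state/observation process started at the hidden state \(s\). Since the process started from belief \(b\) is the \(b\)-mixture of the processes started at deterministic states, \(V^\pi(b)=b^\top\alpha_\pi\). Each component satisfies \(|\alpha_\pi(s)|\le R_{\max}/(1-\gamma)\). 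Likewise, for a fixed first action \(a\) followed by an arbitrary observation-dependent continuation, the return is \(b^\top\alpha\) for some vector \(\alpha\) with \(\|\alpha\|_\infty\le R_{\max}/(1-\gamma)\). I will use the standard fact that \(Q^*(b,a)\) equals the supremum of these returns over continuation policies, i.e.\ \(Q^*(b,a)=\sup_{\alpha\in\Gamma_a} b^\top\alpha\). This follows from \eqref{eq:qstar} together with the optimality of belief-based policies among history-dependent ones.

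The key steps are then as follows. \textbf{(i)} Establish the sup-of-linear representation \(Q^*(b,a)=\sup_{\alpha\in\Gamma_a}b^\top\alpha\) with \(\sup_{\alpha\in\Gamma_a}\|\alpha\|_\infty\le R_{\max}/(1-\gamma)\). \textbf{(ii)} Use the elementary inequality \(|\sup_\alpha f_\alpha-\sup_\alpha g_\alpha|\le\sup_\alpha|f_\alpha-g_\alpha|\) to obtain
\[
  |Q^*(b,a)-Q^*(b',a)|\le\sup_{\alpha\in\Gamma_a}|\alpha^\top(b-b')|\le \frac{R_{\max}}{1-\gamma}\norm{b-b'}_1
\]
by Hölder's inequality. \textbf{(iii)} Take the supremum over \(a\). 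The consequence for cells is immediate: for \(b,b'\in C\), every action's \(Q^*\)-difference is at most \(L_Q^\Delta\diam_1(C)\le\varepsilon\), so \(\Delta_Q(C)\le\varepsilon\), under any reasonable reading of decision diameter as \(\sup_{b,b'\in C}\sup_a|Q^*(b,a)-Q^*(b',a)|\).

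The main obstacle is step (i): justifying the representation rigorously without relying on the nonlinear Bayes map \(\tau_\sigma\). I would first attempt a cleaner value-iteration induction. Define \(V_0=0\) and \(V_{k+1}(b)=\max_a\{r_a^\top b+\gamma\sum_z p_\sigma(z\mid b,a)V_k(\tau_\sigma(b,a,z))\}\). Inductively, \(V_k\) is convex and piecewise linear, \(V_k(b)=\max_{\alpha\in\Gamma_k}b^\top\alpha\) with \(\|\alpha\|_\infty\le R_{\max}\sum_{j<k}\gamma^j\). The crucial identity is that \(p_\sigma(z\mid b,a)\,\alpha^\top\tau_\sigma(b,a,z)=\alpha^\top D_{a,z,\sigma}P_a^\top b\), so the normalization cancels and each backup term is linear in \(b\). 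Choosing the maximizing \(\alpha\) separately for each \(z\) gives new vectors \(r_a+\gamma\sum_zP_aD_{a,z,\sigma}\alpha_z\), and their sup-norm is at most \(R_{\max}+\gamma\max_z\|\alpha_z\|_\infty\) because \(\sum_z P_aD_{a,z,\sigma}\) is row-stochastic. The same one-step formula applied to \(V_k\) gives \(Q_{k+1}(b,a)\) as a maximum of linear functions with the same bound, and hence each \(Q_k(\cdot,a)\) is \(R_{\max}/(1-\gamma)\)-Lipschitz in \(\norm{\cdot}_1\). Finally, the Bellman operator is a \(\gamma\)-contraction in sup-norm, so \(Q_k\to Q^*\) uniformly, and the Lipschitz bound passes to the limit. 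Observations with \(p_\sigma(z\mid b,a)=0\) contribute zero to the backup and need only a remark.
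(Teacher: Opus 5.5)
Your proposal is correct and follows the paper's own route: write $Q^*(\cdot,a)$ as a supremum of linear functionals $b\mapsto\alpha^\top b$ with $\norm{\alpha}_\infty\le R_{\max}/(1-\gamma)$, apply the sup-difference inequality and H\"older to get the $\ell_1$ bound, and read off the cell bound. The paper only asserts the representation $Q^*(b,a)=\sup_{\alpha\in\Gamma_a}\alpha^\top b$, so your value-iteration argument fills the one step it leaves implicit: the normalizer cancellation $p_\sigma(z\mid b,a)\,\alpha^\top\tau_\sigma(b,a,z)=\alpha^\top D_{a,z,\sigma}P_a^\top b$, the norm recursion, and uniform convergence. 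Your insistence that the $\alpha$-vectors come from observation-history policies rather than belief-feedback policies is also exactly what makes $Q^\pi(\cdot,a)$ linear in $b$.
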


\begin{proof}
Fix an action $a\in\cA$. For any continuation policy $\pi$, define the
action-conditioned $\alpha$-vector $\alpha_a^\pi\in\mathbb R^{|\cS|}$ by
\[
  \alpha_a^\pi(s)
  :=
  \mathbb E^\pi\!\left[
    \sum_{t=0}^{\infty}\gamma^t r_{a_t}(s_t)
    \,\middle|\,
    s_0=s,\ a_0=a
  \right],
  \qquad s\in\cS .
\]
Thus, $\alpha_a^\pi(s)$ is the expected discounted return obtained when the
hidden state starts at $s$, the first action is fixed to be $a$, and the
continuation policy $\pi$ is used thereafter.

By conditioning on the initial hidden state, the action-value function of
policy $\pi$ is linear in the initial belief:
\[
  Q^\pi(b,a)
  =
  \sum_{s\in\cS}b(s)\alpha_a^\pi(s)
  =
  (\alpha_a^\pi)^\top b .
\]
Since $|r_a(s)|\le R_{\max}$ for every state-action pair, every discounted
return is bounded in magnitude by
\[
  \sum_{t=0}^{\infty}\gamma^t R_{\max}
  =
  \frac{R_{\max}}{1-\gamma}.
\]
Therefore, for every continuation policy $\pi$,
\[
  \norm{\alpha_a^\pi}_\infty
  \le
  \frac{R_{\max}}{1-\gamma}.
\]

Let
\[
  \Gamma_a
  :=
  \{\alpha_a^\pi:\pi\text{ is an admissible continuation policy}\}.
\]
Then the optimal action-value function with first action fixed to $a$ can be
written as
\[
  Q^*(b,a)
  =
  \sup_{\alpha\in\Gamma_a}\alpha^\top b .
\]
For any beliefs $b,b'\in\DeltaS$,
\begin{align*}
  Q^*(b,a)-Q^*(b',a)
  &=
  \sup_{\alpha\in\Gamma_a}\alpha^\top b
  -
  \sup_{\alpha\in\Gamma_a}\alpha^\top b' \\
  &\le
  \sup_{\alpha\in\Gamma_a}
  \left[\alpha^\top b-\alpha^\top b'\right] \\
  &=
  \sup_{\alpha\in\Gamma_a}\alpha^\top(b-b') \\
  &\le
  \sup_{\alpha\in\Gamma_a}
  \norm{\alpha}_\infty\norm{b-b'}_1 \\
  &\le
  \frac{R_{\max}}{1-\gamma}\norm{b-b'}_1 .
\end{align*}
Swapping $b$ and $b'$ gives
\[
  Q^*(b',a)-Q^*(b,a)
  \le
  \frac{R_{\max}}{1-\gamma}\norm{b-b'}_1 .
\]
Hence
\[
  |Q^*(b,a)-Q^*(b',a)|
  \le
  \frac{R_{\max}}{1-\gamma}\norm{b-b'}_1 .
\]
Since the bound is independent of $a$, taking the supremum over $a\in\cA$
gives
\[
  \sup_{a\in\cA}|Q^*(b,a)-Q^*(b',a)|
  \le
  L_Q^\Delta\norm{b-b'}_1,
  \quad
  L_Q^\Delta:=\frac{R_{\max}}{1-\gamma}.
\]

Now let $C\subseteq\DeltaS$ satisfy
\[
  \diam_1(C)
  :=
  \sup_{b,b'\in C}\norm{b-b'}_1
  \le
  \frac{\varepsilon}{L_Q^\Delta}.
\]
Then
\[
\begin{aligned}
  \Delta_Q(C)
  &=
  \sup_{b,b'\in C}\sup_{a\in\cA}
  |Q^*(b,a)-Q^*(b',a)| \\
  &\le
  L_Q^\Delta\diam_1(C) \\
  &\le
  \varepsilon .
\end{aligned}
\]
Thus $C$ has decision diameter at most $\varepsilon$.
\end{proof}

Theorem~\ref{thm:finite_q_lipschitz} gives a fully rigorous finite-state
version of FRR without requiring any contraction of the Bayesian update. It
uses only the geometry of the belief simplex and the bounded-return structure
of discounted POMDPs.

\subsection{Covering consequences}

\begin{corollary}[Simplex covering bound]
\label{cor:finite_simplex_cover}
Let $N_\Delta(\varepsilon)$ be the number of $\ell_1$-diameter cells needed
to construct an FRR on $\DeltaS$ using
Theorem~\ref{thm:finite_q_lipschitz}. Then there exists an explicit covering
satisfying
\[
  N_\Delta(\varepsilon)
  \le
  \left(
    1+\frac{2(n-1)L_Q^\Delta}{\varepsilon}
  \right)^{n-1}.
\]
In particular,
\[
  N_\Delta(\varepsilon)
  \lesssim_n
  \left(
    \frac{2L_Q^\Delta}{\varepsilon}
  \right)^{n-1},
\]
where the hidden constant depends only on $n$.
\end{corollary}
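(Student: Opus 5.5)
The plan is to build an explicit axis-aligned grid on a chart of the simplex, control the $\ell_1$-diameter of each grid cell, and then invoke Theorem~\ref{thm:finite_q_lipschitz}. Set $\delta:=\varepsilon/L_Q^\Delta$; it suffices to cover $\DeltaS$ by cells of $\ell_1$-diameter at most $\delta$. The case $n=1$ is trivial, since $\DeltaS$ is a single point and one cell suffices, so assume $n\ge 2$.

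\emph{Chart.} Use the affine chart $b\mapsto x=(b(1),\dots,b(n-1))$, which maps $\DeltaS$ bijectively onto
\[
T:=\{x\in[0,1]^{n-1}:\textstyle\sum_i x_i\le 1\},
\]
with inverse given by $b(n)=1-\sum_{i<n}x_i$.

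\emph{Grid.} Fix a side length $h>0$ and partition $[0,1]^{n-1}$ into half-open cubes $\prod_i[k_ih,(k_i+1)h)$, with the last cube in each coordinate closed at $1$. This needs $\lceil 1/h\rceil$ intervals per coordinate, so at most $\lceil 1/h\rceil^{n-1}\le(1+1/h)^{n-1}$ cubes. Each FRR cell is the preimage in $\DeltaS$ of a nonempty intersection of a cube with $T$. These cells partition $\DeltaS$, and in particular they cover it.

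\emph{Diameter.} The key estimate is that two beliefs $b,b'$ in the same cell satisfy $|b(i)-b'(i)|\le h$ for each $i<n$. Since $b(n)-b'(n)=-\sum_{i<n}(b(i)-b'(i))$, we also get $|b(n)-b'(n)|\le(n-1)h$. Summing,
\[
\norm{b-b'}_1\le 2(n-1)h .
\]
Choosing $h=\delta/(2(n-1))=\varepsilon/(2(n-1)L_Q^\Delta)$ makes every cell have $\diam_1\le\delta$. Theorem~\ref{thm:finite_q_lipschitz} then gives decision diameter at most $\varepsilon$ for every cell, and the count is
\[
(1+1/h)^{n-1}=\left(1+\frac{2(n-1)L_Q^\Delta}{\varepsilon}\right)^{n-1}.
\]
Handling the last coordinate is the only nontrivial step. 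It is the source of the factor $2$, because the implicit coordinate can move by the sum of the explicit ones.

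\emph{Asymptotic form.} Split into two regimes.
\begin{itemize}
\item If $\varepsilon\le 2L_Q^\Delta$, then $1\le 2L_Q^\Delta/\varepsilon$, so $1+2(n-1)L_Q^\Delta/\varepsilon\le n\cdot 2L_Q^\Delta/\varepsilon$. This gives $N_\Delta(\varepsilon)\le n^{n-1}(2L_Q^\Delta/\varepsilon)^{n-1}$.
\item If $\varepsilon>2L_Q^\Delta$, then $\diam_1(\DeltaS)=2\le\delta$, so a single cell suffices and the bound holds trivially.
\end{itemize}
In both regimes the hidden constant depends only on $n$, which yields the stated $\lesssim_n$ form.
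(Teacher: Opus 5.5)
Your construction is the paper's own: the same chart by the first $n-1$ coordinates, cubes of side $h=\varepsilon/(2(n-1)L_Q^\Delta)$, the diameter estimate $\norm{b-b'}_1\le 2(n-1)h$ coming from the implicit last coordinate, and the count $\lceil 1/h\rceil^{n-1}\le(1+1/h)^{n-1}$. Using half-open cubes so that the cells partition $\DeltaS$ is a harmless refinement of the paper's closed grid. The explicit constant $n^{n-1}$ in your first regime is more than the paper records.

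The second bullet of your asymptotic argument is wrong, however. When $\varepsilon>2L_Q^\Delta$ you correctly get $N_\Delta(\varepsilon)=1$, but the $\lesssim_n$ bound does not then ``hold trivially.'' For $n\ge 2$ the right-hand side $(2L_Q^\Delta/\varepsilon)^{n-1}$ tends to $0$ as $\varepsilon\to\infty$. So no constant $C_n$ can give $1\le C_n(2L_Q^\Delta/\varepsilon)^{n-1}$ for all $\varepsilon$, and your sentence ``in both regimes the hidden constant depends only on $n$'' is false.

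The $\lesssim_n$ form is only true as a small-$\varepsilon$ statement, which is how the paper's ``follows immediately'' must be read. Your first bullet already proves it for $\varepsilon\le 2L_Q^\Delta$. If you want a bound valid for every $\varepsilon>0$, use $1+(n-1)t\le n\max\{1,t\}$ with $t=2L_Q^\Delta/\varepsilon$ to get
\[
  N_\Delta(\varepsilon)\le n^{n-1}\max\Bigl\{1,\Bigl(\frac{2L_Q^\Delta}{\varepsilon}\Bigr)^{n-1}\Bigr\}.
\]
Then drop the claim that the unmodified power law holds in the large-$\varepsilon$ regime.
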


\begin{proof}
Let
\[
  r:=\frac{\varepsilon}{L_Q^\Delta}.
\]
By Theorem~\ref{thm:finite_q_lipschitz}, any set $C\subseteq\DeltaS$ with
$\diam_1(C)\le r$ has decision diameter at most $\varepsilon$. It therefore
suffices to construct a cover of $\DeltaS$ by sets of $\ell_1$-diameter at
most $r$.

If $n=1$, the simplex consists of a single point and the claim is trivial.
Assume $n\ge 2$. Parameterize $\DeltaS$ by its first $n-1$ coordinates:
\[
  T
  :=
  \left\{
    x\in\R^{n-1}_{\ge 0}:\sum_{i=1}^{n-1}x_i\le 1
  \right\},
\]
with the map
\[
  \phi(x)
  :=
  \bigl(x_1,\dots,x_{n-1},1-\sum_{i=1}^{n-1}x_i\bigr).
\]
For any $x,y\in T$,
\begin{align*}
  \norm{\phi(x)-\phi(y)}_1
  &=
  \sum_{i=1}^{n-1}|x_i-y_i|
  +
  \left|
    \sum_{i=1}^{n-1}(x_i-y_i)
  \right| \\
  &\le
  2\sum_{i=1}^{n-1}|x_i-y_i| \\
  &\le
  2(n-1)\norm{x-y}_\infty .
\end{align*}

Cover the cube $[0,1]^{n-1}$ by axis-aligned cubes of side length
\[
  h:=\frac{r}{2(n-1)}.
\]
Intersect this grid with $T$, and map the resulting cells into $\DeltaS$ by
$\phi$. If $b=\phi(x)$ and $b'=\phi(y)$ lie in the image of the same grid
cell, then $\norm{x-y}_\infty\le h$, and therefore
\[
  \norm{b-b'}_1
  =
  \norm{\phi(x)-\phi(y)}_1
  \le
  2(n-1)h
  =
  r.
\]
Thus every cell in the constructed covering has $\ell_1$-diameter at most
$r$.

The number of grid intervals needed along each coordinate is at most
$\lceil 1/h\rceil$. Since $\lceil x\rceil\le 1+x$ for $x\ge 0$, the number
of grid cells is bounded by
\[
  \left\lceil \frac{1}{h}\right\rceil^{n-1}
  \le
  \left(1+\frac{1}{h}\right)^{n-1}
  =
  \left(1+\frac{2(n-1)}{r}\right)^{n-1}.
\]
Substituting $r=\varepsilon/L_Q^\Delta$ gives
\[
  N_\Delta(\varepsilon)
  \le
  \left(
    1+\frac{2(n-1)L_Q^\Delta}{\varepsilon}
  \right)^{n-1}.
\]
The asymptotic scaling follows immediately, with constants depending only on
$n$.
\end{proof}

The preceding covering bound is purely geometric: it uses the raw
$\ell_1$-geometry of the belief simplex. A sensor-induced covering is more
model-specific. It requires an observability-type condition, because
$d_{\mathrm{obs},\sigma}$ is only a pseudometric and may collapse belief
directions that still matter for reward or future value.

The following optional assumption provides one sufficient condition under
which the \(\ell_1\)-Lipschitz bound can be transferred to the
sensor-induced pseudometric.

\begin{assumption}[Observation-to-belief stability]
\label{ass:obs_to_belief_stability}
For a fixed sensor noise level \(\sigma\), there exists a finite constant
\(C_{\mathrm{obs}}(\sigma)<\infty\) such that
\[
  \norm{b-b'}_1
  \le
  C_{\mathrm{obs}}(\sigma)d_{\mathrm{obs},\sigma}(b,b')
  \qquad
  \forall b,b'\in\mathcal B_{\mathrm{reach}},
\]
where \(\mathcal B_{\mathrm{reach}}\subseteq\DeltaS\) denotes the reachable
belief set under the POMDP dynamics and admissible policies.
\end{assumption}

\begin{remark}[Interpretation of Assumption~\ref{ass:obs_to_belief_stability}]
Assumption~\ref{ass:obs_to_belief_stability} is an observability-type
condition. It is not required for the basic \(\ell_1\)-based FRR guarantee
in Theorem~\ref{thm:finite_q_lipschitz}; it is only needed when one wants to
replace \(\ell_1\)-diameter by diameter in the sensor-induced
distinguishability pseudometric \(d_{\mathrm{obs},\sigma}\).

To see its meaning, define
\[
  M_{a,\sigma}:=O_{a,\sigma}^{\top}P_a^{\top}.
\]
Then
\[
  d_{\mathrm{obs},\sigma}(b,b')
  =
  \max_{a\in\cA}
  \frac12\norm{M_{a,\sigma}(b-b')}_1 .
\]
Since \(b-b'\) lies in the tangent subspace
\[
  T:=\{x\in\mathbb R^n:\mathbf 1^\top x=0\},
\]
the assumption holds on the full simplex if and only if there exists
\(\kappa_{\mathrm{obs}}(\sigma)>0\) such that
\[
  \max_{a\in\cA}
  \frac12\norm{M_{a,\sigma}x}_1
  \ge
  \kappa_{\mathrm{obs}}(\sigma)\norm{x}_1
  \qquad
  \forall x\in T.
\]
In that case one may take
\[
  C_{\mathrm{obs}}(\sigma)
  =
  \frac{1}{\kappa_{\mathrm{obs}}(\sigma)}.
\]
Equivalently, the collection of observation-prediction maps
\(\{M_{a,\sigma}\}_{a\in\cA}\) must be injective on the belief-difference
subspace \(T\). If there exists a nonzero \(x\in T\) such that
\[
  M_{a,\sigma}x=0
  \qquad
  \forall a\in\cA,
\]
then two distinct beliefs can induce identical predictive observation laws
for every action, so \(d_{\mathrm{obs},\sigma}\) cannot control
\(\norm{b-b'}_1\), and the assumption fails.

This condition is therefore strongest when the sensor is highly informative
and becomes weaker as the sensor is degraded. Under severe sensor noise,
dropout, or state-independent observations, the constant
\(C_{\mathrm{obs}}(\sigma)\) may become very large or infinite. In that case,
small \(d_{\mathrm{obs},\sigma}\)-diameter alone does not guarantee small
belief diameter. FRR can still merge such directions, but only if the
decision diameter \(\Delta_Q\) is also small.
\end{remark}

\begin{lemma}[Observation-metric action-value Lipschitzness]
\label{lem:finite_obs_q_lipschitz}
Under Assumption~\ref{ass:obs_to_belief_stability},
\[
  \sup_{a\in\cA}|Q^*(b,a)-Q^*(b',a)|
  \le
  L_{Q,\mathrm{obs}}(\sigma)d_{\mathrm{obs},\sigma}(b,b'),
\]
where
\[
  L_{Q,\mathrm{obs}}(\sigma)
  :=
  L_Q^\Delta C_{\mathrm{obs}}(\sigma)
  =
  \frac{R_{\max}}{1-\gamma}C_{\mathrm{obs}}(\sigma).
\]
\end{lemma}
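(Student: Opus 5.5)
The plan is to compose the $\ell_1$ action-value Lipschitz bound of Theorem~\ref{thm:finite_q_lipschitz} with the observation-to-belief stability inequality of Assumption~\ref{ass:obs_to_belief_stability}. No new dynamic-programming argument should be needed: the $\alpha$-vector representation already shows that $Q^*(\cdot,a)$ is $L_Q^\Delta$-Lipschitz in $\ell_1$ on the whole simplex. The lemma then reduces to a change of metric on the reachable set.

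First, fix $b,b'\in\mathcal B_{\mathrm{reach}}$; the assumption is only stated there, so the lemma should be read on $\mathcal B_{\mathrm{reach}}$. By Theorem~\ref{thm:finite_q_lipschitz}, for every $a\in\cA$,
\[
  |Q^*(b,a)-Q^*(b',a)|\le \frac{R_{\max}}{1-\gamma}\norm{b-b'}_1 .
\]
Second, apply Assumption~\ref{ass:obs_to_belief_stability} to replace $\norm{b-b'}_1$ by $C_{\mathrm{obs}}(\sigma)\,d_{\mathrm{obs},\sigma}(b,b')$. This is legitimate because $L_Q^\Delta\ge 0$, so multiplying the inequality preserves its direction. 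Third, note that the resulting right-hand side does not depend on $a$, and take the supremum over $a\in\cA$. This gives the claimed constant $L_{Q,\mathrm{obs}}(\sigma)=L_Q^\Delta C_{\mathrm{obs}}(\sigma)$.

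There is no real obstacle here. The only point needing care is scope: the stability inequality holds only on $\mathcal B_{\mathrm{reach}}$, so the conclusion should be stated for $b,b'\in\mathcal B_{\mathrm{reach}}$ rather than all of $\DeltaS$. We should also remark on degenerate cases. If $C_{\mathrm{obs}}(\sigma)=\infty$ the assumption is not in force. If $d_{\mathrm{obs},\sigma}(b,b')=0$, the assumption forces $b=b'$, so the bound holds trivially and is consistent.
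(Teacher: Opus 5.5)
Your proposal is correct and matches the paper's proof: it composes the $\ell_1$ bound of Theorem~\ref{thm:finite_q_lipschitz} with Assumption~\ref{ass:obs_to_belief_stability} and then takes the supremum over $a$. Restricting to $b,b'\in\cB_{\mathrm{reach}}$ is a worthwhile precision that the paper's proof leaves implicit, since the stability inequality is only assumed on the reachable set.
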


\begin{proof}
By Theorem~\ref{thm:finite_q_lipschitz}, for all $b,b'\in\DeltaS$,
\[
  \sup_{a\in\cA}|Q^*(b,a)-Q^*(b',a)|
  \le
  L_Q^\Delta\norm{b-b'}_1.
\]
Assumption~\ref{ass:obs_to_belief_stability} gives
\[
  \norm{b-b'}_1
  \le
  C_{\mathrm{obs}}(\sigma)d_{\mathrm{obs},\sigma}(b,b').
\]
Combining the two inequalities yields
\[
  \sup_{a\in\cA}|Q^*(b,a)-Q^*(b',a)|
  \le
  L_Q^\Delta C_{\mathrm{obs}}(\sigma)d_{\mathrm{obs},\sigma}(b,b'),
\]
which is the claimed bound.
\end{proof}

\begin{corollary}[Observation-metric FRR covering]
\label{cor:finite_obs_cover}
Suppose Assumption~\ref{ass:obs_to_belief_stability} holds. Let
$\{C_i\}_{i=1}^N$ be any covering of $\DeltaS$ satisfying
\[
  \diam_{\mathrm{obs},\sigma}(C_i)
  :=
  \sup_{b,b'\in C_i}d_{\mathrm{obs},\sigma}(b,b')
  \le
  \frac{\varepsilon}{L_{Q,\mathrm{obs}}(\sigma)}
\]
for every cell $C_i$. Then each cell has decision diameter at most
$\varepsilon$:
\[
  \Delta_Q(C_i)
  :=
  \sup_{b,b'\in C_i}\sup_{a\in\cA}
  |Q^*(b,a)-Q^*(b',a)|
  \le
  \varepsilon .
\]
Consequently, if
$\mathcal N_{\mathrm{diam}}(\DeltaS,d_{\mathrm{obs},\sigma},r)$ denotes the
minimum number of sets of $d_{\mathrm{obs},\sigma}$-diameter at most $r$
needed to cover $\DeltaS$, then
\[
  N_{\mathrm{obs},\sigma}(\varepsilon)
  \le
  \mathcal N_{\mathrm{diam}}
  \left(
    \DeltaS,
    d_{\mathrm{obs},\sigma},
    \frac{\varepsilon}{L_{Q,\mathrm{obs}}(\sigma)}
  \right).
\]
\end{corollary}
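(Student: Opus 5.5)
The plan is to reduce both claims to Lemma~\ref{lem:finite_obs_q_lipschitz}, followed by a purely combinatorial step relating covers to the covering number.

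\textbf{Step 1: decision diameter of a single cell.} Fix a cell $C_i$ satisfying the diameter hypothesis, and take any $b,b'\in C_i$. Lemma~\ref{lem:finite_obs_q_lipschitz} gives
\[
  \sup_{a\in\cA}|Q^*(b,a)-Q^*(b',a)|
  \le
  L_{Q,\mathrm{obs}}(\sigma)\,d_{\mathrm{obs},\sigma}(b,b')
  \le
  L_{Q,\mathrm{obs}}(\sigma)\,\diam_{\mathrm{obs},\sigma}(C_i)
  \le
  \varepsilon .
\]
Taking the supremum over $b,b'\in C_i$ yields $\Delta_Q(C_i)\le\varepsilon$. The cover property of $\{C_i\}$ is not needed for this step; it is only needed to conclude that the collection is an FRR.

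\textbf{Step 2: the counting bound.} I read $N_{\mathrm{obs},\sigma}(\varepsilon)$ as the minimal number of cells in a cover of $\DeltaS$ that is certified through $d_{\mathrm{obs},\sigma}$-diameter, or more generally as the minimal size of a cover whose cells all have decision diameter at most $\varepsilon$. Let $r=\varepsilon/L_{Q,\mathrm{obs}}(\sigma)$, and choose a cover attaining $\mathcal N_{\mathrm{diam}}(\DeltaS,d_{\mathrm{obs},\sigma},r)$. By Step 1 each of its cells has $\Delta_Q\le\varepsilon$, so the cover is admissible. The minimal admissible count is therefore at most $\mathcal N_{\mathrm{diam}}$. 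If the infimum defining $\mathcal N_{\mathrm{diam}}$ is infinite, the inequality holds trivially.

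\textbf{Main obstacle: domain of the assumption.} Assumption~\ref{ass:obs_to_belief_stability}, and hence Lemma~\ref{lem:finite_obs_q_lipschitz}, is stated only on $\mathcal B_{\mathrm{reach}}$, whereas the corollary covers all of $\DeltaS$. I would handle this in one of two ways. The first is to read the corollary as assuming the stability inequality on the whole simplex, which is the injectivity-on-$T$ version given in the Remark. The second is to restrict the cells to $C_i\cap\mathcal B_{\mathrm{reach}}$: intersecting with $\mathcal B_{\mathrm{reach}}$ cannot increase diameters or the number of cells, so the bound carries over to a cover of the reachable set.

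A related check is that $\mathcal N_{\mathrm{diam}}$ is finite under the assumption. On the full simplex, $d_{\mathrm{obs},\sigma}(b,b')\le\max_a\tfrac12\|M_{a,\sigma}\|_{1\to1}\|b-b'\|_1\le\tfrac12\|b-b'\|_1$, since $O_{a,\sigma}^\top$ and $P_a^\top$ are column-stochastic and hence $\ell_1$-nonexpansive. So the $\ell_1$ grid from Corollary~\ref{cor:finite_simplex_cover} with $\ell_1$-diameter $2r$ has $d_{\mathrm{obs},\sigma}$-diameter at most $r$, which gives finiteness. This is not needed for the inequality itself, but it shows the bound is meaningful.
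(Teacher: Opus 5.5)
Your proposal is correct and follows the paper's proof essentially verbatim: Lemma~\ref{lem:finite_obs_q_lipschitz} together with $d_{\mathrm{obs},\sigma}(b,b')\le\diam_{\mathrm{obs},\sigma}(C_i)$ gives $\Delta_Q(C_i)\le\varepsilon$, and the counting bound then follows directly from the definition of $\mathcal N_{\mathrm{diam}}$. Your domain observation is a genuine point the paper's proof passes over: Assumption~\ref{ass:obs_to_belief_stability} is stated only on $\mathcal B_{\mathrm{reach}}$, while the cells cover all of $\DeltaS$, so the argument implicitly needs the stability bound on the whole simplex, or else certifies only $C_i\cap\mathcal B_{\mathrm{reach}}$. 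Your check $d_{\mathrm{obs},\sigma}(b,b')\le\tfrac12\norm{b-b'}_1$ also correctly shows that the covering number is finite.
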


\begin{proof}
Let $C_i$ be one cell in the covering. For any $b,b'\in C_i$, Lemma~\ref{lem:finite_obs_q_lipschitz} gives
\[
  \sup_{a\in\cA}|Q^*(b,a)-Q^*(b',a)|
  \le
  L_{Q,\mathrm{obs}}(\sigma)d_{\mathrm{obs},\sigma}(b,b').
\]
Since $b,b'\in C_i$,
\[
  d_{\mathrm{obs},\sigma}(b,b')
  \le
  \diam_{\mathrm{obs},\sigma}(C_i)
  \le
  \frac{\varepsilon}{L_{Q,\mathrm{obs}}(\sigma)}.
\]
Therefore
\[
  \sup_{a\in\cA}|Q^*(b,a)-Q^*(b',a)|
  \le
  \varepsilon .
\]
Taking the supremum over all $b,b'\in C_i$ gives
$\Delta_Q(C_i)\le\varepsilon$. Since the same argument applies to every
cell, the covering is an $\varepsilon$-FRR covering. The covering-number
statement follows directly from the definition of
$\mathcal N_{\mathrm{diam}}$ with radius parameter
$r=\varepsilon/L_{Q,\mathrm{obs}}(\sigma)$.
\end{proof}

The observation-metric construction gives a route to smaller FRR covers when
the observable image of the simplex is lower-dimensional than the simplex
itself. In the finite POMDP, the predictive observation distribution for
action $a$ is the image of the predicted belief under the linear map
\[
  b\mapsto O_{a,\sigma}^\top P_a^\top b .
\]
Thus $d_{\mathrm{obs},\sigma}$ measures distances after projecting the
simplex through the family of observation-prediction maps
$\{O_{a,\sigma}^\top P_a^\top\}_{a\in\cA}$. If these maps have low rank or low
numerical rank on the relevant region of the simplex, then the observable
image may admit a lower-dimensional cover. However, because
$d_{\mathrm{obs},\sigma}$ is only a pseudometric, this dimension reduction is
valid for FRR only when an observation-to-value certification such as
Assumption~\ref{ass:obs_to_belief_stability} and
Lemma~\ref{lem:finite_obs_q_lipschitz} is available.

%============================================================================
\section{Finite Reliability Representations}
\label{sec:frr}
%============================================================================
\subsection{The correct smoothness object: the belief kernel}

The relevant smoothness object for dynamic programming is not the
fixed-observation Bayesian map \(\Phi_\sigma(b,u,y)\), but the controlled
belief-transition kernel \(\cK_\sigma(\cdot\mid b,u)\). The two answer
different questions: \(\Phi_\sigma(b,u,y)\) describes the posterior belief
\emph{after} a particular observation \(y\) has already occurred, taking
\(y\) as a fixed, known input. In contrast, \(\cK_\sigma(\cdot\mid b,u)\)
describes the distribution of the next belief \emph{before} the
observation is known: since the next belief depends on which \(y\) is
eventually realized, and that \(y\) is still random at the time control
\(u\) is chosen, the next belief is itself a random variable, and
\(\cK_\sigma\) is its law. It is this distribution over possible posteriors,
rather than any single realized posterior, that the Bellman operator must
integrate against when computing expected future value.

Because of this distinction, the following assumption should be read as a
regularity condition on the \emph{nominal planning model} --- that is, on
\(\cK_\sigma\) as used internally by the planner --- rather than as a claim
about the physical world. In particular, it is not a claim that the
physical system is exactly known, nor that Bayesian filtering is globally
contractive on all beliefs. The assumption is weaker and more local:
restricted to the reachable belief set used for planning, it requires only
that small perturbations of the current belief induce controlled
perturbations of the \emph{distribution} of the next belief, not that any
particular posterior update is contractive.

\begin{assumption}[Reachable belief-kernel invariance and Lipschitz modulus]
\label{ass:kernel_lipschitz}
The reachable belief set $\cB_{\mathrm{reach}}$ is invariant under the
controlled belief-transition kernel: $\cK_\sigma(\cB_{\mathrm{reach}}\mid
b,u) = 1$ for all $b\in\cB_{\mathrm{reach}}$, $u\in\cU$. Moreover, there
exists a finite constant $\beta_\sigma \ge 0$ such that, for all
$b,b'\in\cB_{\mathrm{reach}}$ and all $u\in\cU$,
\[
  W_1^{\cB}\big(\cK_\sigma(\cdot\mid b,u), \cK_\sigma(\cdot\mid b',u)\big)
  \le \beta_\sigma\, d_{W_1}(b,b'),
\]
where $W_1^{\cB}$ is the Wasserstein-1 distance on probability measures over
$\cB_{\mathrm{reach}}$, using $d_{W_1}$ as the ground metric.
\end{assumption}

\begin{remark}[Model-based interpretation] \label{rm:model-based-inter}
The invariance and Lipschitz conditions above are properties of the
\emph{belief model used for planning}, not of the true environment.
Since $\cK_\sigma$ is induced by the estimator the designer chooses ---
filter structure, noise model, and the set $\cB_{\mathrm{reach}}$ on which
it is analyzed --- both conditions can often be \emph{enforced by
construction} rather than verified empirically: $\cB_{\mathrm{reach}}$ can
be chosen forward-invariant under the nominal filter (e.g., a compact
Riccati-invariant set for a Kalman filter, or the full simplex
$\Delta(S)$ for a finite POMDP), and the filter can be designed or
certified to admit a finite modulus $\beta_\sigma$ on that set. No claim
is made that this nominal model exactly describes the true system;
model mismatch is a separate residual error, handled outside
$\beta_\sigma$. Thus $\beta_\sigma$ should be read as a certified modulus
of the \emph{planner's internal belief dynamics}, largely under the
designer's control, not as a universal constant of the physical world.
\end{remark}

\begin{remark}[Finite-POMDP justification]
In the finite-POMDP case, invariance of $\cB_{\mathrm{reach}}$ holds
automatically: taking $\cB_{\mathrm{reach}}=\DeltaS$, or any subset
positively invariant under $\{P_a\}_{a\in\cA}$, the Bayes update
$\tau_\sigma$ in~\eqref{eq:finite_bayes} always returns a point of the
simplex, so $\cK_\sigma(\cB_{\mathrm{reach}}\mid b,a)=1$ trivially. The
Lipschitz modulus $\beta_\sigma$ holds on any compact reachable subset of
the belief simplex on which the observation normalizers used by the
filter are uniformly bounded away from zero:
\[
  p_\sigma(z\mid b,a)\ge p_{\min}>0
\]
for all reachable $b$, actions $a$, and observations $z$ retained as
positive-probability branches of the model. On such a set, the Bayes
update is a smooth normalized linear map, and the finite mixture defining
$\cK_\sigma(\cdot\mid b,a)$ is Lipschitz in $b$.
\end{remark}

\begin{remark}[Linear-Gaussian justification]
For linear-Gaussian systems with Kalman filtering, Gaussian beliefs remain
Gaussian, so a compact Riccati reachable covariance set is, by
construction, forward-invariant under the filter recursion, giving
$\cK_\sigma(\cB_{\mathrm{reach}}\mid b,u)=1$ on that set. The belief state
is then finite-dimensional, and the belief-transition kernel is Gaussian
with parameters depending smoothly on the current belief parameters. On
compact subsets, this gives a finite Lipschitz modulus $\beta_\sigma$,
which can be computed or bounded from the system matrices and Kalman gain.
\end{remark}

\begin{remark}[Learned and approximate belief models]
Assumption~\ref{ass:kernel_lipschitz} can also be verified or estimated
for learned belief models, but unlike the two cases above, invariance of
$\cB_{\mathrm{reach}}$ is not automatic and must be checked or enforced
separately --- e.g., by verifying empirically that sampled belief rollouts
remain in $\cB_{\mathrm{reach}}$, or by projecting the learned update onto
$\cB_{\mathrm{reach}}$. For the Lipschitz modulus, if a dynamics model,
observation model, and belief encoder are implemented by Lipschitz neural
networks on compact domains, then the induced approximate belief update is
Lipschitz with a modulus bounded by the product of the component Lipschitz
constants. In practice, $\beta_\sigma$ may be obtained analytically from a
certified model class, or empirically from paired belief rollouts:
\[
  \widehat\beta_\sigma
  =
  \sup_{\substack{b\neq b'\\ u\in\cU}}
  \frac{
  W_1^{\cB}\!\left(
  \cK_\sigma(\cdot\mid b,u),
  \cK_\sigma(\cdot\mid b',u)
  \right)}
  {\dW(b,b')}.
\]
This makes FRR compatible with both model-based filters and learned
belief-state representations.
\end{remark}

Assumption~\ref{ass:kernel_lipschitz} requires invariance of
$\cB_{\mathrm{reach}}$ and a finite modulus $\beta_\sigma$; it does not
itself require Bayesian filtering to be contractive. The value-function
Lipschitz result below uses the stronger discounted condition
$\gamma\beta_\sigma<1$. This condition does not require the Bayesian
update itself to be contractive; it requires only that belief-kernel
sensitivity is small enough relative to the discount factor.

\subsection{Value and action-value Lipschitz continuity}

\begin{lemma}[Reward Lipschitzness on belief space]
\label{lem:reward_lipschitz}
Let
\[
  \rho(b,u):=\int_{\cX} r(x,u)\,b(dx)
\]
be the belief-averaged immediate reward. Under
Assumption~\ref{ass:regular}, for every \(u\in\cU\) and
\(b,b'\in\cB(\cX)\),
\[
  |\rho(b,u)-\rho(b',u)|
  \le
  L_r\,\dW(b,b').
\]
\end{lemma}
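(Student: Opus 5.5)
The plan is to use the Kantorovich--Rubinstein dual form of \(\dW\), which is exactly how \(\dW\) is defined in Section~\ref{sec:setup}. Fix \(u\in\cU\) and consider the function \(\phi_u(x):=r(x,u)\) on \(\cX\). By Assumption~\ref{ass:regular}, \(\phi_u\) is Lipschitz with constant \(L_r\), uniformly in \(u\). It is also bounded, hence integrable against every Borel probability measure on the compact set \(\cX\). Consequently \(\rho(b,u)=\int\phi_u\,db\) is well defined for every \(b\in\cB(\cX)\).

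First I treat the degenerate case \(L_r=0\). Then \(\phi_u\) is constant on \(\cX\), so \(\rho(b,u)=\rho(b',u)\) for all beliefs and the bound holds trivially. When \(L_r>0\), I normalize by setting \(\psi_u:=\phi_u/L_r\), which satisfies \(\mathrm{Lip}(\psi_u)\le 1\). Hence \(\psi_u\) is admissible in the supremum defining \(\dW\). Linearity of the integral then gives
\[
  |\rho(b,u)-\rho(b',u)|
  = L_r\left|\int_\cX\psi_u\,db-\int_\cX\psi_u\,db'\right|
  \le L_r\,\dW(b,b').
\]
The bound does not depend on \(u\), so it holds for every \(u\in\cU\), which is the claim.

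There is no real obstacle; the only point to check is integrability and measurability of \(\phi_u\). Here \(r\) is bounded and Lipschitz, hence continuous, on the compact set \(\cX\), so \(\phi_u\) is integrable and the test-function argument is legitimate. In particular, no finite-first-moment issue arises for \(\dW\), because \(\cX\) has finite diameter \(D_x\) by Assumption~\ref{ass:compact}.
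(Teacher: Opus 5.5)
Your proposal is correct and follows essentially the same route as the paper: both normalize \(r(\cdot,u)\) by \(L_r\) to obtain an admissible 1-Lipschitz test function in the dual definition of \(\dW\), and both treat \(L_r=0\) separately as the trivial constant case. Your added integrability remarks are harmless. Boundedness of \(r\) already suffices for integrability, so the appeal to the diameter \(D_x\) is not actually needed.
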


\begin{proof}
For fixed \(u\), Assumption~\ref{ass:regular} implies that
\(x\mapsto r(x,u)\) is \(L_r\)-Lipschitz. Hence, if \(L_r>0\),
\(r(\cdot,u)/L_r\) is an admissible 1-Lipschitz test function in the
definition of \(\dW\). Therefore,
\[
\begin{aligned}
  |\rho(b,u)-\rho(b',u)|
  &=
  \left|
  \int_{\cX} r(x,u)\,b(dx)
  -
  \int_{\cX} r(x,u)\,b'(dx)
  \right| \\
  &\le
  L_r\,\dW(b,b').
\end{aligned}
\]
If \(L_r=0\), then \(r(\cdot,u)\) is constant in \(x\), and the same bound
holds trivially.
\end{proof}

\begin{theorem}[Value-function Lipschitz bound]
\label{thm:value_lipschitz}
Suppose Assumptions~\ref{ass:regular} and~\ref{ass:kernel_lipschitz} hold
on $\cB_{\mathrm{reach}}$ and
\begin{equation}
\label{eqn:small_gain}
  \gamma\beta_\sigma<1.  
\end{equation}
Then $V^*$ is Lipschitz on $\cB_{\mathrm{reach}}$ with constant
\[
  L_V(\sigma)=\frac{L_r}{1-\gamma\beta_\sigma}.
\]
That is,
\[
  |V^*(b)-V^*(b')|
  \le
  L_V(\sigma)\,\dW(b,b')
  \qquad
  \forall b,b'\in\cB_{\mathrm{reach}}.
\]
\end{theorem}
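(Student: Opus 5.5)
The plan is to show that the Bellman operator preserves a cone of bounded Lipschitz functions, and then pass to the limit through value iteration. Define, on bounded measurable functions $V:\cB_{\mathrm{reach}}\to\R$,
\[
  (TV)(b):=\sup_{u\in\cU}\Bigl\{\rho(b,u)+\gamma\int_{\cB_{\mathrm{reach}}}V(b_+)\,\cK_\sigma(db_+\mid b,u)\Bigr\}.
\]
By the invariance part of Assumption~\ref{ass:kernel_lipschitz}, the integral only involves values on $\cB_{\mathrm{reach}}$. Since $r$ is bounded, $T$ is a $\gamma$-contraction in the sup norm. Its unique bounded fixed point is $V^*$ from \eqref{eq:bellman_kernel}, and $T^kV_0\to V^*$ uniformly for any bounded starting point, e.g.\ $V_0\equiv 0$.

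The key step is a one-step propagation bound. If $V$ is $L$-Lipschitz on $(\cB_{\mathrm{reach}},\dW)$, then $TV$ is $(L_r+\gamma\beta_\sigma L)$-Lipschitz. To prove it, fix $b,b'$ and a control $u$. Lemma~\ref{lem:reward_lipschitz} bounds the reward difference by $L_r\dW(b,b')$. For the continuation term, use the Kantorovich--Rubinstein duality for $W_1^{\cB}$: $V/L$ is a $1$-Lipschitz test function on the ground space $(\cB_{\mathrm{reach}},\dW)$, so
\[
  \Bigl|\int V\,d\cK_\sigma(\cdot\mid b,u)-\int V\,d\cK_\sigma(\cdot\mid b',u)\Bigr|\le L\,W_1^{\cB}\bigl(\cK_\sigma(\cdot\mid b,u),\cK_\sigma(\cdot\mid b',u)\bigr)\le L\beta_\sigma\dW(b,b').
\]
Because this holds uniformly in $u$, the elementary inequality $|\sup_u f(u)-\sup_u g(u)|\le\sup_u|f(u)-g(u)|$ gives the claim. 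No maximizer is needed here, so Assumption~\ref{ass:maximizers} is not required.

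Iterating from $V_0\equiv0$, which has Lipschitz constant $L_0=0$, gives $L_{k+1}=L_r+\gamma\beta_\sigma L_k$. Hence
\[
  L_k=L_r\sum_{j<k}(\gamma\beta_\sigma)^j\le\frac{L_r}{1-\gamma\beta_\sigma}
\]
by the small-gain condition \eqref{eqn:small_gain}. Uniform convergence preserves a pointwise inequality $|V_k(b)-V_k(b')|\le L\,\dW(b,b')$ with a fixed $L$, so $V^*$ inherits the constant $L_V(\sigma)$.

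The main obstacle I expect is technical rather than conceptual. I need the duality step to be legitimate: the iterates should be measurable and bounded, and $\cK_\sigma(\cdot\mid b,u)$ should have finite first moment on $\cB_{\mathrm{reach}}$. Both follow from compactness of $\cB_{\mathrm{reach}}$, and the Lipschitz iterates are continuous, hence measurable. I also need the fixed point of $T$ to coincide with the $V^*$ defined through policies. The paper defines $V^*$ directly by \eqref{eq:bellman_kernel}, so I would simply invoke uniqueness of the bounded fixed point of the contraction $T$.
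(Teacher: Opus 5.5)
Your proposal is correct and follows essentially the same route as the paper: a one-step Lipschitz propagation bound for the Bellman operator via Lemma~\ref{lem:reward_lipschitz} and the Wasserstein duality estimate, then value iteration from a constant function and passage to the uniform limit. Tracking $L_k=L_r\sum_{j<k}(\gamma\beta_\sigma)^j$ upward from $L_0=0$, instead of showing that the class of $L_V(\sigma)$-Lipschitz functions is invariant under $T$, is only a cosmetic difference.
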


\begin{remark}[$\gamma\beta_\sigma<1$ as a design target]
Both quantities in the condition $\gamma\beta_\sigma<1$ are, in a precise
sense, under the designer's control: $\gamma$ is fixed by the problem
formulation, and $\beta_\sigma$ is the modulus of the chosen estimator
(Assumption~\ref{ass:kernel_lipschitz}), which can often be reduced by
filter design (e.g., increasing observation informativeness or restricting
$\cB_{\mathrm{reach}}$ to a smaller invariant set). The condition is
therefore best read as a joint design target relating discounting and
estimator stability, not as an independent property of the physical
environment.
\end{remark}

\begin{proof}
Define the Bellman operator \(T\) by
\[
  (TV)(b)
  :=
  \sup_{u\in\cU}
  \left\{
  \rho(b,u)
  +
  \gamma
  \int_{\cB_{\mathrm{reach}}}V(b_+)\,
  \cK_\sigma(db_+\mid b,u)
  \right\}.
\]
Let \(V\) be \(L\)-Lipschitz on \(\cB_{\mathrm{reach}}\) with respect to
\(\dW\). For any \(b,b'\in\cB_{\mathrm{reach}}\),
\[
\begin{aligned}
  &|(TV)(b)-(TV)(b')| \\
  \le
  &\sup_{u\in\cU}
  \Bigg|
  \rho(b,u)-\rho(b',u) 
  +\gamma
  \int V\,d\cK_\sigma(\cdot\mid b,u) \\
  &\qquad \qquad -
  \gamma
  \int V\,d\cK_\sigma(\cdot\mid b',u)
  \Bigg| .
\end{aligned}
\]

By Lemma~\ref{lem:reward_lipschitz},
\[
  |\rho(b,u)-\rho(b',u)|
  \le
  L_r\,\dW(b,b').
\]
Since \(V\) is \(L\)-Lipschitz and \(W_1^{\cB}\) uses \(\dW\) as its ground
metric,
\[
\begin{aligned}
  &\left|
  \int V\,d\cK_\sigma(\cdot\mid b,u)
  -
  \int V\,d\cK_\sigma(\cdot\mid b',u)
  \right| \\
  &\quad \le
  L\,
  W_1^{\cB}\!\left(
  \cK_\sigma(\cdot\mid b,u),
  \cK_\sigma(\cdot\mid b',u)
  \right).  
\end{aligned}
\]
Assumption~\ref{ass:kernel_lipschitz} then gives
\[
  |(TV)(b)-(TV)(b')|
  \le
  \left(L_r+\gamma L\beta_\sigma\right)\dW(b,b').
\]
Thus \(T\) maps \(L\)-Lipschitz functions to
\((L_r+\gamma L\beta_\sigma)\)-Lipschitz functions.

Now choose
\[
  L=L_V(\sigma):=\frac{L_r}{1-\gamma\beta_\sigma}.
\]
Because \(\gamma\beta_\sigma<1\),
\[
  L_r+\gamma L_V(\sigma)\beta_\sigma
  =
  L_V(\sigma).
\]
Therefore \(T\) maps the class of \(L_V(\sigma)\)-Lipschitz functions into
itself.

Starting value iteration from any constant function \(V_0\), which is
\(L_V(\sigma)\)-Lipschitz, all iterates \(V_{k+1}=TV_k\) are
\(L_V(\sigma)\)-Lipschitz. Since \(T\) is a \(\gamma\)-contraction in
supremum norm for discounted rewards, \(V_k\) converges uniformly to the
unique fixed point \(V^*\). A uniform limit of functions with a common
Lipschitz constant has the same Lipschitz constant. Hence,
\[
  |V^*(b)-V^*(b')|
  \le
  L_V(\sigma)\,\dW(b,b')
\]
for all \(b,b'\in\cB_{\mathrm{reach}}\), as claimed.
\end{proof}

\begin{corollary}[Uniform Lipschitz bound on $Q^*$]
\label{cor:q_lipschitz}
Under the hypotheses of Theorem~\ref{thm:value_lipschitz},
\[
  |Q^*(b,u)-Q^*(b',u)|
  \le
  L_V(\sigma)\,\dW(b,b')
\]
for all $b,b'\in\cB_{\mathrm{reach}}$ and all $u\in\cU$.
\end{corollary}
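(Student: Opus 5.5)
The plan is to work directly from the definition \eqref{eq:qstar} of $Q^*$ and reuse the two estimates already established in the proof of Theorem~\ref{thm:value_lipschitz}. There is no supremum over controls to handle here, so the argument should be a single-step version of the one-step Bellman bound.

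Fix $u\in\cU$ and $b,b'\in\cB_{\mathrm{reach}}$. By \eqref{eq:qstar},
\[
  Q^*(b,u)-Q^*(b',u)
  =
  \bigl[\rho(b,u)-\rho(b',u)\bigr]
  +\gamma\Bigl[\int V^*\,d\cK_\sigma(\cdot\mid b,u)-\int V^*\,d\cK_\sigma(\cdot\mid b',u)\Bigr].
\]
I would bound the two brackets separately.

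\begin{itemize}
\item \emph{Reward term.} Lemma~\ref{lem:reward_lipschitz} gives a bound of $L_r\,\dW(b,b')$.
\item \emph{Continuation term.} Assumption~\ref{ass:kernel_lipschitz} makes both kernels probability measures supported on $\cB_{\mathrm{reach}}$. Theorem~\ref{thm:value_lipschitz} makes $V^*$ an $L_V(\sigma)$-Lipschitz function on $\cB_{\mathrm{reach}}$ with respect to $\dW$. The Kantorovich--Rubinstein duality for $W_1^{\cB}$ (ground metric $\dW$), applied to $V^*/L_V(\sigma)$, then bounds the difference of integrals by $L_V(\sigma)\,W_1^{\cB}(\cK_\sigma(\cdot\mid b,u),\cK_\sigma(\cdot\mid b',u))$. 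The kernel Lipschitz modulus reduces this further to at most $L_V(\sigma)\beta_\sigma\,\dW(b,b')$.
\end{itemize}

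Combining the two bounds gives
\[
  |Q^*(b,u)-Q^*(b',u)|\le \bigl(L_r+\gamma\beta_\sigma L_V(\sigma)\bigr)\dW(b,b').
\]
The fixed-point identity $L_r+\gamma\beta_\sigma L_V(\sigma)=L_V(\sigma)$, which holds exactly because $L_V(\sigma)=L_r/(1-\gamma\beta_\sigma)$ and $\gamma\beta_\sigma<1$, yields the claimed constant. The bound does not depend on $u$, so it holds uniformly over $\cU$.

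The only point needing care is the continuation term: the duality argument must be legitimate. This requires that $V^*$ be bounded and measurable on $\cB_{\mathrm{reach}}$, which follows from bounded rewards and from $V^*$ being Lipschitz, hence continuous. It also requires that the integrals in \eqref{eq:qstar} be taken over $\cB_{\mathrm{reach}}$, where the Lipschitz property of $V^*$ holds. The invariance clause of Assumption~\ref{ass:kernel_lipschitz} supplies exactly the second requirement. If $L_V(\sigma)=0$, the continuation term vanishes trivially, just as in the $L_r=0$ case of Lemma~\ref{lem:reward_lipschitz}.
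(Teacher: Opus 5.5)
Your proposal is correct and follows the paper's own argument: fix $u$, note that $Q^*(b,u)$ is the term inside the supremum defining $(TV^*)(b)$, reuse the reward and Wasserstein estimates from the proof of Theorem~\ref{thm:value_lipschitz} before any supremum is taken, and close with the identity $L_r+\gamma\beta_\sigma L_V(\sigma)=L_V(\sigma)$. Your remarks on measurability of $V^*$ and on the invariance clause of Assumption~\ref{ass:kernel_lipschitz} only make explicit what the paper leaves implicit.
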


\begin{proof}
Fix \(u\in\cU\). By definition~\eqref{eq:qstar}, \(Q^*(b,u)\) is exactly
the term inside the supremum defining \((TV^*)(b)\) in the proof of
Theorem~\ref{thm:value_lipschitz}, evaluated at this fixed \(u\). Since
\(V^*\) is \(L_V(\sigma)\)-Lipschitz (Theorem~\ref{thm:value_lipschitz}),
the same estimate used there --- before taking the supremum over \(u\) ---
gives, for this fixed \(u\),
\[
\begin{aligned}
  |Q^*(b,u)-Q^*(b',u)|
  &\le
  \bigl(L_r+\gamma L_V(\sigma)\beta_\sigma\bigr)\dW(b,b') \\
  &=
  L_V(\sigma)\dW(b,b'),
\end{aligned}
\]
where the equality is~\eqref{eqn:small_gain} applied as in
Theorem~\ref{thm:value_lipschitz}. Since \(u\in\cU\) was arbitrary, the
bound holds uniformly over \(u\in\cU\).
\end{proof}

The constant \(\beta_\sigma\) should be interpreted as a
belief-kernel stability modulus of the nominal planning model. The induced
constant \(L_V(\sigma)=L_r/(1-\gamma\beta_\sigma)\) is therefore not a
universal monotone function of sensor noise. In some systems, increasing
\(\sigma\) reduces observation distinguishability and permits coarser
reliable cells; in others, degraded observations can make the belief
dynamics less stable and increase \(\beta_\sigma\). The theory separates
these effects instead of imposing a universal monotonicity claim.

\subsection{FRR definition}

The previous subsection gives conditions under which \(Q^*(b,u)\) varies
in a Lipschitz-controlled manner over the reachable belief set. We now use
this smoothness to define belief-space cells whose internal variation is
small enough that a single representative belief can be used for
decision-making.

\begin{definition}[Decision diameter]
\label{def:decision_diameter}
For \(C\subseteq\cB_{\mathrm{reach}}\), define the decision diameter of
\(C\) by
\[
  \Delta_Q(C)
  :=
  \sup_{b,b'\in C}\sup_{u\in\cU}
  |Q^*(b,u)-Q^*(b',u)|.
\]
\end{definition}

The quantity \(\Delta_Q(C)\) measures the largest possible change in
action-value inside the cell \(C\), uniformly over admissible controls. A
small decision diameter therefore means that all beliefs in the cell are
nearly interchangeable for action selection.

\begin{definition}[Finite Reliability Representation]
\label{def:frr}
Fix the physical noise-floor parameter \(\sigma\)
(Section~\ref{sec:setup}) and a decision tolerance \(\varepsilon>0\), the
maximum decision diameter allowed within a cell. A finite collection
\(\{C_i,\hat b_i\}_{i=1}^N\) is a Finite Reliability Representation
(FRR) at level \((\sigma,\varepsilon)\) if
\[
  \cB_{\mathrm{reach}}\subseteq \bigcup_{i=1}^N C_i,
  \qquad
  \hat b_i\in C_i,
\]
and each cell satisfies
\[
  \Delta_Q(C_i)\le \varepsilon,
  \qquad
  i=1,\dots,N.
\]
The sets \(C_i\) are called reliability cells, and \(\hat b_i\) is the
representative belief for cell \(C_i\). The cells need not be disjoint:
an FRR is a cover of the reachable belief set, not a quotient space or an
equivalence-class partition.
\end{definition}

The following theorem gives a sufficient, metric-based way to construct an
FRR: any cover whose cells are small enough in Wasserstein diameter
automatically has small decision diameter.

\begin{theorem}[Constructive FRR cover]
\label{thm:frr_construction}
Suppose the hypotheses of Theorem~\ref{thm:value_lipschitz} hold. If each
cell satisfies
\[
  \diam_{\dW}(C_i)
  :=
  \sup_{b,b'\in C_i}\dW(b,b')
  \le
  \frac{\varepsilon}{L_V(\sigma)},
\]
then \(\{C_i,\hat b_i\}_{i=1}^N\) is an FRR at level
\((\sigma,\varepsilon)\).
\end{theorem}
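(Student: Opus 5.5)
The plan is a direct combination of Corollary~\ref{cor:q_lipschitz} with Definition~\ref{def:frr}. Two things must be checked: the covering and representative conditions, and the decision-diameter bound on each cell.

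The covering and representative conditions are not produced by the metric hypothesis. I will read them as part of what ``cover'' means in the statement: \(\cB_{\mathrm{reach}}\subseteq\bigcup_i C_i\) and \(\hat b_i\in C_i\). Nothing analytic is required here, only the observation that these structural conditions are assumed. I also note that the cells should be taken as subsets of \(\cB_{\mathrm{reach}}\), as in Definition~\ref{def:decision_diameter}. If a cell extends outside \(\cB_{\mathrm{reach}}\), I replace it by its intersection with \(\cB_{\mathrm{reach}}\). This preserves the cover of \(\cB_{\mathrm{reach}}\) and can only shrink the diameter. The one subtlety is that \(\hat b_i\) must stay in the intersection. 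I will therefore assume \(\hat b_i\in C_i\cap\cB_{\mathrm{reach}}\) whenever that intersection is nonempty, and discard empty cells.

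For the decision-diameter bound, fix a cell \(C_i\), beliefs \(b,b'\in C_i\subseteq\cB_{\mathrm{reach}}\), and a control \(u\in\cU\). The hypotheses of Theorem~\ref{thm:value_lipschitz} hold, so Corollary~\ref{cor:q_lipschitz} gives
\[
  |Q^*(b,u)-Q^*(b',u)|\le L_V(\sigma)\,\dW(b,b')\le L_V(\sigma)\diam_{\dW}(C_i)\le\varepsilon .
\]
This bound does not depend on \(u\), \(b\), or \(b'\). Taking the supremum over \(u\in\cU\) and then over \(b,b'\in C_i\) yields \(\Delta_Q(C_i)\le\varepsilon\). If \(L_V(\sigma)=0\), which happens when \(L_r=0\), the same bound shows that \(Q^*(\cdot,u)\) is constant, so \(\Delta_Q(C_i)=0\). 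In that case the diameter condition is read as vacuous.

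There is no real obstacle. The only step needing care is the bookkeeping above, namely that the Lipschitz bound from Corollary~\ref{cor:q_lipschitz} applies only on \(\cB_{\mathrm{reach}}\), so each cell must lie in the reachable set. Once that is secured, every condition of Definition~\ref{def:frr} holds, and \(\{C_i,\hat b_i\}_{i=1}^N\) is an FRR at level \((\sigma,\varepsilon)\).
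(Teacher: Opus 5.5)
Your proposal is correct and matches the paper's proof. Like the paper, you apply Corollary~\ref{cor:q_lipschitz} to any $b,b'\in C_i$ and $u\in\cU$, bound $\dW(b,b')$ by $\diam_{\dW}(C_i)\le\varepsilon/L_V(\sigma)$, take suprema to get $\Delta_Q(C_i)\le\varepsilon$, and take the covering and representative conditions as given. Your extra bookkeeping (intersecting cells with $\cB_{\mathrm{reach}}$, the case $L_V(\sigma)=0$) is harmless but already implicit in the paper, since Definition~\ref{def:decision_diameter} only defines $\Delta_Q(C)$ for $C\subseteq\cB_{\mathrm{reach}}$.
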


\begin{proof}
By Corollary~\ref{cor:q_lipschitz}, for any \(b,b'\in C_i\) and any
\(u\in\cU\),
\[
  |Q^*(b,u)-Q^*(b',u)|
  \le
  L_V(\sigma)\dW(b,b').
\]
Taking the supremum over \(b,b'\in C_i\) and \(u\in\cU\) gives
\[
  \Delta_Q(C_i)
  \le
  L_V(\sigma)\diam_{\dW}(C_i).
\]
By hypothesis,
\[
  \diam_{\dW}(C_i)
  \le
  \frac{\varepsilon}{L_V(\sigma)},
\]
so \(\Delta_Q(C_i)\le\varepsilon\). Thus every cell \(C_i\) satisfies the
decision-diameter condition in Definition~\ref{def:frr}. Since the
collection covers \(\cB_{\mathrm{reach}}\) and each cell has a
representative \(\hat b_i\in C_i\), the collection is an FRR at level
\((\sigma,\varepsilon)\).
\end{proof}

\begin{corollary}[Action-noise-dependent FRR radius]
\label{cor:action_noise_frr_radius}
Let
\[
  \theta=(\sigma_y,\sigma_u)
\]
denote a combined physical noise-floor parameter, where \(\sigma_y\)
represents sensing uncertainty and \(\sigma_u\) represents action-execution
uncertainty. Write \(Q_\theta^*\) for the optimal action-value function of
the corresponding noisy POMDP model. Suppose that, on
\(\cB_{\mathrm{reach}}\), \(Q_\theta^*\) admits the certified Lipschitz
bound
\[
  |Q_\theta^*(b,u)-Q_\theta^*(b',u)|
  \le
  L_Q(\theta)\,\dW(b,b')
\]
for all \(b,b'\in\cB_{\mathrm{reach}}\) and \(u\in\cU\).
Then any cover satisfying
\[
  \diam_{\dW}(C_i)
  \le
  \frac{\varepsilon}{L_Q(\theta)}
\]
is an FRR at level \((\theta,\varepsilon)\).
\end{corollary}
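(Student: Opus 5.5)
The plan is to repeat the argument of Theorem~\ref{thm:frr_construction} verbatim, with the certified constant \(L_Q(\theta)\) in place of \(L_V(\sigma)\) and \(Q_\theta^*\) in place of \(Q^*\). Nothing about how the Lipschitz bound was obtained matters. It may come from Corollary~\ref{cor:q_lipschitz} applied to the combined kernel, from Theorem~\ref{thm:finite_q_lipschitz} after converting norms, or from direct certification. The corollary assumes the bound outright, so no kernel-contraction or small-gain condition is needed.

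I would proceed in the following order.

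\begin{enumerate}
\item Fix a cell \(C_i\) of the cover, two beliefs \(b,b'\in C_i\subseteq\cB_{\mathrm{reach}}\), and an action \(u\in\cU\).
\item Apply the certified hypothesis to get \(|Q_\theta^*(b,u)-Q_\theta^*(b',u)|\le L_Q(\theta)\,\dW(b,b')\).
\item Bound \(\dW(b,b')\) by \(\diam_{\dW}(C_i)\le \varepsilon/L_Q(\theta)\), which yields the per-pair estimate \(\le\varepsilon\).
\item Take the supremum over \(u\in\cU\) and over \(b,b'\in C_i\). This gives \(\Delta_Q(C_i)\le\varepsilon\), where \(\Delta_Q\) from Definition~\ref{def:decision_diameter} is read with \(Q_\theta^*\).
\item Verify the remaining requirements of Definition~\ref{def:frr}, namely that the cells cover \(\cB_{\mathrm{reach}}\) and that each has a representative \(\hat b_i\in C_i\). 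Both are part of the hypothesis that \(\{C_i\}\) is a cover, with representatives chosen arbitrarily in nonempty cells; empty cells can be discarded.
\end{enumerate}

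There is no real obstacle. The only care point is bookkeeping about how the objects depend on the noise parameter. The FRR level is indexed by the vector \(\theta\) instead of the scalar \(\sigma\), so the decision diameter and the FRR definition must be interpreted relative to the \(\theta\)-model's \(Q_\theta^*\) and its reachable set.

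A degenerate case also needs a remark. If \(L_Q(\theta)=0\), the radius \(\varepsilon/L_Q(\theta)\) is read as \(+\infty\). In that case \(Q_\theta^*(\cdot,u)\) is constant on \(\cB_{\mathrm{reach}}\), and any cover, including the single cell \(\cB_{\mathrm{reach}}\), is an FRR at level \((\theta,\varepsilon)\).
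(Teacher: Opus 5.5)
Your proposal is correct and matches the paper's proof. The paper also reruns the argument of Theorem~\ref{thm:frr_construction} with $L_Q(\theta)$ in place of $L_V(\sigma)$, applying the certified Lipschitz bound within each cell to obtain $\Delta_{Q_\theta}(C_i)\le L_Q(\theta)\,\diam_{\dW}(C_i)\le\varepsilon$; your remarks on choosing representatives in nonempty cells and on the degenerate case $L_Q(\theta)=0$ are harmless additions that the paper leaves implicit.
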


\begin{proof}
The proof is identical to the proof of
Theorem~\ref{thm:frr_construction}, with \(L_V(\sigma)\) replaced by the
certified action-value Lipschitz constant \(L_Q(\theta)\). For any
\(b,b'\in C_i\) and any \(u\in\cU\),
\[
  |Q_\theta^*(b,u)-Q_\theta^*(b',u)|
  \le
  L_Q(\theta)\dW(b,b').
\]
Taking the supremum over \(b,b'\in C_i\) and \(u\in\cU\) yields
\[
  \Delta_{Q_\theta}(C_i)
  \le
  L_Q(\theta)\diam_{\dW}(C_i).
\]
By hypothesis,
\[
  \diam_{\dW}(C_i)
  \le
  \frac{\varepsilon}{L_Q(\theta)},
\]
so \(\Delta_{Q_\theta}(C_i)\le\varepsilon\), and \(C_i\) satisfies the FRR
decision-diameter condition at level \((\theta,\varepsilon)\).
\end{proof}

\begin{lemma}[Monotonicity transfer from kernel modulus to $Q^*$-sensitivity]
\label{lem:beta_to_LQ_monotone}
Suppose, for a family of noise parameters \(\theta=(\sigma_y,\sigma_u)\),
Assumption~\ref{ass:kernel_lipschitz} holds with modulus \(\beta_\theta\)
and \(\gamma\beta_\theta<1\), and that \(L_r\) is independent of \(\theta\)
(Assumption~\ref{ass:regular}). Then the certified Lipschitz constant
\[
  L_Q(\theta):=\frac{L_r}{1-\gamma\beta_\theta}
\]
is nondecreasing in \(\beta_\theta\) (strictly increasing when \(L_r>0\)).
In particular, fixing \(\sigma_y\), if
\[
  \beta_{\sigma_y,\sigma_u^{(2)}}
  \le
  \beta_{\sigma_y,\sigma_u^{(1)}},
\]
then
\[
  L_Q(\sigma_y,\sigma_u^{(2)})
  \le
  L_Q(\sigma_y,\sigma_u^{(1)}).
\]
\end{lemma}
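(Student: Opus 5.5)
The argument is a one-variable monotonicity check on the map
\[
  g(\beta):=\frac{L_r}{1-\gamma\beta},
  \qquad
  \beta\in\bigl[0,\gamma^{-1}\bigr),
\]
followed by substitution of the kernel moduli. The only role of the hypothesis \(\gamma\beta_\theta<1\) is to place every \(\beta_\theta\) in this interval, where the denominator is strictly positive. The only role of \(\theta\)-independence of \(L_r\) (Assumption~\ref{ass:regular}) is to guarantee that \(L_Q(\theta)=g(\beta_\theta)\) with the \emph{same} function \(g\) for every \(\theta\). That constant \(L_Q(\theta)\) is exactly the value Lipschitz constant \(L_V\) of Theorem~\ref{thm:value_lipschitz}, which Corollary~\ref{cor:q_lipschitz} shows also controls \(Q^*\). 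Hence it is the certified constant that can be fed into Corollary~\ref{cor:action_noise_frr_radius}, and the lemma is consistent with that corollary.

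First I would establish monotonicity directly, without calculus, by comparing two values. Take \(0\le\beta_2\le\beta_1<\gamma^{-1}\). Then \(1-\gamma\beta_1\le 1-\gamma\beta_2\), and both quantities are positive. Taking reciprocals reverses the order, so
\[
  \frac{1}{1-\gamma\beta_2}\le\frac{1}{1-\gamma\beta_1}.
\]
Multiplying by \(L_r\ge 0\) gives \(g(\beta_2)\le g(\beta_1)\). This is nondecreasing monotonicity.

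For the strict statement, assume \(L_r>0\) and \(\beta_2<\beta_1\). Then the denominator inequality is strict, the reciprocal inequality is strict, and multiplication by a positive constant preserves strictness. Alternatively, one can note that
\[
  g'(\beta)=\frac{L_r\gamma}{(1-\gamma\beta)^2}>0
  \quad\text{on }[0,\gamma^{-1}).
\]
Recall that \(\gamma\in(0,1)\), so \(\gamma>0\) and the derivative is strictly positive.

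Second, for the "in particular" claim, I would fix \(\sigma_y\) and set
\[
  \beta_1:=\beta_{\sigma_y,\sigma_u^{(1)}},
  \qquad
  \beta_2:=\beta_{\sigma_y,\sigma_u^{(2)}}.
\]
Both lie in \([0,\gamma^{-1})\) by hypothesis. The assumed ordering \(\beta_2\le\beta_1\) then yields \(L_Q(\sigma_y,\sigma_u^{(2)})\le L_Q(\sigma_y,\sigma_u^{(1)})\) by the first step.

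There is no real obstacle; the main care point is bookkeeping. The comparison must be made only between parameters that both satisfy the small-gain condition \eqref{eqn:small_gain}, since otherwise the reciprocal step fails or \(L_Q\) is undefined. In addition, \(L_r\) must not vary with \(\theta\). This matters because action-execution averaging changes the reward \(\bar r_{\sigma_u}\), which is exactly why the lemma assumes \(L_r\) is independent of \(\theta\). I would add a brief remark that the lemma transfers monotonicity only from \(\beta_\theta\) to \(L_Q\). It does not assert that \(\beta_\theta\) itself is monotone in \(\sigma_u\); that ordering is a hypothesis, to be certified per model as discussed after Corollary~\ref{cor:q_lipschitz}.
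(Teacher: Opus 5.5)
Your proof is correct and is essentially the paper's argument: the paper notes that $\beta\mapsto L_r/(1-\gamma\beta)$ has derivative $\gamma L_r/(1-\gamma\beta)^2\ge 0$ on $[0,1/\gamma)$ and then evaluates this map at the two moduli, which is your alternative derivative check, and your reciprocal comparison is a calculus-free version of the same step. Your bookkeeping remarks are consistent with how the paper uses the lemma: both moduli must satisfy $\gamma\beta<1$, $L_r$ must not depend on $\theta$, and the ordering of $\beta_\theta$ in $\sigma_u$ is a hypothesis rather than a conclusion.
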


\begin{proof}
The map \(\beta\mapsto L_r/(1-\gamma\beta)\) has derivative
\(\gamma L_r/(1-\gamma\beta)^2\ge 0\) on \([0,1/\gamma)\), hence is
nondecreasing on this interval, and strictly increasing when \(L_r>0\).
The displayed implication follows immediately by evaluating this map at
\(\beta_{\sigma_y,\sigma_u^{(1)})}\) and \(\beta_{\sigma_y,\sigma_u^{(2)}}\).
\end{proof}

Combining Lemma~\ref{lem:beta_to_LQ_monotone} with
Corollary~\ref{cor:action_noise_frr_radius} gives a comparative-statics
consequence stated directly in terms of the belief-kernel modulus
\(\beta_\theta\), rather than the derived quantity \(L_Q(\theta)\).

\begin{corollary}[Kernel-modulus comparison and FRR radius]
\label{cor:kernel_modulus_frr_radius}
Fix the sensing noise level \(\sigma_y\), and consider two
action-execution noise levels \(\sigma_u^{(1)}\) and \(\sigma_u^{(2)}\)
satisfying the hypotheses of Lemma~\ref{lem:beta_to_LQ_monotone}. If
\[
  \beta_{\sigma_y,\sigma_u^{(2)}}
  \le
  \beta_{\sigma_y,\sigma_u^{(1)}},
\]
then the admissible FRR cell diameter under \(\sigma_u^{(2)}\) is at least
as large as that under \(\sigma_u^{(1)}\):
\[
  \frac{\varepsilon/L_Q(\sigma_y,\sigma_u^{(2)})}
       {\varepsilon/L_Q(\sigma_y,\sigma_u^{(1)})}
  =
  \frac{L_Q(\sigma_y,\sigma_u^{(1)})}
       {L_Q(\sigma_y,\sigma_u^{(2)})}
  \ge 1.
\]
Thus a belief-kernel modulus that does not increase with action-execution
noise permits coarser FRR cells at the same decision tolerance.
\end{corollary}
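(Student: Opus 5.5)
The claim follows by chaining Lemma~\ref{lem:beta_to_LQ_monotone} with Corollary~\ref{cor:action_noise_frr_radius}. I will set $\theta_j=(\sigma_y,\sigma_u^{(j)})$ for $j=1,2$ and first record that the hypotheses of Lemma~\ref{lem:beta_to_LQ_monotone} hold at both noise levels: Assumption~\ref{ass:kernel_lipschitz} holds with modulus $\beta_{\theta_j}$, $\gamma\beta_{\theta_j}<1$, and $L_r$ is common to both. Theorem~\ref{thm:value_lipschitz} and Corollary~\ref{cor:q_lipschitz} (applied to the model at $\theta_j$) then show that $L_Q(\theta_j)=L_r/(1-\gamma\beta_{\theta_j})$ is a valid certified action-value Lipschitz constant in the sense required by Corollary~\ref{cor:action_noise_frr_radius}. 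Consequently, any cover with $\dW$-diameter at most $\varepsilon/L_Q(\theta_j)$ is an FRR at level $(\theta_j,\varepsilon)$. This identifies $\varepsilon/L_Q(\theta_j)$ as the admissible cell diameter referred to in the statement.

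Next, I will apply Lemma~\ref{lem:beta_to_LQ_monotone} to the hypothesis $\beta_{\theta_2}\le\beta_{\theta_1}$. Since $\beta\mapsto L_r/(1-\gamma\beta)$ is nondecreasing on $[0,1/\gamma)$, this gives $L_Q(\theta_2)\le L_Q(\theta_1)$. Taking reciprocals and multiplying by $\varepsilon>0$ yields $\varepsilon/L_Q(\theta_2)\ge\varepsilon/L_Q(\theta_1)$. Writing the ratio, the factors of $\varepsilon$ cancel, which gives the displayed identity and shows the ratio is at least $1$.

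The only real care point is positivity. The ratio and reciprocals require $L_Q(\theta_j)>0$, which holds exactly when $L_r>0$, because $1-\gamma\beta_{\theta_j}>0$. I will treat $L_r=0$ separately. In that case $\rho$ is independent of $b$ (Lemma~\ref{lem:reward_lipschitz}), so Theorem~\ref{thm:value_lipschitz} makes $Q^*$ constant in $b$ and every cell diameter is admissible. The radius $\varepsilon/L_Q$ is then $+\infty$ at both noise levels, and the comparison holds in the extended sense, so I will note this convention rather than divide by zero. The final sentence of the statement is then just an interpretation of this inequality: a larger admissible diameter allows coarser reliability cells at the same tolerance $\varepsilon$.
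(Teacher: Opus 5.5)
Your proposal is correct and follows the paper's proof. You invoke Lemma~\ref{lem:beta_to_LQ_monotone} to get $L_Q(\sigma_y,\sigma_u^{(2)})\le L_Q(\sigma_y,\sigma_u^{(1)})$, divide, and read the resulting ratio as the ratio of admissible diameters from Corollary~\ref{cor:action_noise_frr_radius}. Your extra checks make explicit two points the paper leaves tacit when it divides by $L_Q(\sigma_y,\sigma_u^{(2)})>0$: that $L_r/(1-\gamma\beta_\theta)$ is a valid certified constant via Theorem~\ref{thm:value_lipschitz} and Corollary~\ref{cor:q_lipschitz}, and what happens in the degenerate case $L_r=0$.
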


\begin{proof}
By Lemma~\ref{lem:beta_to_LQ_monotone}, the hypothesis
\(\beta_{\sigma_y,\sigma_u^{(2)}}\le\beta_{\sigma_y,\sigma_u^{(1)}}\)
gives \(L_Q(\sigma_y,\sigma_u^{(2)})\le L_Q(\sigma_y,\sigma_u^{(1)})\).
Dividing both sides of this inequality by
\(L_Q(\sigma_y,\sigma_u^{(2)})>0\) gives
\[
  1
  \le
  \frac{L_Q(\sigma_y,\sigma_u^{(1)})}
       {L_Q(\sigma_y,\sigma_u^{(2)})},
\]
which, by Corollary~\ref{cor:action_noise_frr_radius}, is exactly the
ratio of admissible FRR cell diameters under \(\sigma_u^{(2)}\) versus
\(\sigma_u^{(1)}\).
\end{proof}

\begin{remark}[Scope of the kernel-modulus comparison]
\label{rem:action_noise_scope}
Corollary~\ref{cor:kernel_modulus_frr_radius} should not be read as a
universal monotonicity theorem. Increasing action-execution uncertainty
can make the control problem harder and can reduce the optimal value, and
the hypothesis \(\beta_{\sigma_y,\sigma_u^{(2)}}\le
\beta_{\sigma_y,\sigma_u^{(1)}}\) is not implied merely by
Theorem~\ref{thm:finite_actuator_data_processing}: that theorem shows
actuator degradation contracts \emph{command} distinguishability
\(d_{\mathrm{exec},\sigma_u}\), whereas \(\beta_\theta\) measures the
sensitivity of the belief-transition kernel to perturbations of the
current belief. These are related but distinct objects, and the transfer
from one to the other is model-dependent.

As with the discounted sensitivity condition \(\gamma\beta_\sigma<1\)
(Section~\ref{sec:frr}), the hypothesis
\(\beta_{\sigma_y,\sigma_u^{(2)}}\le\beta_{\sigma_y,\sigma_u^{(1)}}\) is a
property of the \emph{designed estimator} evaluated at two operating
points, not a fact about the physical actuator. Since \(\beta_\theta\) is
induced by the filter structure and noise model the designer chooses
(Remark~\ref{rm:model-based-inter}), whether it is nonincreasing
in \(\sigma_u\) depends on that design: a fixed filter re-analyzed at
higher actuation noise may see \(\beta_\theta\) increase rather than
decrease, while a filter re-tuned for the higher-noise regime (e.g., with
adjusted gain or process-noise covariance) may be designed specifically to
keep \(\beta_\theta\) from growing. The corollary states only that,
\emph{if} the estimator design achieves this, then larger belief cells can
be certified at the same decision tolerance. This is the mechanism
observed in the finite-POMDP numerical example: as commanded actions
become less distinguishable in execution, the action-value variation
decreases and fewer reliability cells are needed.
\end{remark}

\subsection{Policy sufficiency}

The previous construction shows how to obtain cells with small decision
diameter. The next result shows why decision diameter is the correct
quantity: once every cell has small \(Q^*\)-variation uniformly over
actions, a policy that chooses actions only from cell representatives is
near-optimal.

\begin{theorem}[Policy sufficiency]
\label{thm:policy_sufficiency}
Suppose Assumption~\ref{ass:maximizers} holds, and let
\(\{C_i,\hat b_i\}_{i=1}^N\) be any FRR at level
\((\sigma,\varepsilon)\). Define the cell-index map
\[
\begin{aligned}
    i&:\cB_{\mathrm{reach}}\to\{1,\dots,N\}, \\
  i(b)&:=\min\{j\in\{1,\dots,N\}: b\in C_j\},
\end{aligned}
\]
which is well-defined since \(\{C_i\}_{i=1}^N\) covers
\(\cB_{\mathrm{reach}}\); when \(b\) lies in more than one cell, \(i(b)\)
selects the smallest such index. Define the cell-constant policy
\[
  \bar\pi(b)
  \in
  \arg\max_{u\in\cU} Q^*(\hat b_{i(b)},u).
\]
Then
\[
  \norm{V^*-V^{\bar\pi}}_\infty
  \le
  \frac{2\varepsilon}{1-\gamma}.
\]
\end{theorem}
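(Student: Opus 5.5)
The plan is the standard one-step-greedy argument. First I show that $\bar\pi$ is $2\varepsilon$-greedy with respect to $Q^*$ at every reachable belief. Then I convert this per-step loss into a value bound through the policy evaluation operator.

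\emph{Step 1 (pointwise greedy gap).} Fix $b\in\cB_{\mathrm{reach}}$, write $i=i(b)$, and let $\bar u=\bar\pi(b)$. Assumption~\ref{ass:maximizers} ensures that both $\bar u$ and some $u^*\in\arg\max_u Q^*(b,u)$ exist. Because $b,\hat b_i\in C_i$ and $\Delta_Q(C_i)\le\varepsilon$, the chain
\[
  Q^*(b,\bar u)\ge Q^*(\hat b_i,\bar u)-\varepsilon
  \ge Q^*(\hat b_i,u^*)-\varepsilon
  \ge Q^*(b,u^*)-2\varepsilon = V^*(b)-2\varepsilon
\]
holds. The first and last inequalities use the decision diameter, which is uniform over actions. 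The middle inequality uses optimality of $\bar u$ at $\hat b_i$. The identity $V^*(b)=\sup_u Q^*(b,u)$ comes from \eqref{eq:bellman_kernel}--\eqref{eq:qstar}.

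\emph{Step 2 (propagation).} Define the policy operator
\[
  (T^{\bar\pi}V)(b)=\rho(b,\bar\pi(b))+\gamma\int V\,d\cK_\sigma(\cdot\mid b,\bar\pi(b)).
\]
By invariance of $\cB_{\mathrm{reach}}$ (Assumption~\ref{ass:kernel_lipschitz}, or more simply the definition of $\cB_{\mathrm{reach}}$ as containing all reachable beliefs), this operator acts on bounded functions on $\cB_{\mathrm{reach}}$. It is a $\gamma$-contraction in sup norm, and its fixed point is $V^{\bar\pi}$. Step 1 says exactly $T^{\bar\pi}V^*\ge V^*-2\varepsilon$. Since $T^{\bar\pi}V^*=V^*$ fails in general, I will use the decomposition
\[
  V^*-V^{\bar\pi}=(V^*-T^{\bar\pi}V^*)+(T^{\bar\pi}V^*-T^{\bar\pi}V^{\bar\pi}).
\]
Taking sup norms gives $\norm{V^*-V^{\bar\pi}}_\infty\le 2\varepsilon+\gamma\norm{V^*-V^{\bar\pi}}_\infty$. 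Rearranging gives the claim, since $V^*\ge V^{\bar\pi}$ and all values are bounded because $r$ is bounded. Alternatively, I can use monotonicity: $V^*-2\varepsilon\sum_{t<k}\gamma^t\le (T^{\bar\pi})^kV^*\to V^{\bar\pi}$.

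\emph{Main obstacle.} The delicate point is measurability rather than the inequality. The tie-breaking rule $i(b)=\min\{j:b\in C_j\}$ makes the cell index measurable, provided the cells are Borel. For $\bar\pi$ to be a measurable map, and hence for $V^{\bar\pi}$ and the integral against $\cK_\sigma$ to be well-defined, it suffices to fix one maximizer $u_i\in\arg\max_u Q^*(\hat b_i,u)$ per cell. That makes $\bar\pi$ piecewise constant on finitely many measurable pieces. I would state this choice explicitly, together with the implicit assumption that the cells are Borel. I would also note that $V^{\bar\pi}$ solves its own Bellman equation with the same kernel $\cK_\sigma$, which is standard for stationary Markov policies on the belief MDP.
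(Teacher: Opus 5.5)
Your proposal is correct and follows the paper's proof: the same chain gives $Q^*(b,\bar\pi(b))\ge V^*(b)-2\varepsilon$, and your split $V^*-V^{\bar\pi}=(V^*-T^{\bar\pi}V^*)+(T^{\bar\pi}V^*-T^{\bar\pi}V^{\bar\pi})$ is exactly the paper's $V^*-Q^*(b,\bar\pi(b))+Q^*(b,\bar\pi(b))-V^{\bar\pi}(b)$ decomposition followed by rearrangement. Your added care is a tightening of points the paper leaves implicit: fixing one maximizer per (Borel) cell so that $\bar\pi$ is measurable, and noting $V^*\ge V^{\bar\pi}$ so the one-sided supremum equals the sup norm.
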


\begin{proof}
Fix \(b\in\cB_{\mathrm{reach}}\), and let \(i=i(b)\). Since
\(b,\hat b_i\in C_i\) and \(C_i\) has decision diameter at most
\(\varepsilon\),
\[
  |Q^*(b,u)-Q^*(\hat b_i,u)|
  \le
  \varepsilon
  \qquad
  \forall u\in\cU .
\]
Let
\[
  u^*(b)\in\arg\max_{u\in\cU} Q^*(b,u)
\]
be an optimal action at \(b\). By the definition of \(\bar\pi\),
\[
  Q^*(\hat b_i,\bar\pi(b))
  \ge
  Q^*(\hat b_i,u^*(b)).
\]
Combining the decision-diameter bound (applied once at \(u=\bar\pi(b)\)
and once at \(u=u^*(b)\)) with the defining optimality property of
\(\bar\pi\) at \(\hat b_i\) therefore gives:
\[
\begin{aligned}
  &Q^*(b,\bar\pi(b)) \\
  &\quad\ge
  Q^*(\hat b_i,\bar\pi(b))-\varepsilon
  &&\text{(diameter bound, \(u=\bar\pi(b)\))}\\
  &\quad\ge
  Q^*(\hat b_i,u^*(b))-\varepsilon
  &&\text{(optimality of \(\bar\pi(b)\) at \(\hat b_i\))}\\
  &\quad\ge
  Q^*(b,u^*(b))-2\varepsilon
  &&\text{(diameter bound, \(u=u^*(b)\))}\\
  &\quad=
  V^*(b)-2\varepsilon .
  &&\text{(definition of \(V^*\))}
\end{aligned}
\]
Thus \(\bar\pi\) is \(2\varepsilon\)-greedy with respect to \(Q^*\):
\[
  V^*(b)-Q^*(b,\bar\pi(b))\le 2\varepsilon
  \qquad
  \forall b\in\cB_{\mathrm{reach}}.
\]

Now compare \(V^*\) and \(V^{\bar\pi}\). For any \(b\),
\[
\begin{aligned}
  &V^*(b)-V^{\bar\pi}(b) \\
  &=
  V^*(b)-Q^*(b,\bar\pi(b)) +
  Q^*(b,\bar\pi(b))-V^{\bar\pi}(b) \\
  &\le
  2\varepsilon
  +
  \gamma
  \int_{\cB_{\mathrm{reach}}}
  \bigl(V^*(b_+)-V^{\bar\pi}(b_+)\bigr)
  \cK_\sigma(db_+\mid b,\bar\pi(b)) \\
  &\le
  2\varepsilon
  +
  \gamma\norm{V^*-V^{\bar\pi}}_\infty .
\end{aligned}
\]
Taking the supremum over \(b\in\cB_{\mathrm{reach}}\) gives
\[
  \norm{V^*-V^{\bar\pi}}_\infty
  \le
  2\varepsilon
  +
  \gamma\norm{V^*-V^{\bar\pi}}_\infty .
\]
Rearranging yields
\[
  \norm{V^*-V^{\bar\pi}}_\infty
  \le
  \frac{2\varepsilon}{1-\gamma}.
\]
\end{proof}

Theorem~\ref{thm:policy_sufficiency} is the main robustness property of
the framework. Its proof uses only the decision-diameter bound
\(\Delta_Q(C_i)\le\varepsilon\); it never refers to how a cover achieving
this bound was constructed. Consequently, the \(2\varepsilon/(1-\gamma)\)
guarantee transfers unchanged to any FRR cover, whether obtained from an
analytic Lipschitz bound (Theorem~\ref{thm:frr_construction}), a
linear-Gaussian approximation, a finite-POMDP simplex cover
(Section~\ref{sec:finite_pomdp}), a particle-filter estimate, or direct
empirical certification of \(Q^*\)-variation
(Section~\ref{subsec:numerical_certification}).

%============================================================================
\section{Reliability Entropy and Usable Policy Capacity}
\label{sec:entropy}
%============================================================================

The policy-sufficiency theorem shows that an FRR cover is enough to obtain
a certified near-optimal cell-constant policy. We now ask how many cells are
needed. This gives a representation-complexity measure for belief-space
decision-making under a specified physical noise floor and decision
tolerance. This question is also the natural point of contact between FRR
and learned belief representations: if a compact belief space can be
covered efficiently, an even more efficient cover may be obtainable by
first mapping beliefs into a lower-dimensional learned latent space and
covering there instead. Section~\ref{subsec:encoder_capacity} makes this
precise, giving both a constructive FRR cover built from an encoder-critic
pair and an explicit condition under which the resulting reliability
entropy is provably smaller than that of the raw belief space.

\begin{definition}[Reliability entropy]
\label{def:entropy}
Let \(N_{\mathrm{FRR}}(\sigma,\varepsilon)\) be the smallest number of cells
in an FRR cover of \(\cB_{\mathrm{reach}}\) at level
\((\sigma,\varepsilon)\). Define
\[
  \cH_{\mathrm{FRR}}(\sigma,\varepsilon)
  :=
  \log N_{\mathrm{FRR}}(\sigma,\varepsilon).
\]
\end{definition}

This quantity is closer to a rate-distortion function or metric entropy
than to Shannon entropy. It counts the logarithm of the minimum number of
decision-relevant belief messages that must be preserved at tolerance
\(\varepsilon\) under the physical noise-floor parameter \(\sigma\).

\begin{proposition}[Metric-entropy upper bound]
\label{prop:entropy_bound}
Suppose the hypotheses of Theorem~\ref{thm:value_lipschitz} hold. Let
\(N_{\cB}(r)\) denote the minimum number of \(\dW\)-balls of radius \(r\)
needed to cover \(\cB_{\mathrm{reach}}\). Then
\[
  N_{\mathrm{FRR}}(\sigma,\varepsilon)
  \le
  N_{\cB}\!\left(\frac{\varepsilon}{2L_V(\sigma)}\right).
\]
Suppose, in addition, that \(\cB_{\mathrm{reach}}\) has diameter
\[
  D_B:=\sup_{b,b'\in\cB_{\mathrm{reach}}}\dW(b,b')
\]
and effective covering dimension \(q\), in the sense that
\(N_{\cB}(r)\lesssim(D_B/r)^q\) as \(r\to 0\), with a constant depending
only on \(q\) and the geometry of \(\cB_{\mathrm{reach}}\), not on \(r\).
Here \(\lesssim\) hides this constant, since only the exponent \(q\) and
the scaling in \(1/\varepsilon\) are needed below, not its exact value.
Then
\[
  N_{\mathrm{FRR}}(\sigma,\varepsilon)
  \lesssim
  \left(
  \frac{2L_V(\sigma)D_B}{\varepsilon}
  \right)^q.
\]
\end{proposition}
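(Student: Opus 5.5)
The plan is to reduce the statement to Theorem~\ref{thm:frr_construction}, using the fact that a ball of radius $r$ has diameter at most $2r$. Set
\[
  r:=\frac{\varepsilon}{2L_V(\sigma)}.
\]
Take a minimal cover of $\cB_{\mathrm{reach}}$ by $N_{\cB}(r)$ $\dW$-balls $B(c_i,r)$. If the centers $c_i$ are not required to lie in $\cB_{\mathrm{reach}}$, the cells and their representatives need some care; I return to this below.

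Define the cells $C_i:=B(c_i,r)\cap\cB_{\mathrm{reach}}$ and discard any that are empty. By the triangle inequality,
\[
  \dW(b,b')\le \dW(b,c_i)+\dW(c_i,b')\le 2r=\frac{\varepsilon}{L_V(\sigma)}
  \qquad \forall\, b,b'\in C_i,
\]
so $\diam_{\dW}(C_i)\le \varepsilon/L_V(\sigma)$. For each nonempty cell, pick any representative $\hat b_i\in C_i$. Since the balls cover $\cB_{\mathrm{reach}}$, the $C_i$ do too. Theorem~\ref{thm:frr_construction}, whose hypotheses are those of Theorem~\ref{thm:value_lipschitz} and are assumed here, then shows that $\{C_i,\hat b_i\}$ is an FRR at level $(\sigma,\varepsilon)$. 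It has at most $N_{\cB}(r)$ cells, so by the minimality in Definition~\ref{def:entropy},
\[
  N_{\mathrm{FRR}}(\sigma,\varepsilon)\le N_{\cB}\!\left(\frac{\varepsilon}{2L_V(\sigma)}\right).
\]
Note that this argument never needs the centers $c_i$ to be reachable beliefs, because the representatives are chosen inside the intersected cells. This is the only subtle point; the factor $2$ in the radius is exactly what absorbs the triangle inequality. One further detail: $N_{\cB}(r)$ is finite because $\cB_{\mathrm{reach}}$ is compact. If $L_V(\sigma)=0$, then $Q^*(\cdot,u)$ is constant in $b$ for every $u$, a single cell suffices, and the bound holds trivially; I treat this case separately.

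For the second claim, substitute $r=\varepsilon/(2L_V(\sigma))$ into the covering-dimension hypothesis $N_{\cB}(r)\lesssim (D_B/r)^q$. This gives
\[
  N_{\mathrm{FRR}}(\sigma,\varepsilon)\lesssim\left(\frac{2L_V(\sigma)D_B}{\varepsilon}\right)^q,
\]
with the same hidden constant. The hypothesis is stated as $r\to0$, so the bound is asymptotic as $\varepsilon\to0$ for fixed $\sigma$. For larger $\varepsilon$ I would either enlarge the constant or note that $N_{\cB}$ is nonincreasing in $r$.

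I expect no real obstacle. The main care needed is the representative-in-cell issue above and keeping the factor-of-two bookkeeping between radius and diameter consistent.
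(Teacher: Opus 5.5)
Your proposal is correct and follows the paper's proof. Both set $r=\varepsilon/(2L_V(\sigma))$, use that each radius-$r$ ball has $\dW$-diameter at most $2r=\varepsilon/L_V(\sigma)$, invoke Theorem~\ref{thm:frr_construction}, and then substitute $r$ into the covering-dimension scaling. Your extra care is a small refinement the paper's proof leaves implicit: intersecting the balls with $\cB_{\mathrm{reach}}$ so representatives lie in the cells and the Lipschitz bound applies, and handling $L_V(\sigma)=0$ separately.
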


\begin{proof}
Let
\[
  r=\frac{\varepsilon}{2L_V(\sigma)}.
\]
By definition of \(N_{\cB}(r)\), the reachable belief set can be covered by
\(N_{\cB}(r)\) many \(\dW\)-balls of radius \(r\). Each such ball has
\(\dW\)-diameter at most \(2r\), hence
\[
  \diam_{\dW}(C_i)\le 2r
  =
  \frac{\varepsilon}{L_V(\sigma)}.
\]
By Theorem~\ref{thm:frr_construction}, each ball is therefore a valid FRR
cell. Hence
\[
  N_{\mathrm{FRR}}(\sigma,\varepsilon)
  \le
  N_{\cB}\!\left(\frac{\varepsilon}{2L_V(\sigma)}\right).
\]
For the second claim, substitute \(r=\varepsilon/(2L_V(\sigma))\) into the
covering-dimension scaling hypothesis \(N_{\cB}(r)\lesssim(D_B/r)^q\)
stated above, which gives
\[
\begin{aligned}
  N_{\mathrm{FRR}}(\sigma,\varepsilon)
  &\le
  N_{\cB}\!\left(\frac{\varepsilon}{2L_V(\sigma)}\right)
  \lesssim
  \left(
  \frac{D_B}{\varepsilon/(2L_V(\sigma))}
  \right)^q \\
  & \quad = 
  \left(
  \frac{2L_V(\sigma)D_B}{\varepsilon}
  \right)^q.  
\end{aligned}
\]
\end{proof}

\begin{remark}[Scope of the covering-dimension hypothesis]
\label{rem:covering_dimension_scope}
Effective covering dimension \(q\) is a property of the metric geometry
of \(\cB_{\mathrm{reach}}\), not a requirement that the belief space be
literally finite-dimensional. It holds automatically for finite POMDPs,
with \(q=n-1\) (Corollary~\ref{cor:finite_simplex_cover}), and for
belief families confined to a finite-dimensional parametric class, such
as Gaussian beliefs with covariance restricted to a compact Riccati
reachable set. For more general continuous belief spaces, finite
covering dimension is a genuine additional hypothesis that must be
verified for the specific belief-transition model at hand; it need not
hold for an unrestricted compact set of probability measures under
\(\dW\).
\end{remark}

\begin{remark}[What can and cannot be claimed about monotonicity]
A noisier sensor decreases the distinguishability of predictive observations
by Proposition~\ref{prop:data_processing}. However, this fact alone does not
imply that \(N_{\mathrm{FRR}}(\sigma,\varepsilon)\) must decrease for every
closed-loop POMDP. The FRR cell count is governed by decision diameter, which
depends on information acquisition, executable action resolution, and
belief-kernel stability. Sensor noise may reduce observation
distinguishability while also making the belief dynamics less stable or the
value function more sensitive to belief errors. A decreasing reliability
entropy is therefore a conditional result. It is expected in regimes where the
physical noise floor collapses distinctions that are both observationally and
executably irrelevant, for example when actuation uncertainty makes distinct
commanded actions nearly indistinguishable in their effects. It should not be
claimed from sensor degradation alone. When sensing and actuation noise
increase together, a decreasing reliability entropy is more plausible
whenever the actuation-driven collapse of command distinguishability
(Corollary~\ref{cor:kernel_modulus_frr_radius}) dominates any destabilizing
effect of the sensor component on \(\beta_\theta\); this was the pattern
observed in the coupled finite-POMDP experiment
(Section~\ref{subsec:finite_pomdp_numerics}), though the coupled
particle-filter diagnostics (Section~\ref{subsec:particle_filter_numerics})
show this need not hold once sampled certification is also sensitive to
reachable-set geometry and critic approximation. This should still be read
as a regime-dependent tendency, not a monotonicity guarantee.
\end{remark}

\begin{corollary}[Usable capacity of cell-constant policies]
\label{cor:capacity}
Let \(\{C_i,\hat b_i\}_{i=1}^N\) be an FRR cover at level
\((\sigma,\varepsilon)\), and let \(\Pi_{\{C_i\}}\) be the class of
policies that are constant on the cells of this cover. Then every policy
in \(\Pi_{\{C_i\}}\) induces at most \(N\) certified belief regions, or
equivalently at most \(\log N\) nats of certified cell identity.
In particular, a minimum-cardinality FRR cover supports
\[
  \log N_{\mathrm{FRR}}(\sigma,\varepsilon)
  =
  \cH_{\mathrm{FRR}}(\sigma,\varepsilon)
\]
nats of certified reliability-cell identity. Policy architectures with
substantially larger effective capacity may still be useful for
optimization or function approximation, but any additional decision
boundaries are not certified as physically reliable at tolerance
\(\varepsilon\) unless the refined cells are themselves shown to have
decision diameter at most \(\varepsilon\).
\end{corollary}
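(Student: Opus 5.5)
The plan is to reduce the claim to a counting statement about the finitely many values a cell-constant policy can take. The key observation is that "certified cell identity" is exactly the information carried by the index map \(i:\cB_{\mathrm{reach}}\to\{1,\dots,N\}\) of Theorem~\ref{thm:policy_sufficiency}.

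\textbf{Step 1: factor the policy through the cell index.} Fix \(\pi\in\Pi_{\{C_i\}}\). Since \(\pi\) is constant on each cell, I will show it factors as \(\pi=g\circ i\) for some map \(g:\{1,\dots,N\}\to\cU\). Overlapping cells need a little care. If \(b\) lies in several cells, constancy on each of them forces a common value, so \(g(j):=\pi(\hat b_j)\) is consistent. The smallest-index selection rule then gives a well-defined \(g\) in all cases.

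\textbf{Step 2: count the certified regions.} The level sets \(\{b:\pi(b)=u\}\) are unions of sets \(i^{-1}(j)\), and there are at most \(N\) of those. So \(\pi\) distinguishes at most \(N\) certified regions. Each such region is a subset of some \(C_j\), hence has decision diameter at most \(\varepsilon\) by Definition~\ref{def:frr}.

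\textbf{Step 3: convert to an information bound.} Under any distribution on beliefs, the random index \(i(b)\) takes at most \(N\) values. Its Shannon entropy is therefore at most \(\log N\) nats, and the same holds for the mutual information between \(b\) and \(\pi(b)\), by data processing through \(g\). This gives the \(\log N\) nats of certified cell identity.

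\textbf{Step 4: specialize to a minimal cover.} Take a minimum-cardinality cover, so \(N=N_{\mathrm{FRR}}(\sigma,\varepsilon)\). Definition~\ref{def:entropy} then yields the value \(\cH_{\mathrm{FRR}}(\sigma,\varepsilon)\).

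\textbf{Step 5: the closing sentence about richer architectures.} This part is interpretive. Suppose a policy splits some cell \(C_j\) into finer regions. The guarantee of Theorem~\ref{thm:policy_sufficiency} applies to the refined partition only if it is itself an FRR, meaning each refined cell satisfies \(\Delta_Q\le\varepsilon\). That is a hypothesis to check, not a consequence of the cover.

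\textbf{Expected obstacle.} The main difficulty is making "certified cell identity" precise rather than the mathematics itself. I will define it as the entropy of, or the number of values taken by, the cell index induced by the policy. I will also handle overlapping cells through the deterministic index map \(i\), so that the count remains at most \(N\).
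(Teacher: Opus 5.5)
Your proposal is correct and follows essentially the same counting argument as the paper: a policy constant on the cells takes at most one value per cell, so it induces at most $N$ certified regions (at most $\log N$ nats), and a minimum-cardinality cover gives $\log N_{\mathrm{FRR}}(\sigma,\varepsilon)=\cH_{\mathrm{FRR}}(\sigma,\varepsilon)$. Your explicit factorization $\pi=g\circ i$ through the smallest-index map and your Shannon-entropy/data-processing reading of ``nats'' are more formal than the paper, which simply treats $\log N$ as the code length of the cell label, but they add rigor without changing the route.
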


\begin{proof}
A policy \(\pi\in\Pi_{\{C_i\}}\) is constant on each \(C_i\), so \(\pi\)
takes at most one value per cell; since there are \(N\) cells, \(\pi\)
can assign at most \(N\) distinct actions in total, indexed by the cell
label \(i\in\{1,\dots,N\}\). Thus the cover induces at most \(N\)
certified belief regions, corresponding to at most \(\log N\) nats of
cell identity. If the cover has minimum cardinality
\(N_{\mathrm{FRR}}(\sigma,\varepsilon)\), this number becomes
\[
  \log N_{\mathrm{FRR}}(\sigma,\varepsilon)
  =
  \cH_{\mathrm{FRR}}(\sigma,\varepsilon).
\]
Any additional decision boundary introduced inside a cell is not certified
by the FRR guarantee unless the resulting refined cells also satisfy the
decision-diameter condition in Definition~\ref{def:frr}.
\end{proof}

\subsection{Encoder-induced reliability covers}
\label{subsec:encoder_capacity}

Reliability entropy governs a different stage of the problem than the VC
dimension of the encoder-critic architecture, and the two should not be
conflated. VC dimension (or, more generally, the sample complexity) of
the parametric class from which \(E_\psi\) and \(q_\theta\) are drawn
governs \emph{how much training data is needed to learn} an
encoder-critic pair achieving a target approximation error
\(\eta_{\mathrm{enc}}\): a larger, higher-capacity architecture can
typically reach a smaller \(\eta_{\mathrm{enc}}\) for fixed data, or
requires more data to reach the same \(\eta_{\mathrm{enc}}\), by the
usual approximation-versus-estimation tradeoff. This is a statement about
\emph{learning} the representation.

\(\cH_{\mathrm{FRR}}(\sigma,\varepsilon)\), by contrast, governs
\emph{using} the representation once trained: given an already-learned
encoder-critic pair with achieved error \(\eta_{\mathrm{enc}}\) and latent
dimension \(d\), it measures the number of nats needed to certify a
decision-relevant belief message at tolerance \(\varepsilon\)
(Definition~\ref{def:entropy}). This quantity depends only on the
\emph{achieved} \(\eta_{\mathrm{enc}}\) and on \(d\)
(Theorem~\ref{thm:encoder_frr_construction} below), not on the VC
dimension or parameter count of the architecture that produced them: two
encoders of very different size and training cost, if they happen to
achieve the same \(\eta_{\mathrm{enc}}\) and \(d\), induce the same
certified reliability entropy.

The two notions therefore sit on opposite sides of the training/deployment
boundary. VC dimension of \((E_\psi,q_\theta)\) is what determines
whether a target \(\eta_{\mathrm{enc}}\) is learnable at all from a given
amount of data, and in particular whether \(\eta_{\mathrm{enc}}
<\varepsilon/2\) can be achieved
(Proposition~\ref{prop:encoder_floor} below) --- a training-time,
sample-complexity question. \(\cH_{\mathrm{FRR}}\) is what determines how
much certified decision capacity that trained encoder then supports at
deployment --- a rate-distortion question about the representation's
output, not its architecture. A larger encoder can therefore be
\emph{necessary} to reach the approximation floor required for
certification, while contributing nothing to \(\cH_{\mathrm{FRR}}\)
beyond what \(\eta_{\mathrm{enc}}\) and \(d\) already capture; conversely,
among encoders that all clear the \(\eta_{\mathrm{enc}}<\varepsilon/2\)
threshold, the one with the smallest latent dimension \(d\) is preferred,
regardless of how each was trained.

Suppose a learned encoder
\[
  E_\psi:\cB_{\mathrm{reach}}\to Z
\]
maps reachable beliefs into a latent space \(Z\subset\R^d\), and suppose a
latent critic \(q_\theta:Z\times\cU\to\R\) approximates the optimal
action-value function:
\[
  Q^*(b,u)\approx q_\theta(E_\psi(b),u).
\]
Define the uniform encoder-critic approximation error
\[
  \eta_{\mathrm{enc}}
  :=
  \sup_{b\in\cB_{\mathrm{reach}}}\sup_{u\in\cU}
  \left|
  Q^*(b,u)-q_\theta(E_\psi(b),u)
  \right|,
\]
and assume \(q_\theta\) is \(L_{Q,z}\)-Lipschitz in the latent variable:
\[
  |q_\theta(z,u)-q_\theta(z',u)|
  \le
  L_{Q,z}\norm{z-z'}
  \qquad
  \forall z,z'\in Z,\; u\in\cU .
\]

\begin{lemma}[Encoder-induced decision-diameter bound]
\label{lem:encoder_decision_diameter}
For any $C\subseteq\cB_{\mathrm{reach}}$,
\[
  \Delta_Q(C)
  \le
  2\eta_{\mathrm{enc}}
  +
  L_{Q,z}\,\diam\bigl(E_\psi(C)\bigr).
\]
\end{lemma}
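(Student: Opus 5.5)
The argument is a three-term triangle inequality routed through the latent critic. Fix $C\subseteq\cB_{\mathrm{reach}}$, two beliefs $b,b'\in C$, and a control $u\in\cU$. Write $z=E_\psi(b)$ and $z'=E_\psi(b')$, so that $z,z'\in E_\psi(C)$. Insert the critic values at both beliefs:
\[
\begin{aligned}
  |Q^*(b,u)-Q^*(b',u)|
  &\le
  |Q^*(b,u)-q_\theta(z,u)|
  +
  |q_\theta(z,u)-q_\theta(z',u)| \\
  &\quad+
  |q_\theta(z',u)-Q^*(b',u)|.
\end{aligned}
\]

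Each term is then bounded separately.

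\begin{itemize}
\item The first and third terms are each at most $\eta_{\mathrm{enc}}$. This follows directly from the definition of the uniform encoder--critic error, since $b,b'\in\cB_{\mathrm{reach}}$ and the supremum there is taken over all such beliefs and all controls.
\item The middle term is at most $L_{Q,z}\norm{z-z'}$, by the latent Lipschitz assumption on $q_\theta$. Because $z,z'\in E_\psi(C)$, this is at most $L_{Q,z}\,\diam(E_\psi(C))$, where the diameter is taken in the same norm on $Z$ as in the Lipschitz hypothesis.
\end{itemize}

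Adding the three bounds gives $|Q^*(b,u)-Q^*(b',u)|\le 2\eta_{\mathrm{enc}}+L_{Q,z}\diam(E_\psi(C))$. The right-hand side does not depend on $b$, $b'$ or $u$. Taking the supremum over $b,b'\in C$ and $u\in\cU$ therefore yields the claimed bound on $\Delta_Q(C)$ from Definition~\ref{def:decision_diameter}.

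Two edge cases need a remark. If $C$ is empty, the supremum is vacuous and the bound is trivial. If $\diam(E_\psi(C))=\infty$, the inequality holds trivially.

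There is no real obstacle here. The only point needing care is that the diameter of $E_\psi(C)$ and the Lipschitz constant $L_{Q,z}$ must use the same norm on $Z$. The rest is bookkeeping of the suprema.
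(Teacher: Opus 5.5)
Your proof is correct and follows the paper's argument: the same three-term triangle inequality through $q_\theta(E_\psi(b),u)$ and $q_\theta(E_\psi(b'),u)$, with the outer terms bounded by $\eta_{\mathrm{enc}}$, the middle term bounded via the latent Lipschitz constant, and a final supremum over $b,b'\in C$ and $u\in\cU$. Your remarks on the empty set, infinite diameter, and using the same norm on $Z$ are harmless refinements that the paper leaves implicit.
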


\begin{proof}
For any $b,b'\in C$ and $u\in\cU$, the triangle inequality and the
defining properties of $\eta_{\mathrm{enc}}$ and $L_{Q,z}$ give
\[
\begin{aligned}
  &|Q^*(b,u)-Q^*(b',u)| \\
  &\le
  |Q^*(b,u)-q_\theta(E_\psi(b),u)| \\
  &\quad+
  |q_\theta(E_\psi(b),u)-q_\theta(E_\psi(b'),u)| \\
  &\quad+
  |q_\theta(E_\psi(b'),u)-Q^*(b',u)| \\
  &\le
  \eta_{\mathrm{enc}}
  +
  L_{Q,z}\norm{E_\psi(b)-E_\psi(b')}
  +
  \eta_{\mathrm{enc}} .
\end{aligned}
\]
Taking the supremum over $b,b'\in C$ and $u\in\cU$ gives
\[
  \Delta_Q(C)
  \le
  2\eta_{\mathrm{enc}}
  +
  L_{Q,z}\,\diam\bigl(E_\psi(C)\bigr).
\]
\end{proof}

Lemma~\ref{lem:encoder_decision_diameter} decomposes the certification
error into two terms of different character. The first, $2\eta_{\mathrm{enc}}$,
is an approximation floor caused by the encoder and critic; it cannot be
reduced by refining the latent cover unless the learned representation
itself improves. The second, $L_{Q,z}\,\diam(E_\psi(C))$, is a covering
term that can be made arbitrarily small by shrinking cells in latent
space. This split is what makes learned encoders certifiable rather than
merely heuristic: it isolates exactly which part of the error is
representation-limited and which part is coverable.

\begin{theorem}[Learned FRR cover via encoder compression]
\label{thm:encoder_frr_construction}
Suppose $\varepsilon>2\eta_{\mathrm{enc}}$, and let
$\{B_Z(z_j,r_Z)\}_{j=1}^{M}$ be a cover of $E_\psi(\cB_{\mathrm{reach}})$
by balls in $Z$ of radius
\[
  r_Z
  =
  \frac{\varepsilon-2\eta_{\mathrm{enc}}}{2L_{Q,z}}.
\]
Define $C_j:=E_\psi^{-1}\bigl(B_Z(z_j,r_Z)\bigr)\cap\cB_{\mathrm{reach}}$
and choose any $\hat b_j\in C_j$ for each nonempty $C_j$. Then the
resulting collection $\{C_j,\hat b_j\}$ is an FRR at level
$(\sigma,\varepsilon)$ with at most $M$ cells. Consequently,
\[
  N_{\mathrm{FRR}}(\sigma,\varepsilon)
  \le
  N_Z(r_Z),
\]
where $N_Z(r)$ denotes the covering number of $E_\psi(\cB_{\mathrm{reach}})$
by $Z$-balls of radius $r$. If $Z$ has effective covering dimension $d$
and diameter $D_Z$, then
\[
  N_{\mathrm{FRR}}(\sigma,\varepsilon)
  \lesssim
  \left(
  \frac{2L_{Q,z}D_Z}
       {\varepsilon-2\eta_{\mathrm{enc}}}
  \right)^{d}.
\]
\end{theorem}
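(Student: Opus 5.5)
The argument reduces to Lemma~\ref{lem:encoder_decision_diameter} applied cell by cell, plus counting. Fix a ball $B_Z(z_j,r_Z)$ in the given cover and set $C_j=E_\psi^{-1}(B_Z(z_j,r_Z))\cap\cB_{\mathrm{reach}}$. By construction $E_\psi(C_j)\subseteq B_Z(z_j,r_Z)$. By the triangle inequality in $\R^d$, this ball has diameter at most $2r_Z$, so $\diam(E_\psi(C_j))\le 2r_Z$. Lemma~\ref{lem:encoder_decision_diameter} then gives
\[
  \Delta_Q(C_j)\le 2\eta_{\mathrm{enc}}+L_{Q,z}\cdot 2r_Z
  =2\eta_{\mathrm{enc}}+(\varepsilon-2\eta_{\mathrm{enc}})=\varepsilon .
\]
The hypothesis $\varepsilon>2\eta_{\mathrm{enc}}$ is used exactly here, to make $r_Z>0$ a legitimate radius. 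We may assume $L_{Q,z}>0$. If $L_{Q,z}=0$, then any radius works and a single ball suffices, so that degenerate case should be treated separately.

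Next we check the covering and representative conditions of Definition~\ref{def:frr}. For any $b\in\cB_{\mathrm{reach}}$ we have $E_\psi(b)\in E_\psi(\cB_{\mathrm{reach}})$, so $E_\psi(b)$ lies in some ball $B_Z(z_j,r_Z)$, and hence $b\in C_j$. Thus $\bigcup_j C_j\supseteq\cB_{\mathrm{reach}}$. Empty $C_j$ are discarded, which can only reduce the count, and each nonempty cell receives a representative $\hat b_j\in C_j$. The result is an FRR at level $(\sigma,\varepsilon)$ with at most $M$ cells. Minimizing over ball covers yields $N_{\mathrm{FRR}}(\sigma,\varepsilon)\le N_Z(r_Z)$. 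Measurability of $C_j$ is not required by Definition~\ref{def:frr}, so no assumption on $E_\psi$ beyond being a map is needed.

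For the dimension bound, we substitute $r=r_Z$ into the covering-dimension hypothesis $N_Z(r)\lesssim(D_Z/r)^d$, in the same sense as in Proposition~\ref{prop:entropy_bound}. This gives $(D_Z/r_Z)^d=\bigl(2L_{Q,z}D_Z/(\varepsilon-2\eta_{\mathrm{enc}})\bigr)^d$.

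The main point requiring care is the interpretation of the hypothesis. It should be read as applying to the image $E_\psi(\cB_{\mathrm{reach}})$ with diameter $D_Z$, or to $Z$ itself. In the latter case we need covering numbers of a subset to be controlled by those of the ambient set. Cover $Z$ by balls of radius $r_Z/2$ and keep only the balls meeting the image. Each kept ball lies inside a radius-$r_Z$ ball centered at a point of the image. This costs only a factor $2^d$, which the implicit constant in $\lesssim$ absorbs. Beyond this, the argument is a direct specialization of Theorem~\ref{thm:frr_construction} with latent diameter replacing $\dW$-diameter.
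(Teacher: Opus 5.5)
Your proposal is correct and follows the paper's route: apply Lemma~\ref{lem:encoder_decision_diameter} to each preimage cell using $\diam(E_\psi(C_j))\le 2r_Z$, verify the cover and representative conditions after discarding empty cells, and substitute $r=r_Z$ into the covering-dimension scaling. Your extra remarks cover points the paper leaves implicit: the degenerate case $L_{Q,z}=0$, and the passage from covering numbers of $Z$ to those of the image $E_\psi(\cB_{\mathrm{reach}})$, which is immediate without the $2^d$ factor if ball centers may lie anywhere in $Z$.
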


\begin{proof}
The sets $\{B_Z(z_j,r_Z)\}_{j=1}^M$ cover $E_\psi(\cB_{\mathrm{reach}})$
by construction, so $\{C_j\}_{j=1}^M$ covers $\cB_{\mathrm{reach}}$. For
each nonempty $C_j$, $\diam(E_\psi(C_j))\le 2r_Z$, since $C_j$ is the
preimage of a ball of radius $r_Z$. By
Lemma~\ref{lem:encoder_decision_diameter},
\[
  \Delta_Q(C_j)
  \le
  2\eta_{\mathrm{enc}}+L_{Q,z}\cdot 2r_Z
  =
  2\eta_{\mathrm{enc}}
  +
  (\varepsilon-2\eta_{\mathrm{enc}})
  =
  \varepsilon .
\]
Thus every nonempty $C_j$ satisfies the decision-diameter condition in
Definition~\ref{def:frr}, and the collection of nonempty cells (at most
$M$ of them) is an FRR at level $(\sigma,\varepsilon)$. The covering-number
statement follows by definition of $N_Z(r_Z)$, and the asymptotic bound
follows from the standard scaling $N_Z(r)\lesssim(D_Z/r)^d$ for a set of
covering dimension $d$ and diameter $D_Z$, evaluated at $r=r_Z$.
\end{proof}

\begin{corollary}[When encoder compression reduces certified capacity]
\label{cor:encoder_capacity_comparison}
Suppose the hypotheses of Proposition~\ref{prop:entropy_bound} and
Theorem~\ref{thm:encoder_frr_construction} both hold, so that
\[
  N_{\mathrm{FRR}}(\sigma,\varepsilon)
  \lesssim
  \min\left\{
  \left(\frac{2L_V(\sigma)D_B}{\varepsilon}\right)^{q}
  ,\;
  \left(\frac{2L_{Q,z}D_Z}{\varepsilon-2\eta_{\mathrm{enc}}}\right)^{d}
  \right\},
\]
where $q$ is the covering dimension of $\cB_{\mathrm{reach}}$ and $d$ is
the covering dimension of $Z$. If $d<q$ and $\eta_{\mathrm{enc}}$ is
small enough that
\[
  \frac{2L_{Q,z}D_Z}{\varepsilon-2\eta_{\mathrm{enc}}}
  \le
  \frac{2L_V(\sigma)D_B}{\varepsilon}
\]
holds with room to spare, the encoder-based bound eventually dominates
as $\varepsilon\to 0$, and the certified reliability entropy satisfies
\[
  \cH_{\mathrm{FRR}}(\sigma,\varepsilon)
  =
  O\bigl(d\log(1/(\varepsilon-2\eta_{\mathrm{enc}}))\bigr)
\]
rather than the raw-belief-space rate
$O(q\log(1/\varepsilon))$. Thus a learned encoder strictly reduces usable
policy capacity, in the rate-distortion sense of
Definition~\ref{def:entropy}, precisely when its latent dimension is lower
than the belief space's covering dimension and its approximation floor
$2\eta_{\mathrm{enc}}$ is small relative to the target tolerance
$\varepsilon$.
\end{corollary}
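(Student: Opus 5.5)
The plan is to combine the two upper bounds on \(N_{\mathrm{FRR}}(\sigma,\varepsilon)\) already established and then compare their growth rates as \(\varepsilon\to 0\). First, Proposition~\ref{prop:entropy_bound} gives \(N_{\mathrm{FRR}}(\sigma,\varepsilon)\lesssim (2L_V(\sigma)D_B/\varepsilon)^q\). Second, Theorem~\ref{thm:encoder_frr_construction}, which requires \(\varepsilon>2\eta_{\mathrm{enc}}\), gives \(N_{\mathrm{FRR}}(\sigma,\varepsilon)\lesssim (2L_{Q,z}D_Z/(\varepsilon-2\eta_{\mathrm{enc}}))^d\). Since \(N_{\mathrm{FRR}}\) is a single minimum cardinality bounded by both quantities, it is bounded by their minimum, with the larger of the two hidden constants absorbed into \(\lesssim\). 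This is the displayed inequality.

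Next I would take logarithms. Writing \(A:=2L_{Q,z}D_Z/(\varepsilon-2\eta_{\mathrm{enc}})\) and \(B:=2L_V(\sigma)D_B/\varepsilon\), the hypothesis is that \(A\le B\) with room to spare, say \(A\le B^{c}\) for some fixed \(c\) close to \(1\). Together with \(d<q\), this gives \(d\log A< q\log B\) once \(B\ge 1\), which holds for all sufficiently small \(\varepsilon\). Hence the encoder term is the minimum, and
\[
\cH_{\mathrm{FRR}}(\sigma,\varepsilon)\le d\log A+O(1)=O\bigl(d\log(1/(\varepsilon-2\eta_{\mathrm{enc}}))\bigr).
\]
The constants \(2L_{Q,z}D_Z\) and the hidden covering constant are absorbed into the \(O(\cdot)\). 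By contrast, the raw bound alone gives only \(O(q\log(1/\varepsilon))\).

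The main obstacle is making ``as \(\varepsilon\to 0\)'' coherent, because \(\varepsilon>2\eta_{\mathrm{enc}}\) must hold throughout. One can therefore only send \(\varepsilon\) down toward \(2\eta_{\mathrm{enc}}\), unless one also considers a family of encoders with \(\eta_{\mathrm{enc}}=\eta_{\mathrm{enc}}(\varepsilon)\to 0\) in such a way that \(\varepsilon-2\eta_{\mathrm{enc}}\ge c'\varepsilon\). I would adopt that reading. Under the proportional condition, \(\log A=\log(1/\varepsilon)+O(1)\), so the comparison reduces to \(d\log(1/\varepsilon)\) versus \(q\log(1/\varepsilon)\), and the strict inequality \(d<q\) makes the encoder bound dominate for small \(\varepsilon\). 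The phrase ``strictly reduces usable capacity'' will be stated at the level of these upper bounds on \(\cH_{\mathrm{FRR}}\), and I will remark that the upper bounds are what is being compared.
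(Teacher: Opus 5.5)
Your skeleton matches the paper's: bound $N_{\mathrm{FRR}}$ by the minimum of the two covering bounds, compare exponents, and take logarithms of the smaller one. Where you depart from the paper is in how the limit is read, and your reading is the one that makes the statement true.

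The paper's proof claims dominance ``as $\varepsilon\to0$ (equivalently, as $\varepsilon-2\eta_{\mathrm{enc}}\to0$ from above), provided $\eta_{\mathrm{enc}}$ remains fixed and strictly below $\varepsilon/2$.'' With $\eta_{\mathrm{enc}}>0$ fixed, these two limits are not equivalent, and the first is not admissible. In the admissible one, $\varepsilon\downarrow 2\eta_{\mathrm{enc}}$, the raw bound stays below $(L_V(\sigma)D_B/\eta_{\mathrm{enc}})^q$ while the encoder bound diverges. So at the level of bounds the comparison actually goes the other way. Your family reading, $\varepsilon-2\eta_{\mathrm{enc}}(\varepsilon)\ge c'\varepsilon$, makes both $\log A$ and $\log B$ equal to $\log(1/\varepsilon)+O(1)$. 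It also reduces to the paper's claim when $\eta_{\mathrm{enc}}=0$.

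Under your reading, $d<q$ alone forces eventual dominance. The hypothesis $A\le B$ ``with room to spare'' only matters at non-asymptotic $\varepsilon$. Even there your $A\le B^{c}$ device is unnecessary, because $A\le B$ and $B\ge 1$ already give $A^d\le B^d\le B^q$.

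What you must add is uniformity along the family. The quantity $L_{Q,z}D_Z$, the latent dimension $d$, and the hidden covering constant must stay bounded as $\eta_{\mathrm{enc}}(\varepsilon)\to 0$. Otherwise they cannot be absorbed into the $O(\cdot)$. For example, if $L_{Q,z}$ has to grow like $\varepsilon^{-k}$ to reach the shrinking approximation error, then $\log A\approx(1+k)\log(1/\varepsilon)$. Dominance then needs $d(1+k)<q$, not $d<q$.

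Finally, as you already note, both your argument and the paper's establish only sufficiency, and only at the level of upper bounds on $\cH_{\mathrm{FRR}}$. The ``precisely when'' in the statement is proved by neither.
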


\begin{proof}
Immediate from comparing the exponents and prefactors in the two bounds of
Proposition~\ref{prop:entropy_bound} and
Theorem~\ref{thm:encoder_frr_construction}: for fixed constants, a smaller
exponent $d<q$ dominates the polynomial rate in $1/\varepsilon$ as
$\varepsilon\to0$ (equivalently, as $\varepsilon-2\eta_{\mathrm{enc}}\to0$
from above), provided $\eta_{\mathrm{enc}}$ remains fixed and strictly
below $\varepsilon/2$ throughout. Taking logarithms of the smaller bound
gives the stated rate for $\cH_{\mathrm{FRR}}$.
\end{proof}

\begin{proposition}[Necessity of the encoder tolerance margin]
\label{prop:encoder_floor}
If $\eta_{\mathrm{enc}}\ge\varepsilon/2$, then the bound of
Lemma~\ref{lem:encoder_decision_diameter} cannot certify any cell at
decision tolerance $\varepsilon$, regardless of how finely
$E_\psi(\cB_{\mathrm{reach}})$ is covered: for every nonempty
$C\subseteq\cB_{\mathrm{reach}}$,
\[
  2\eta_{\mathrm{enc}}
  +
  L_{Q,z}\,\diam\bigl(E_\psi(C)\bigr)
  \ge
  2\eta_{\mathrm{enc}}
  \ge
  \varepsilon ,
\]
even in the degenerate limit $\diam(E_\psi(C))\to 0$.
\end{proposition}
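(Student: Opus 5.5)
The claim is elementary: it concerns only the size of the certification bound in Lemma~\ref{lem:encoder_decision_diameter}, not the true decision diameter $\Delta_Q(C)$. My plan is to show that the lemma's right-hand side is always at least $\varepsilon$ once the approximation floor alone reaches $\varepsilon$.

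Fix an arbitrary nonempty $C\subseteq\cB_{\mathrm{reach}}$. First, I will observe that $\diam(E_\psi(C))=\sup_{b,b'\in C}\norm{E_\psi(b)-E_\psi(b')}\ge 0$, since it is a supremum of norms over a nonempty index set. Second, I will use $L_{Q,z}\ge 0$, which holds because it is a Lipschitz constant. Together these give $L_{Q,z}\,\diam(E_\psi(C))\ge 0$, and adding $2\eta_{\mathrm{enc}}$ yields the first inequality in the display. Third, the hypothesis $\eta_{\mathrm{enc}}\ge\varepsilon/2$, multiplied by $2$, gives $2\eta_{\mathrm{enc}}\ge\varepsilon$, which is the second inequality. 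The lower bound $2\eta_{\mathrm{enc}}$ is independent of $C$. Hence it persists for every cell of every latent cover, however fine, and in particular as $\diam(E_\psi(C))\to 0$.

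To conclude, I will translate this into the certification statement. Lemma~\ref{lem:encoder_decision_diameter} certifies $\Delta_Q(C)\le\varepsilon$ only through the inequality $2\eta_{\mathrm{enc}}+L_{Q,z}\diam(E_\psi(C))\le\varepsilon$. The argument above shows this inequality can hold at best with equality, and only in the boundary case $\eta_{\mathrm{enc}}=\varepsilon/2$ with $L_{Q,z}\diam(E_\psi(C))=0$.

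That boundary case is the only subtle point, and it needs a short remark. There, the bound does give $\Delta_Q(C)\le\varepsilon$ formally, but it yields no covering slack: the radius $r_Z$ in Theorem~\ref{thm:encoder_frr_construction} becomes zero, so no cover of positive radius is produced. I would therefore read "cannot certify" in the sense of the strict margin $\varepsilon>2\eta_{\mathrm{enc}}$ required in Theorem~\ref{thm:encoder_frr_construction}.

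I will also remark that this is a limitation of the encoder-based bound only. It says nothing about whether $\Delta_Q(C)\le\varepsilon$ actually holds, which could still be verified by other routes such as Theorem~\ref{thm:frr_construction}.
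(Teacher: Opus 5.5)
Your proof is correct and is exactly the paper's argument: the covering term $L_{Q,z}\,\diam(E_\psi(C))$ is nonnegative and the hypothesis gives $2\eta_{\mathrm{enc}}\ge\varepsilon$, so the bound never drops below $\varepsilon$. Your boundary-case remark sharpens what the paper's one-line proof omits. At $\eta_{\mathrm{enc}}=\varepsilon/2$ with $L_{Q,z}\,\diam(E_\psi(C))=0$, the lemma formally gives $\Delta_Q(C)\le\varepsilon$, which meets the non-strict condition of Definition~\ref{def:frr}, so ``cannot certify'' holds literally only in the strict-margin sense you adopt.
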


\begin{proof}
Immediate, since $L_{Q,z}\,\diam(E_\psi(C))\ge 0$ and
$2\eta_{\mathrm{enc}}\ge\varepsilon$ by hypothesis.
\end{proof}

\begin{remark}[Scope of Proposition~\ref{prop:encoder_floor}]
Proposition~\ref{prop:encoder_floor} is a statement about the
encoder-critic certification route of
Lemma~\ref{lem:encoder_decision_diameter}, not about $\Delta_Q(C)$
itself. It shows that once the encoder-critic pair's approximation error
reaches $\varepsilon/2$, no amount of latent-space refinement can produce
a certificate at tolerance $\varepsilon$ through this construction --- the
approximation floor, not the covering term, is the binding constraint.
This does not preclude the existence of some other FRR cover at level
$(\sigma,\varepsilon)$ obtained by a different route (e.g., a direct
analytic bound as in Theorem~\ref{thm:frr_construction}, or a better
encoder-critic pair with smaller $\eta_{\mathrm{enc}}$); it only rules out
this particular learned representation as a certification vehicle at this
tolerance. In practice, this gives a concrete diagnostic: if a learned
encoder's measured approximation error does not fall below $\varepsilon/2$,
retraining or enlarging the critic --- not shrinking the latent cells ---
is the lever that can restore certifiability.
\end{remark}

Consequently, autoencoder-style belief representations reduce usable
policy capacity by replacing the covering dimension of the original
belief space with the latent dimension $d$, but only after paying the
irreducible encoder error $2\eta_{\mathrm{enc}}$, and only when
$\eta_{\mathrm{enc}}<\varepsilon/2$ strictly. A learned encoder is
therefore useful for FRR exactly when its approximation floor is below
the decision tolerance \emph{and} its latent covering dimension is lower
than that of the belief space --- both conditions are necessary, and
Corollary~\ref{cor:encoder_capacity_comparison} and
Proposition~\ref{prop:encoder_floor} make each one precise.

\subsection{Numerical certification of reliability cells}
\label{subsec:numerical_certification}

The definitions above are constructive: a candidate cover can be certified
by checking whether each cell has sufficiently small decision diameter. In
numerical studies, this can be done either through an analytic Lipschitz
bound, as in Theorem~\ref{thm:frr_construction}, or by direct estimation of
within-cell \(Q^*\)-variation. Given candidate cells \(C_i\), the direct
empirical test is
\[
  \widehat\Delta_Q(C_i)
  =
  \max_{b,b'\in \widehat C_i}
  \max_{u\in\cU}
  |Q^*(b,u)-Q^*(b',u)|,
\]
where \(\widehat C_i\subset C_i\) is a finite set of sampled beliefs. A cell
is empirically certified at tolerance \(\varepsilon\) if
\[
  \widehat\Delta_Q(C_i)\le \varepsilon
\]
with an appropriate sampling margin or confidence interval.

When \(Q^*\) is not available exactly, the same test should be applied with
an explicit approximation margin. For example, suppose a computed critic
\(\widehat Q\) satisfies, on the sampled reachable region,
\[
  \sup_{b,u}|Q^*(b,u)-\widehat Q(b,u)|\le \eta_Q .
\]
Define
\[
  \widehat\Delta_{\widehat Q}(C_i)
  =
  \max_{b,b'\in\widehat C_i}
  \max_{u\in\cU}
  |\widehat Q(b,u)-\widehat Q(b',u)| .
\]
Then the true decision diameter is bounded by
\[
  \Delta_Q(C_i)
  \le
  \widehat\Delta_{\widehat Q}(C_i)
  +2\eta_Q
  +\eta_{\mathrm{samp}},
\]
where \(\eta_{\mathrm{samp}}\) denotes the sampling or generalization margin
needed to pass from the finite test set \(\widehat C_i\) to the whole cell
\(C_i\). Thus empirical certification should report both the observed
within-cell critic variation and the margins used to account for critic and
sampling error.

This procedure applies across several belief representations. In a finite
POMDP, \(Q^*\) can be computed by value iteration or an \(\alpha\)-vector
solver on the belief simplex, and candidate cells can be tested directly by
evaluating \(Q^*\)-variation over sampled beliefs. In linear-Gaussian or
locally linearized models, candidate cells may be chosen as ellipsoids
induced by the sensor-dependent predictive observation pseudometric. In
particle-filter or learned belief representations, both the observation
distinguishability \(d_{\mathrm{obs},\sigma}\) and the belief-kernel
modulus \(\beta_\sigma\) can be estimated from paired rollouts on the
sampled reachable set.

The resulting certified cell count gives an empirical estimate of usable
policy capacity:
\[
  \widehat{\cH}_{\mathrm{FRR}}(\sigma,\varepsilon)
  =
  \log \widehat N_{\mathrm{FRR}}(\sigma,\varepsilon),
\]
where \(\widehat N_{\mathrm{FRR}}\) is the number of certified reliability
cells in the sampled reachable region. Nonlinear or sample-based
experiments should therefore not be presented as evidence of a global
Bayesian contraction. They should be presented as empirical certification
of reachable-set FRR cells: the tested beliefs lie in cells whose observed
\(Q^*\)-variation is below the target decision tolerance.

%============================================================================
\section{Numerical Examples}
\label{sec:numerical}
%============================================================================

This section illustrates how FRR certification behaves when the physical
noise floor affects both sensing and action execution.  The simulation code
used to generate the numerical results is available in the accompanying
GitHub repository~\cite{yoon2026frrgithub}.  The first example is an exact
finite-POMDP computation on a one-dimensional belief simplex.  It is used to
verify the policy-sufficiency mechanism and to separate the effect of sensor
degradation from the effect of action-execution uncertainty.  The next two
examples use particle-filter beliefs for a planar UGV and a two-link arm.
These continuous-state examples are not complete analytical FRR covers; they
are sampled certification diagnostics based on a computable surrogate critic.
The final example is a closed-form modulus illustration showing why monotone
entropy reduction with sensor noise requires additional structure.

%----------------------------------------------------------------------------
\subsection{Exact finite-POMDP certification with sensing and action uncertainty}
\label{subsec:finite_pomdp_numerics}
%----------------------------------------------------------------------------

We first consider a two-state, two-action, two-observation POMDP.  The belief
is represented by
\[
  p=\Pr(s=1), \qquad p\in[0,1].
\]
The reward favors matching the action to the hidden state:
\[
  r(p,a=1)=p, \qquad r(p,a=0)=1-p.
\]
The binary observation channel has correctness probability
\[
  p_{\rm corr}(\sigma_y)
  =
  1-\frac{1}{2}\frac{\sigma_y^2}{\sigma_y^2+\sigma_0^2},
\]
so increasing the sensor noise floor \(\sigma_y\) drives the observation model
toward an uninformative binary sensor.  Action-execution uncertainty is modeled
by an execution channel
\[
  G_{\alpha_u}(\tilde a\mid a)
  =
  \begin{cases}
  1-\alpha_u, & \tilde a=a,\\
  \alpha_u, & \tilde a\neq a,
  \end{cases}
  \qquad 0\le \alpha_u\le \frac12,
\]
where \(a\) is the commanded action and \(\tilde a\) is the executed action.
The Bellman equation is solved on a dense belief grid using the effective
reward and belief-transition kernel obtained by averaging over
\(G_{\alpha_u}\).  For each pair \((\sigma_y,\alpha_u)\), candidate belief
intervals are certified by the sampled decision-diameter test
\[
  \widehat\Delta_Q(C_i)
  =
  \max_{p,p'\in\widehat C_i}
  \max_{a\in\cA}
  \left|
  Q^*_{\sigma_y,\alpha_u}(p,a)
  -
  Q^*_{\sigma_y,\alpha_u}(p',a)
  \right|.
\]
An interval is certified if \(\widehat\Delta_Q(C_i)\le\varepsilon\).  The
representative-belief policy chooses the greedy action at the representative
belief of each certified interval and is evaluated against the optimal value
computed on the same grid.

\begin{figure*}[t]
  \centering
  \includegraphics[width=0.98\textwidth]{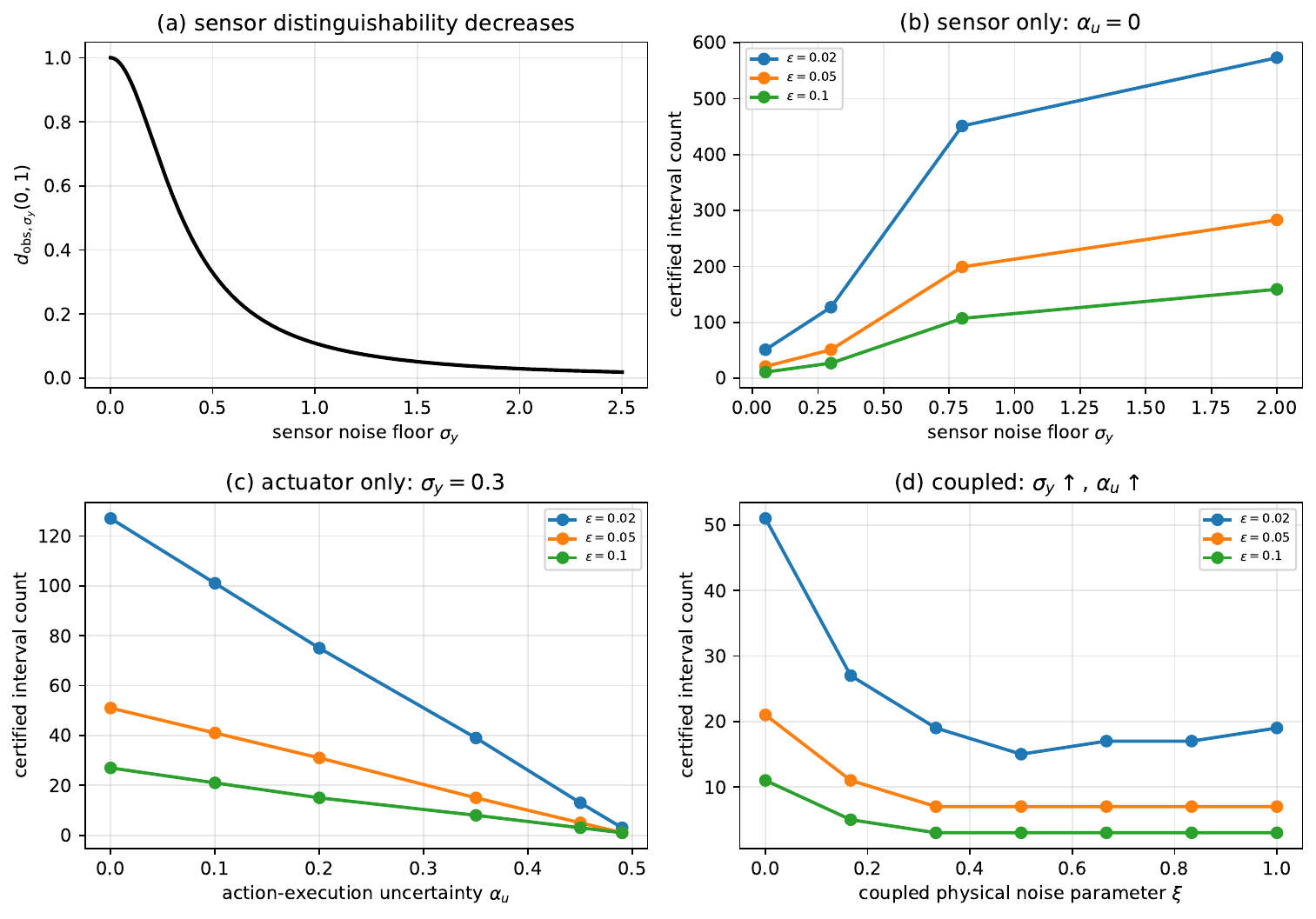}
  \caption{
  Exact finite-POMDP FRR certification with sensing and action-execution
  uncertainty.  Sensor noise alone reduces predictive observation
  distinguishability but does not guarantee a smaller certified cover.  With
  \(\alpha_u=0\), the number of certified intervals for \(\varepsilon=0.05\)
  increases from \(21\) at \(\sigma_y=0.05\) to \(283\) at
  \(\sigma_y=2.0\).  In contrast, at fixed \(\sigma_y=0.3\), increasing
  action-execution uncertainty collapses commanded-action distinctions: the
  maximum action gap decreases from \(1.0\) at \(\alpha_u=0\) to \(0.02\) at
  \(\alpha_u=0.49\), and the \(\varepsilon=0.05\) interval count decreases
  from \(51\) to \(1\).  Along the coupled path
  \((\sigma_y,\alpha_u)=(\sigma_y(\xi),\alpha_u(\xi))\), the interval count
  also drops sharply.  All representative-policy losses remain below the
  policy-sufficiency bound \(2\varepsilon/(1-\gamma)\).
  }
  \label{fig:finite_pomdp_sensing_actuation}
\end{figure*}

Figure~\ref{fig:finite_pomdp_sensing_actuation} shows the main mechanism.
The sensor-only sweep confirms the cautionary point: decreasing observation
distinguishability does not by itself imply a smaller FRR cover.  For example,
when \(\varepsilon=0.05\), the certified interval count increases from
\(21\) to \(283\) as \(\sigma_y\) increases from \(0.05\) to \(2.0\).  The
representative-policy loss remains small, increasing from \(0.0350\) to
\(0.0832\), and remains far below the theorem bound \(2.0\).

The action-execution sweep gives the complementary behavior.  At fixed
\(\sigma_y=0.3\), increasing \(\alpha_u\) makes the two commanded actions
nearly indistinguishable in execution.  The maximum action gap
\(\max_p|Q^*(p,1)-Q^*(p,0)|\) decreases from \(1.0\) at \(\alpha_u=0\) to
\(0.02\) at \(\alpha_u=0.49\).  Correspondingly, the number of certified
intervals for \(\varepsilon=0.05\) decreases from \(51\) to \(1\).  This is
the expected FRR behavior when the physical noise floor limits not only what
can be sensed, but also what can be reliably executed.

The actuator-only panel in Figure~\ref{fig:finite_pomdp_sensing_actuation}
therefore provides the cleanest numerical evidence that action-execution
uncertainty can reduce the number of decision-relevant belief intervals.

The coupled sweep uses a scalar physical degradation parameter \(\xi\), with
both \(\sigma_y\) and \(\alpha_u\) increasing with \(\xi\).  Along this path,
the maximum action gap decreases from \(1.0\) at \(\xi=0\) to approximately
\(0.02\) for large \(\xi\).  For \(\varepsilon=0.05\), the certified interval
count decreases from \(21\) at \(\xi=0\) to \(7\) at \(\xi=1\).  Thus the
finite-POMDP result supports the central interpretation: sensor noise limits
what can be inferred, while action-execution noise limits which commanded
actions can be reliably distinguished; FRR cell count is governed by their
combined effect on \(Q^*\), not by observation distinguishability alone.

%----------------------------------------------------------------------------
\subsection{Particle-filter certification diagnostics}
\label{subsec:particle_filter_numerics}
%----------------------------------------------------------------------------

We next consider two continuous-state examples.  These examples use sampled
particle-filter beliefs and a surrogate critic, so they should be interpreted
as empirical certification diagnostics rather than exact FRR covers.  A cell
that passes the sampled test is certified only for the sampled reachable set
and the critic used in the computation.

For the UGV, the state is \(x\in[0,10]^2\).  The commanded-control dynamics
are
\[
  x_{t+1}=F(x_t,\tilde u_t)+w_t,
  \qquad
  F(x,u)=x+\Delta t\,u,
\]
where \(u\) is the commanded velocity and \(\tilde u\) is the executed
velocity.  Action-execution uncertainty is modeled by a stochastic execution
channel over the discrete action library: as \(\alpha_u\) increases,
\(\tilde u\) is increasingly replaced by a randomly sampled library action.
A fully observed deterministic MDP is first solved on a grid with goal-seeking
reward, producing \(V^*_{\rm MDP}\).  The surrogate critic used for
certification is
\[
\begin{aligned}
  &\widehat Q_{\sigma_y,\alpha_u}(b,u) \\
  &\quad=
  \E_{x\sim b,\tilde u\sim G_{\alpha_u}(\cdot\mid u)}
  \left[
  r\bigl(F(x,\tilde u)\bigr)
  +
  \gamma V^*_{\rm MDP}\bigl(F(x,\tilde u)\bigr)
  \right].  
\end{aligned}
\]
Candidate cells are generated over sampled particle-filter belief means and
are tested by
\[
  \widehat\Delta_{\widehat Q}(C_i)
  =
  \max_{b,b'\in\widehat C_i}
  \max_{u\in\widehat\cU}
  \left|
  \widehat Q_{\sigma_y,\alpha_u}(b,u)
  -
  \widehat Q_{\sigma_y,\alpha_u}(b',u)
  \right|.
\]
A candidate cell is counted as certified if
\(\widehat\Delta_{\widehat Q}(C_i)\le\varepsilon\).

\begin{figure*}[t]
  \centering
  \includegraphics[width=0.9\textwidth]{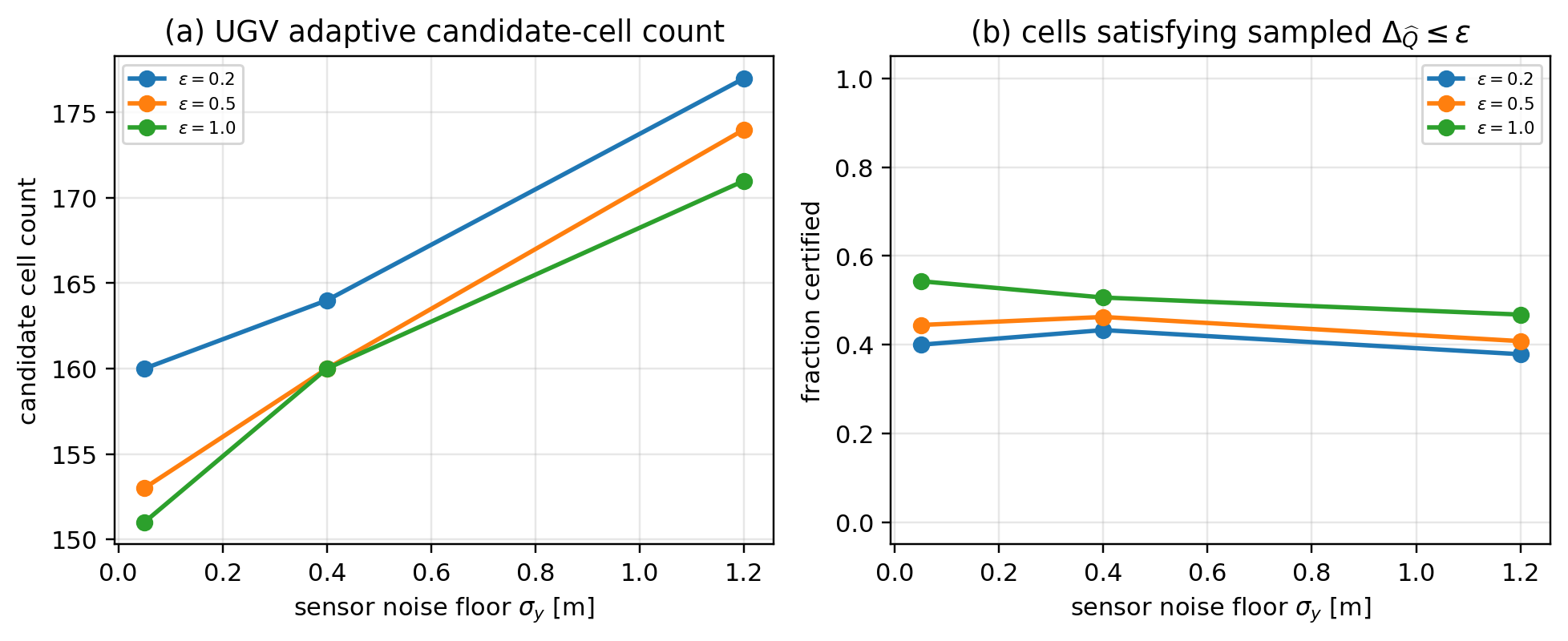}
  \caption{
  UGV sensor-only sampled certification diagnostics with \(\alpha_u=0\).
  The number of adaptive candidate cells increases with \(\sigma_y\), while
  the fraction certified is not monotone.  This mirrors the finite-POMDP
  caution: sensor degradation alone should not be interpreted as a guaranteed
  reduction in FRR cell count.
  }
  \label{fig:ugv_sensor_only}
\end{figure*}

The action-only UGV sweep is therefore summarized in text rather than shown as a separate figure: at fixed \(\sigma_y=0.4\), the mean commanded-action gap decreases from approximately \({2.779}\) at \(\alpha_u=0\) to zero at \(\alpha_u=1\), while the certified fraction changes only mildly.

\begin{figure*}[t]
  \centering
  \includegraphics[width=0.98\textwidth]{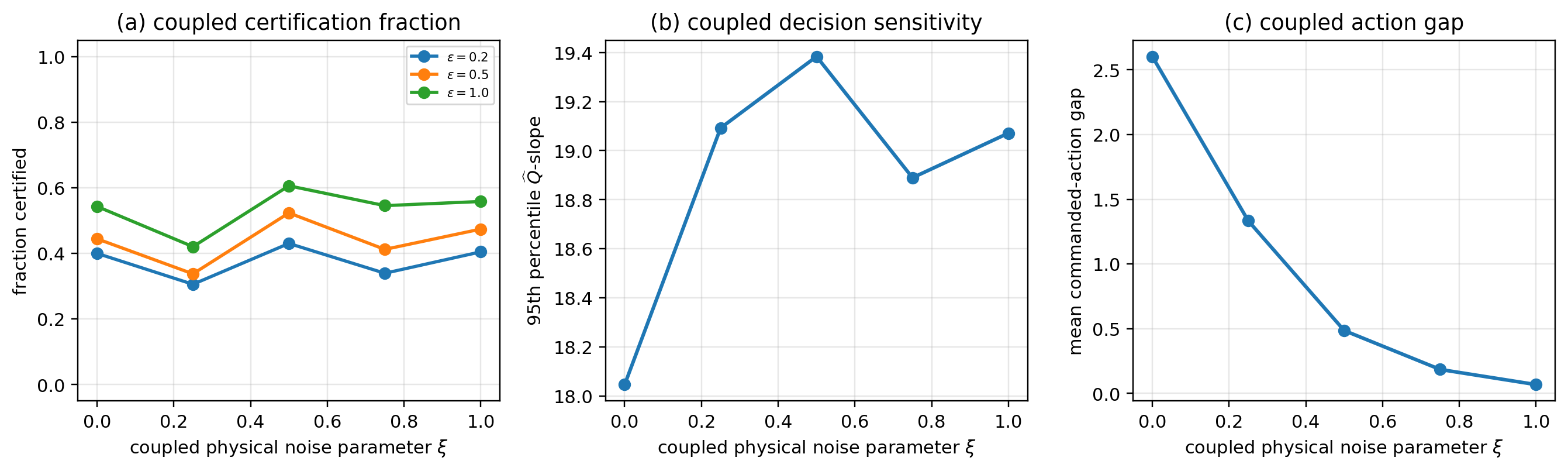}
  \caption{
  UGV coupled sensing/action diagnostic.  As the coupled physical noise
  parameter \(\xi\) increases, the mean commanded-action gap decreases from
  approximately \(2.603\) to \(0.068\).  The maximum unresolved within-cell
  variation also decreases substantially, from about \(16.438\) at \(\xi=0\)
  to about \(8.335\) at \(\xi=1\).  The certified fraction is not monotone,
  which reflects the fact that candidate-cell certification depends on both
  the sampled reachable set and the spatial variation of the surrogate critic.
  }
  \label{fig:ugv_coupled_noise}
\end{figure*}

Figures~\ref{fig:ugv_sensor_only} and~\ref{fig:ugv_coupled_noise} show that the
UGV experiment supports the qualitative distinction between sensing and action
uncertainty.  Sensor noise changes the particle-filter reachable set and does
not produce monotone certification behavior.  Action-execution uncertainty,
however, directly collapses commanded-action distinctions.  The UGV action gap
falls to nearly zero as \(\alpha_u\) approaches one, but the certified fraction
changes modestly because the sampled decision diameter is still dominated by
spatial value variation in some cells.  Therefore, the UGV experiment should be
reported as a sampled diagnostic, not as a proof of monotone FRR entropy.

\begin{figure}[t]
  \centering
  \includegraphics[width=0.82\linewidth]{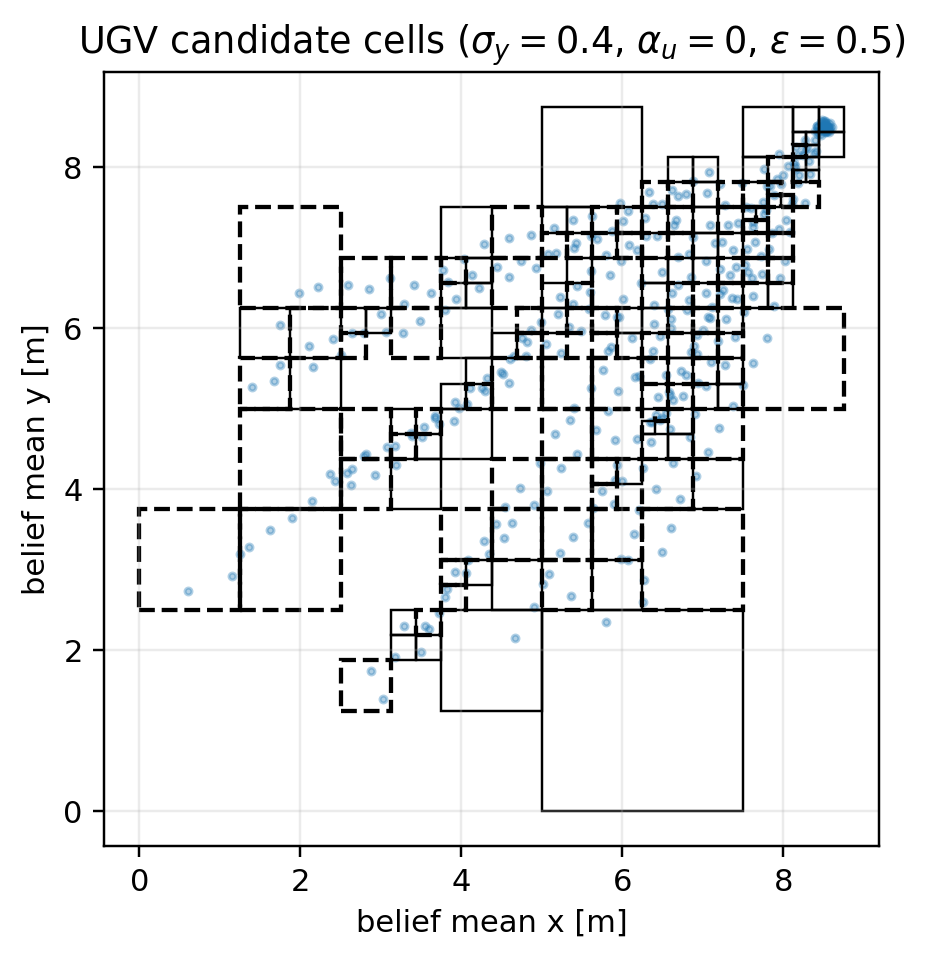}
  \caption{
  UGV candidate cells for \(\sigma_y=0.4\), \(\alpha_u=0\), and
  \(\varepsilon=0.5\).  Blue points are sampled belief means.  Solid boxes
  denote sampled certified cells; dashed boxes denote unresolved cells that
  require further refinement, denser sampling, or improved critic accuracy.
  }
  \label{fig:ugv_candidate_cells}
\end{figure}

The same procedure is applied to a two-link arm in joint space
\(q=(q_1,q_2)\).  The fully observed surrogate MDP is solved in joint space,
with reward determined by end-effector distance to a goal.  The particle
filter represents uncertain joint beliefs, and the surrogate critic averages
the one-step return and interpolated MDP value over both particles and the
action-execution channel.

\begin{figure*}[t]
  \centering
  \includegraphics[width=0.9\textwidth]{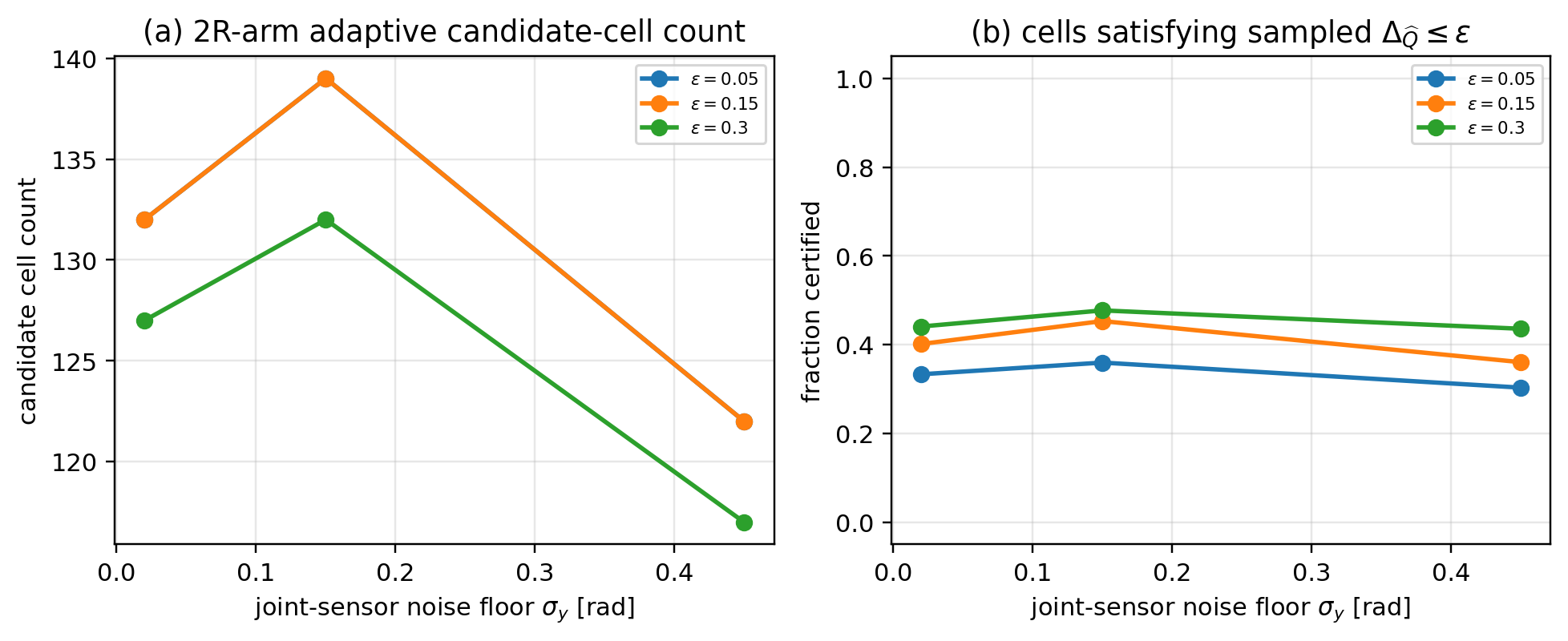}
  \caption{
  Two-link arm sensor-only sampled certification diagnostics with
  \(\alpha_u=0\).  The candidate-cell count and certified fraction vary
  nonmonotonically with joint-sensor noise.  This is consistent with the
  paper's caution that observation degradation alone does not determine FRR
  complexity.
  }
  \label{fig:arm_sensor_only}
\end{figure*}

\begin{figure*}[t]
  \centering
  \includegraphics[width=0.98\textwidth]{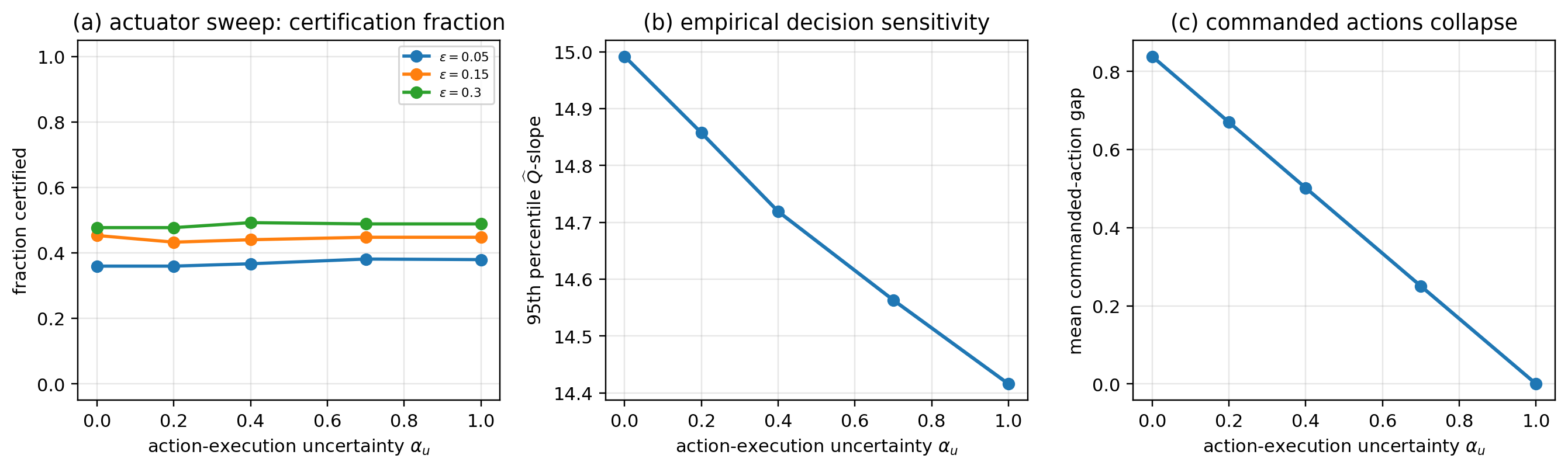}
  \caption{
  Two-link arm action-execution sweep at fixed \(\sigma_y=0.15\).  Increasing
  \(\alpha_u\) reduces the mean commanded-action gap from approximately
  \(0.839\) to zero and decreases the empirical 95th-percentile
  decision-sensitivity statistic from about \(14.992\) to \(14.416\).  The
  certified fraction increases modestly for the tested tolerances, especially
  at the larger decision tolerance.
  }
  \label{fig:arm_actuator_only}
\end{figure*}

\begin{figure*}[t]
  \centering
  \includegraphics[width=0.98\textwidth]{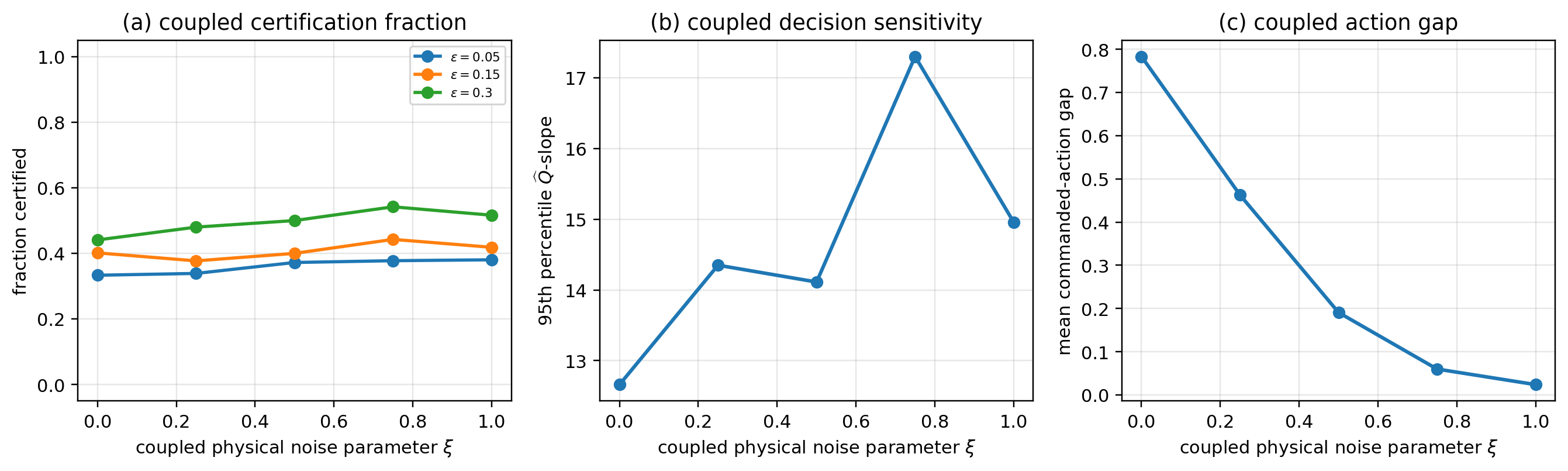}
  \caption{
  Two-link arm coupled sensing/action diagnostic.  The mean commanded-action
  gap decreases from approximately \(0.783\) at \(\xi=0\) to \(0.024\) at
  \(\xi=1\).  The maximum unresolved within-cell variation decreases from
  about \(8.141\) to about \(2.471\), while the certified fraction increases
  for the largest tested tolerance.  The 95th-percentile sensitivity remains
  nonmonotone, reflecting the nonlinear forward-kinematic reward and the
  changing sampled reachable set.
  }
  \label{fig:arm_coupled_noise}
\end{figure*}

\begin{figure}[t]
  \centering
  \includegraphics[width=0.82\linewidth]{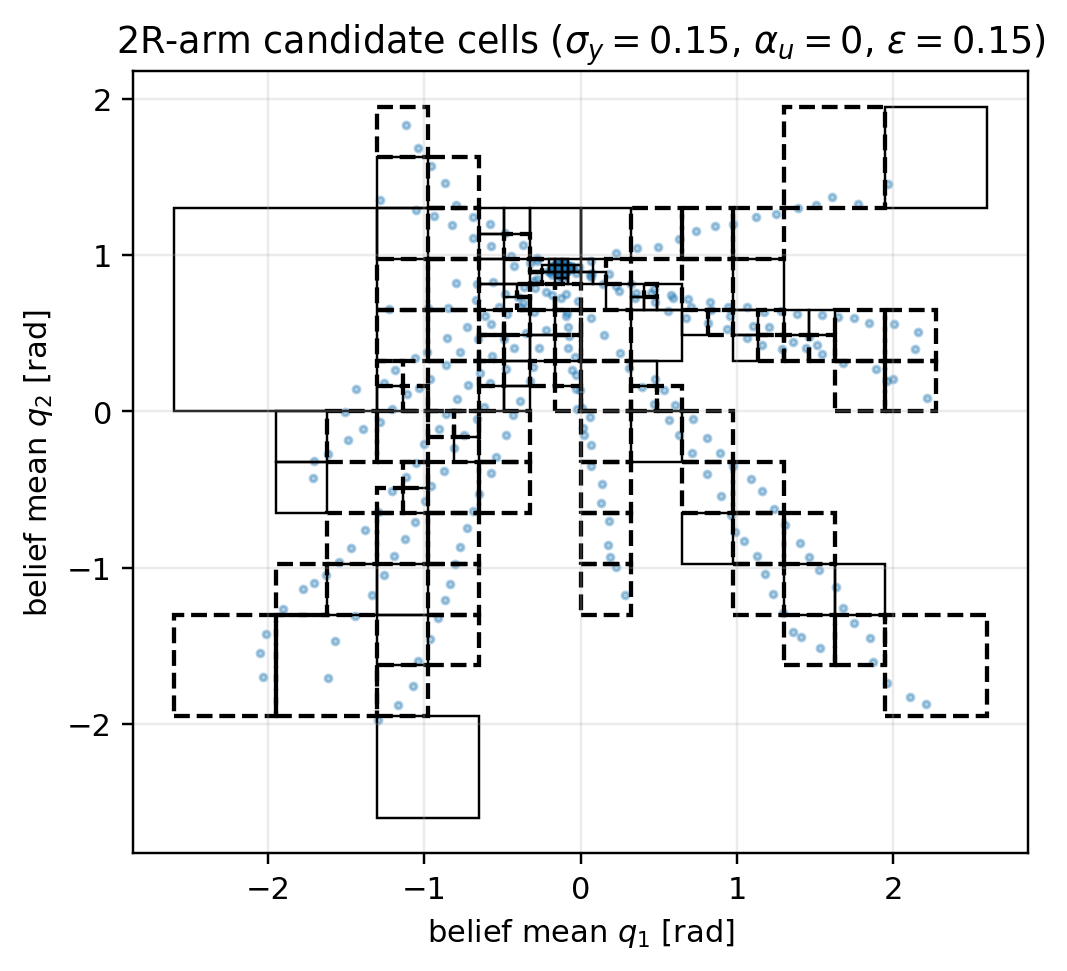}
  \caption{
  Two-link arm candidate cells for \(\sigma_y=0.15\), \(\alpha_u=0\), and
  \(\varepsilon=0.15\).  Blue points are sampled belief means.  Solid boxes
  denote sampled certified cells; dashed boxes denote unresolved cells.
  }
  \label{fig:arm_candidate_cells}
\end{figure}

The arm results show the same qualitative pattern as the UGV results, but
with stronger nonlinear effects.  Action-execution uncertainty reliably
collapses the commanded-action gap.  The sampled certification fraction and
empirical slope statistics are less smooth because the end-effector reward is
nonlinear in joint space and because the reachable particle-belief set changes
with the noise parameters.  These effects reinforce the reason for using the
word ``diagnostic'': the sampled experiments demonstrate how to apply the
FRR decision-diameter test, while the exact finite-POMDP example carries the
strongest certification interpretation.

%----------------------------------------------------------------------------
\subsection{Conditional dependence of entropy bounds on the noise floor}
\label{subsec:conditional_modulus_numerics}
%----------------------------------------------------------------------------

The final example illustrates why monotone decrease of
\(N_{\mathrm{FRR}}(\sigma,\varepsilon)\) with respect to sensor noise should
not be claimed without additional structure.  Consider the illustrative
reachable-set modulus
\[
  \beta_{\mathrm{model}}(\sigma)
  =
  L_f\frac{L_h\sigma_0^2+\sigma^2}{\sigma^2+\sigma_0^2},
\]
with induced value-sensitivity bound
\[
  L_V(\sigma)
  =
  \frac{L_r}{1-\gamma\beta_{\mathrm{model}}(\sigma)}.
\]
This model is not a theorem; it is a controlled illustration of how a
noise-dependent belief-kernel modulus affects the metric-entropy upper bound.

\begin{figure*}[t]
  \centering
  \includegraphics[width=0.9\textwidth]{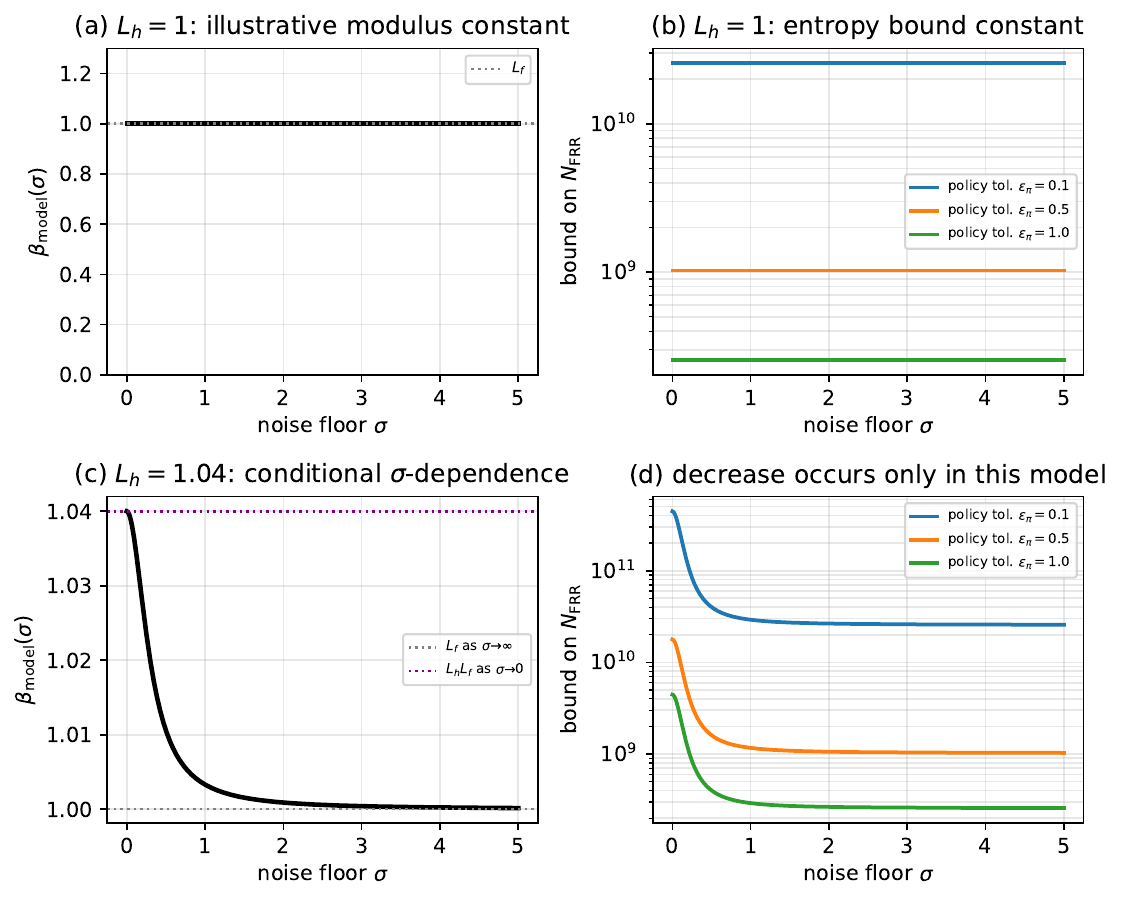}
  \caption{
  Conditional noise-dependence illustration.  When \(L_h=1\), the illustrative
  modulus is constant and the entropy bound is constant.  When \(L_h=1.04\),
  \(\beta_{\mathrm{model}}(\sigma)\) decreases from approximately
  \(1.04\) at small \(\sigma\) toward \(1.0\) at large \(\sigma\), and the
  corresponding FRR cell-count bound decreases.  This example is only a
  model-dependent illustration; it is not a universal monotonicity theorem.
  }
  \label{fig:conditional_noise_dependence}
\end{figure*}

For \(L_h=1\), the model gives
\(\beta_{\mathrm{model}}(\sigma)=1\) and \(L_V(\sigma)=20.0\) at all tested
noise levels.  For \(L_h=1.04\), the value-sensitivity bound decreases from
\(L_V(0.05)=76.763\) to \(L_V(3.0)=20.152\).  This example supports the same
message as the experiments above: observation distinguishability always
decreases under sensor degradation, but reliability entropy decreases only
when the induced decision sensitivity also decreases.

Together, the numerical examples support three conclusions.  First, the exact
finite-POMDP computation verifies the policy-sufficiency mechanism and shows
that sensor noise alone can increase the number of certified cells.  Second,
action-execution uncertainty can collapse commanded-action distinctions and
reduce the number of reliability cells needed, most clearly in the exact
finite-POMDP experiment.  Third, in continuous-state particle-filter examples,
actuation uncertainty visibly reduces commanded-action gaps, but sampled
certification fractions remain affected by reachable-set geometry, critic
approximation, and nonlinear value variation.  Thus, the practical FRR message
is not a universal monotonicity law, but a certification principle: reliability
cells should be calibrated by the combined effect of sensing and action
uncertainty on decision diameter.

%============================================================================
\section{Complementary Noise-Floor Limits}
\label{sec:limits}
%============================================================================

The preceding sections bound the loss caused by replacing the full reachable
belief space with finitely many reliability cells. This is an approximation
bound relative to the optimal policy for the same noisy POMDP model. The
present section addresses a different issue: even the optimal policy for
the noisy POMDP may be separated from an ideal noiseless oracle by a
nonzero performance gap. The following lower-bound examples therefore
clarify what FRR can and cannot do. FRR controls representation-induced
loss, but it cannot remove the irreducible loss imposed by sensing, process, or actuation noise.

\begin{assumption}[Decision-critical pair]
\label{ass:critical_pair}
There exist \(x_1,x_2\in\cX\) and two actions \(u_1\neq u_2\) such that
\[
  r(x_i,u_i)-r(x_i,u_j)\ge \Delta r>0,
  \qquad i\neq j,\quad i,j\in\{1,2\}.
\]
Let
\[
  \Delta h:=\norm{h(x_1)-h(x_2)} .
\]
\end{assumption}

Assumption~\ref{ass:critical_pair} identifies two states that require
different actions: action \(u_i\) is better at state \(x_i\) by at least
\(\Delta r\). If the observations generated by \(x_1\) and \(x_2\) are
nearly indistinguishable under the sensor noise floor, then no decision
rule can reliably choose the correct action in both states.

\begin{theorem}[Sensor information floor]
\label{thm:sensor_floor}
Under the Gaussian observation model and
Assumption~\ref{ass:critical_pair}, consider the single-observation
decision problem
\[
  y=h(x_i)+v,\qquad v\sim\cN(0,\sigma^2 I_p),
\]
where the controller must choose between \(u_1\) and \(u_2\). For any
decision rule \(\delta:\cY\to\{u_1,u_2\}\),
\[
  \max_{i\in\{1,2\}}
  \left[
  r(x_i,u_i)-
  \E r(x_i,\delta(y))
  \right]
  \ge
  \Delta r
  \left(
  \frac12-\frac{\Delta h}{2\sigma\sqrt{2\pi}}
  \right),
\]
whenever the right-hand side is positive.
\end{theorem}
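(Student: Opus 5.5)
The plan is a standard two-point (Le Cam) argument. First I convert per-state regret into error probabilities. Let $P_i=\cN(h(x_i),\sigma^2 I_p)$ denote the law of $y$ under $x_i$. Fix a decision rule $\delta$ and write $e_1=P_1(\delta(y)=u_2)$ and $e_2=P_2(\delta(y)=u_1)$. Since $\delta$ only takes the values $u_1,u_2$,
\[
  r(x_i,u_i)-\E r(x_i,\delta(y))
  = e_i\bigl(r(x_i,u_i)-r(x_i,u_j)\bigr)
  \ge e_i\,\Delta r
\]
by Assumption~\ref{ass:critical_pair}. It therefore suffices to lower bound $\max_i e_i\ge\frac12(e_1+e_2)$.

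Next I bound the sum of errors by total variation. With $A=\{y:\delta(y)=u_1\}$ we have $e_1+e_2=1-P_1(A)+P_2(A)\ge 1-\TV(P_1,P_2)$, using the variational form of $\TV$ already employed in Proposition~\ref{prop:data_processing}. Hence
\[
  \max_i\bigl[r(x_i,u_i)-\E r(x_i,\delta(y))\bigr]
  \ge \frac{\Delta r}{2}\bigl(1-\TV(P_1,P_2)\bigr).
\]

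The remaining step, and the only real computation, is bounding $\TV$ between two Gaussians with equal covariance. Projecting onto the direction $h(x_1)-h(x_2)$ reduces the problem to one dimension, since the orthogonal components have identical laws and contribute nothing. This gives $\TV(P_1,P_2)=2\Phi\bigl(\Delta h/(2\sigma)\bigr)-1$, where $\Phi$ is the standard normal CDF. The density of $\Phi$ is at most $1/\sqrt{2\pi}$, so $2\Phi(t)-1\le 2t/\sqrt{2\pi}$, and therefore $\TV\le \Delta h/(\sigma\sqrt{2\pi})$. Substituting yields
\[
  \Delta r\left(\frac12-\frac{\Delta h}{2\sigma\sqrt{2\pi}}\right).
\]
The positivity caveat in the statement simply means the bound is informative only when $\TV<1$ is strictly enforced by this estimate.

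I expect the main obstacle to be stating the projection reduction cleanly for $p>1$. I would handle it either by a rotation argument or by noting that the likelihood ratio depends on $y$ only through $\langle y,h(x_1)-h(x_2)\rangle$, so the optimal test $A$ is a half-space.
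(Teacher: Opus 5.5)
Your proof is correct and follows essentially the same route as the paper: a Le Cam two-point bound $\max_i e_i \ge \frac12\bigl(1-\TV(P_1,P_2)\bigr)$, the equal-covariance Gaussian estimate $\TV(P_1,P_2)\le \Delta h/(\sigma\sqrt{2\pi})$, and scaling by $\Delta r$. The only difference is that you derive the steps the paper cites: you get the two-point bound from $\max_i e_i\ge\frac12(e_1+e_2)$, you write the regret exactly as $e_i\bigl(r(x_i,u_i)-r(x_i,u_j)\bigr)$, and you compute the total variation exactly as $2\Phi\bigl(\Delta h/(2\sigma)\bigr)-1$ before bounding it.
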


\begin{proof}
Let
\[
  P_i=\cN(h(x_i),\sigma^2 I_p),
  \qquad i=1,2,
\]
be the observation law under state \(x_i\). Any decision rule
\(\delta:\cY\to\{u_1,u_2\}\) induces a binary test between \(P_1\) and
\(P_2\). The worst-case probability of choosing the wrong action is bounded
below by Le Cam's two-point inequality:
\[
  \max_{i\in\{1,2\}}
  \Pr_{y\sim P_i}\{\delta(y)\neq u_i\}
  \ge
  \frac12\bigl(1-\TV(P_1,P_2)\bigr).
\]
For Gaussian distributions with common covariance \(\sigma^2 I_p\),
\[
  \TV(P_1,P_2)
  \le
  \frac{\norm{h(x_1)-h(x_2)}}{\sigma\sqrt{2\pi}}
  =
  \frac{\Delta h}{\sigma\sqrt{2\pi}} .
\]
Thus
\[
  \max_{i\in\{1,2\}}
  \Pr_{y\sim P_i}\{\delta(y)\neq u_i\}
  \ge
  \frac12-
  \frac{\Delta h}{2\sigma\sqrt{2\pi}} .
\]
By Assumption~\ref{ass:critical_pair}, choosing the wrong action at state
\(x_i\) incurs reward loss at least \(\Delta r\). Therefore
\[
  \max_{i\in\{1,2\}}
  \left[
  r(x_i,u_i)-\E r(x_i,\delta(y))
  \right]
  \ge
  \Delta r
  \left(
  \frac12-\frac{\Delta h}{2\sigma\sqrt{2\pi}}
  \right),
\]
whenever the right-hand side is positive.
\end{proof}

Theorem~\ref{thm:sensor_floor} shows that sensor noise creates a
performance floor even for a decision rule with unlimited representation
capacity. This lower bound is distinct from the FRR approximation bound.
FRR controls the loss from using a finite reliability-cell representation
relative to \(V^*\); the information floor controls the loss of the noisy
decision problem itself relative to an idealized noiseless decision maker.

\begin{remark}[A process or actuation noise floor]
A parallel lower bound arises from process or actuator noise. Consider the
standalone benchmark system
\[
  x_{t+1}=x_t+u_t+w_t,
  \qquad
  w_t\sim\cN(0,\sigma_w^2 I_n),
\]
with stage cost
\[
  c(x_{t+1})=\norm{x_{t+1}-x_{\mathrm{goal}}}.
\]
Suppose that, in the absence of process noise, the goal is reachable in one
step. Since the control input must be chosen before \(w_t\) is realized,
every causal policy satisfies
\[
  \E c(x_{t+1})
  =
  \E\norm{x_{t+1}-x_{\mathrm{goal}}}
  \ge
  \E\norm{w_t}
  =
  c_n\sigma_w,
\]
where
\[
  c_n
  =
  \sqrt{2}
  \frac{\Gamma((n+1)/2)}{\Gamma(n/2)}.
\]
The same bound applies if \(w_t\) is interpreted as an execution error,
for example \(x_{t+1}=x_t+u_t+\eta_t\), because the commanded input is
chosen before \(\eta_t\) is realized.
Thus the discounted disturbance cost floor is at least
\[
  \sum_{t=0}^{\infty}\gamma^t \E c(x_{t+1})
  \ge
  \frac{c_n\sigma_w}{1-\gamma}.
\]
This benchmark is separate from the compact-state assumptions used in the
FRR construction; its purpose is only to illustrate that process and
actuation noise impose an irreducible performance floor even with perfect
state representation.
\end{remark}

Together, these examples show why the FRR guarantee should not be
interpreted as eliminating the consequences of physical noise. FRR bounds
the additional loss caused by using a finite reliability-cell
representation. The sensing, process, and actuation-noise bounds above show that even the
best policy for the noisy system may remain separated from an ideal
noiseless or disturbance-free oracle by a nonzero performance gap.

%============================================================================
\section{Discussion}
\label{sec:discussion}
%============================================================================

\subsection{Relation to state aggregation and bisimulation}

Classical state aggregation and bisimulation metrics group physical states
using behavioral similarity criteria. FRR differs in both the space on
which it operates and the criterion used to form cells. First, FRR is a
belief-space construction. This is essential for partially observed systems,
where the controller does not observe the state directly and optimal
policies are functions of information states. Second, FRR is tied to the
physical sensing and actuation model. The predictive observation law
determines which belief differences can be distinguished through the sensor,
while action-execution uncertainty determines which commanded actions can be
reliably separated in their effects. The decision diameter then determines
which of these physically distinguishable or executable distinctions matter
for control.

This distinction also explains why FRR is formulated as a cover rather than
as a quotient space. Approximate decision closeness need not define a
transitive equivalence relation, and overlapping cells may be useful in
practice. What matters for policy sufficiency is not that beliefs belong to
disjoint equivalence classes, but that every cell has uniformly bounded
variation in \(Q^*(b,u)\). Thus FRR should be interpreted as a certified
decision-relevant cover of the reachable belief set, not as an exact
behavioral quotient.

\subsection{Why FRR is not merely value approximation}

Approximating \(V^*\) alone does not guarantee action stability. Two beliefs
may have nearly identical optimal values while requiring different optimal
actions. For example, two obstacle configurations may be equally difficult
in value but require steering in opposite directions. A representation that
preserves only value may therefore be insufficient for reliable control.

FRR instead controls the decision diameter, defined as the largest
within-cell variation of \(Q^*(b,u)\) over all beliefs in the cell and all
admissible actions. This criterion bounds the variation of every action
value inside a cell, not only the optimal value. As a result, an action
selected at a representative belief remains near-optimal throughout the
cell. This is the key reason Theorem~\ref{thm:policy_sufficiency} is
independent of how the cover is obtained: once the decision-diameter
condition holds, the cell-constant policy guarantee follows directly.

This point is especially important when actuation uncertainty is present.
If commanded actions become difficult to distinguish in execution, then
some action-value differences may collapse, allowing coarser reliability
cells. Conversely, if degraded sensing makes belief uncertainty more
decision-critical, finer cells may be required. FRR captures both effects
through \(Q^*\)-variation rather than through sensor distinguishability or
value approximation alone.

\subsection{Relation to learned encoders and policy capacity}

FRR also provides a way to interpret learned belief representations. An
encoder \(E_\psi:\cB_{\mathrm{reach}}\to Z\) can reduce the effective
covering dimension by mapping beliefs into a lower-dimensional latent
space. However, this reduction is useful for decision-making only if the
latent representation preserves action-value variation. As discussed in
Section~\ref{sec:entropy}, a latent cover introduces two distinct sources
of decision-diameter error: an encoder-critic approximation floor and a
latent covering term.

The latent covering term can be reduced by refining the latent cells. The
encoder-critic approximation floor cannot be removed by making the cover
finer unless the learned representation itself improves. Thus,
autoencoder-style compression is compatible with FRR, but only when the
encoder error remains below the decision tolerance. In this sense, FRR gives
a certification criterion for learned belief compression: latent dimensions
are useful only to the extent that they preserve the action-value
distinctions that remain reliable under the physical sensing and actuation
channels.

This perspective also clarifies the meaning of reliability entropy. The
quantity \(\cH_{\mathrm{FRR}}(\sigma,\varepsilon)\) is best interpreted as
a rate-distortion-like measure of certified cell-identity complexity. It
counts how many decision-relevant belief messages must be preserved at
tolerance \(\varepsilon\). It should not be confused with a universal
VC-dimension bound for arbitrary policy classes. For a fixed \(N\)-cell
cover, \(\log N\) measures the code length of the cell identity, while
\(N\) bounds the number of certified regions on which a cell-constant
policy can assign independent decisions.

\subsection{What remains open}

The present paper gives a nonlinear FRR theory conditional on a
reachable-set belief-kernel Lipschitz modulus. Several questions remain
open. First, analytic bounds on \(\beta_\sigma\) should be derived for
specific nonlinear filtering classes under observability, process-noise,
actuation-noise, and mixing assumptions. Second, the relationship between
predictive observation distinguishability, action-execution uncertainty,
belief-kernel smoothness, and decision diameter should be sharpened in
regimes where physical resolution is the dominant source of uncertainty.
Third, the sensing, process, and actuation noise floors in
Section~\ref{sec:limits} should be combined into additive or order-optimal
lower bounds when the corresponding obstructions cannot be jointly
mitigated. Fourth, the encoder-based extension should be developed into a
full theory of learned FRR representations, including online belief-to-cell
assignment, statistical certification of encoder error, and finite-sample
confidence bounds for empirical decision-diameter tests.

%============================================================================
\section{Conclusion}
\label{sec:conclusion}
%============================================================================

We introduced Finite Reliability Representations, a framework for choosing
belief-space resolution according to physical sensing and actuation noise
floors together with decision-relevant action-value variation. The central
object is a reliability-cell cover of the reachable belief space, where
each cell has bounded decision diameter. This criterion directly controls
action-value variation and leads to a policy sufficiency theorem: any
cell-constant policy that acts greedily at representative beliefs is within
\(2\varepsilon/(1-\gamma)\) of the optimal policy for the same noisy POMDP
model.

A key methodological point is that noisy Bayesian updates should not be
treated as globally contractive on arbitrary beliefs. The fixed-observation
Bayesian map can fail to contract belief distance. The appropriate
dynamic-programming object is instead the controlled belief-transition
kernel on the reachable belief set. Under a reachable belief-kernel
Lipschitz condition and the discounted sensitivity condition
\(\gamma\beta_\sigma<1\), the optimal value and action-value functions are
Lipschitz, yielding constructive FRR covers with certified decision
diameter.

The framework separates effects that are often conflated. Predictive
observation distinguishability describes what the sensor can resolve.
Action-execution uncertainty describes which command distinctions can be
reliably realized. Belief-kernel smoothness describes how belief uncertainty
propagates through the planning model. Decision diameter describes which
distinctions matter for control. This separation avoids universal
monotonicity claims about sensor noise: increasing sensor noise always
reduces predictive observation distinguishability, but its effect on
reliability entropy depends on the induced belief dynamics, executable
action resolution, and value sensitivity. In particular, a smaller FRR cover
is expected only when the physical noise floor collapses decision-relevant
distinctions in \(Q^*\), not merely because observations become less
informative.

The resulting theory applies directly to finite POMDPs and provides a
basis for linear-Gaussian, locally linearized, particle-filter, and learned
belief representations. Reliability entropy quantifies the minimum number
of certified decision-relevant belief messages required at tolerance
\(\varepsilon\), while complementary noise-floor limits show that sensing,
process, and actuation noise impose irreducible performance gaps even for
policies with unlimited representation capacity. Together, these results
provide a foundation for noise-calibrated belief-space abstraction and for
future empirical certification of reliable finite representations in
autonomous systems.

\bibliographystyle{IEEEtran}
\bibliography{references}

\end{document}